\theoremstyle{plain}
\newtheorem{thm}{Theorem}
\newtheorem{lem}[thm]{Lemma}
\newtheorem{pro}[thm]{Proposition}
\theoremstyle{definition}
\newtheorem{defn}[thm]{Definition}
\newcommand{\eq}[1]{(\hyperref[eq:#1]{\ref*{eq:#1}})}
\renewcommand{\sec}[1]{\hyperref[sec:#1]{Section~\ref*{sec:#1}}}
\newcommand{\thrm}[1]{\hyperref[thrm:#1]{Theorem~\ref*{thrm:#1}}}
\newcommand{\lemm}[1]{\hyperref[lemm:#1]{Lemma~\ref*{lemm:#1}}}
\newcommand{\prop}[1]{\hyperref[prop:#1]{Proposition~\ref*{prop:#1}}}
\newcommand{\corr}[1]{\hyperref[corr:#1]{Corollary~\ref*{corr:#1}}}
\newcommand{\fig}[1]{\hyperref[fig:#1]{~\ref*{fig:#1}}}
\newcommand{\deff}[1]{\hyperref[deff:#1]{~\ref*{deff:#1}}}
\newcommand{\mC}{\mathcal{C}}
\newcommand{\mE}{\mathcal{E}}
\newcommand{\mN}{\mathcal{N}}
\newcommand{\mT}{\mathcal{T}}
\newcommand{\mD}{\mathcal{D}}
\newcommand{\mF}{\mathcal{F}}
\newcommand{\mL}{\mathcal{L}}
\newcommand{\mH}{\mathcal{H}}
\newcommand{\mM}{\mathcal{M}}
\newcommand{\mO}{\mathcal{O}}
\newcommand{\mP}{\mathcal{P}}
\newcommand{\mQ}{\mathcal{Q}}
\newcommand{\mS}{\mathcal{S}}
\newcommand{\mbB}{\mathbb{B}}
\newcommand{\mbC}{\mathbb{C}}
\newcommand{\mbF}{\mathbb{F}}
\newcommand{\mbI}{\mathbb{I}}
\newcommand{\mbM}{\mathbb{M}}
\newcommand{\mbN}{\mathbb{N}}
\newcommand{\mbO}{\mathbb{O}}
\newcommand{\mbR}{\mathbb{R}}
\newcommand{\mbS}{\mathbb{S}}
\newcommand{\mbT}{\mathbb{T}}
\newcommand{\mbZ}{\mathbb{Z}}
\newcommand{\ve}{\varepsilon}
\DeclareMathOperator{\id}{id}
\DeclareMathOperator{\aff}{aff}
\DeclareMathOperator{\Span}{Span}
\DeclareMathOperator{\supp}{supp}
\DeclareMathOperator{\diag}{diag}
\DeclareMathOperator{\spec}{spec}
\DeclareMathOperator{\argmin}{argmin}
\newcommand{\ba}{\begin{eqnarray}}
\newcommand{\ea}{\end{eqnarray}}
\newcommand{\bann}{\begin{eqnarray*}}
\newcommand{\eann}{\end{eqnarray*}}
\newcommand{\bal}{\begin{equation}\begin{aligned}}
\newcommand{\eal}{\end{aligned}\end{equation}}
\newcommand{\dm}[1]{\ketbra{#1}{#1}}
\newcommand{\bs}[1]{\boldsymbol{#1}}
\newcommand{\RomanNumeralCaps}[1]{\MakeUppercase{\romannumeral #1}}
\newcolumntype{L}[1]{>{\raggedright}p{#1}}
\newcolumntype{C}[1]{>{\centering}p{#1}}
\newcolumntype{R}[1]{>{\raggedleft}p{#1}}
\newcolumntype{D}{>{\centering\arraybackslash}X}
\newcommand{\sbar}{\;\rule{0pt}{9.5pt}\right|\;}
\newcommand{\lset}{\left\{\left.}
\newcommand{\rset}{\right\}}
\newcommand{\rt}[1]{{\color{red} #1}}
\newcommand{\bt}[1]{{\color{blue} #1}}
\newcommand{\redunderline}[1]{\textcolor{red}{\underline{\textcolor{black}{#1}}}} 
\newcommand{\blueunderline}[1]{\textcolor{blue}{\underline{\textcolor{black}{#1}}}}
\begin{document}

\title{Black box work extraction and composite hypothesis testing}

\author{Kaito Watanabe}
\email{watanabe715@g.ecc.u-tokyo.ac.jp}
\affiliation{Department of Basic Science, The University of Tokyo, 3-8-1 Komaba, Meguro-ku, Tokyo 153-8902, Japan}

\author{Ryuji Takagi}
\email{ryujitakagi.pat@gmail.com}
\affiliation{Department of Basic Science, The University of Tokyo, 3-8-1 Komaba, Meguro-ku, Tokyo 153-8902, Japan}

\begin{abstract}
Work extraction is one of the most central processes in quantum thermodynamics. However, the prior analysis of optimal extractable work has been restricted to a limited operational scenario where complete information about the initial state is given. Here, we introduce a general framework of black box work extraction, which addresses the inaccessibility of information on the initial state. We show that the optimal extractable work in the black box setting is completely characterized by the performance of a composite hypothesis testing task, a fundamental problem in information theory. We employ this general relation to reduce the asymptotic black box work extraction to the quantum Stein's lemma in composite hypothesis testing, allowing us to provide their exact characterization in terms of the Helmholtz free energy. We also show a new quantum Stein's lemma motivated in this physical setting, where a composite hypothesis contains a certain correlation. Our work exhibits the importance of information about the initial state and gives a new interpretation of the quantities in the composite quantum hypothesis testing, encouraging the interplay between the physical settings and the information theory.

\end{abstract}
\maketitle
\paragraph{Introduction.---}

One of the major goals in thermodynamics is to characterize the ultimate efficiency of work extraction. In particular, provided the recent technological developments in accurately controlling nanoscale systems, it is of fundamental importance to obtain a precise understanding of the extractable work from small systems where quantum properties are not negligible. 
Recently, there has been much progress in characterizing extractable work in quantum systems employing quantum information-theoretic approaches~\cite{horodecki2013fundamental, Faist_2018_Fundamental_work_cost, Brandao_2015_second_law,Brandao_2013_RT_of_quantum_states_out_of,Lostaglio2019introductory,Gour_2022_Role_of_quantum_coherence}. 
These results not only provide an explicit form of the optimal single-copy (one-shot) extractable work, but also offer a smooth connection to the many-copy (asymptotic, thermodynamic) limit, where the Helmholtz free energy arises as an emergent quantity~\cite{Brandao_2013_RT_of_quantum_states_out_of, Gour_2022_Role_of_quantum_coherence}.  

Although these results entail fundamental insights into the problem of work extraction, they do not represent natural operational settings. 
Crucially, the optimal work characterized so far assumes that the description of the initial state is provided, allowing the experimenters to tailor the work extraction protocol depending on the given state.  
However, in many settings---such as the scenarios where the state is obtained by a complicated quantum process that cannot be efficiently simulated classically, or the state experiences unknown noise---we are not in possession of the complete information about the initial state. To run the ``state-aware'' protocol in these settings, one would first attempt to learn the state description by quantum state tomography~\cite{Vogel_1989_Determination,Banaszek_2013_focusing}. 
At this point, the characterization of the prior results becomes unclear because (1) state tomography requires multiple (indeed, many) copies of the initial state and thus could significantly change the effective performance of work extraction and (2) the full state tomography may not be possible due to the physical limitation inherent in thermodynamic processes. 
To encompass this large class of ``state-agnostic'' scenarios, new techniques are required.

A major observation from a series of works is that work extraction is closely related to the standard information-theoretic task known as \emph{hypothesis testing}, where one aims to distinguish two quantum states. 
These entirely different-looking operational tasks turn out to be quantitatively connected via their performances.
The maximum amount of work extractable from a single copy of the given known state is precisely characterized by \emph{hypothesis-testing divergence}~\cite{hiai_1991_proper,Wang_One-Shot_Classical-Quantum_Capacity}---the standard quantifier for the asymmetric state discrimination---between the initial (known) state and the thermal Gibbs state~\cite{horodecki2013fundamental, Buscemi_2019_An_information-theoretic,Wang_RT_of_assymmetric_distinguishability, Gour_2022_Role_of_quantum_coherence}. 

Interestingly, hypothesis testing has been extended to a more general setting---instead of distinguishing two states, one aims to distinguish two \emph{sets} of states by measuring a state picked from either of the sets. 
This task is known as \emph{composite hypothesis testing} and has been an active investigation in classical~\cite{Brandao_Adversarial_Hypothesis,Levitan_2002_composite_Neyman_Pearson, Polyanskiy_2013_Saddle_point,fangwei_1996_hypothesis_testing_with_AVS,modak_2023_hypothesis_testing_for_adversarial_channels, Tomamichel_2018_Operational_interpretation} and quantum~\cite{Bjelakovic2005quantum,Brandao2010_A_generalization_of,Notzel2014hypothesis,Berta_2021_On_Composite_HT,Bergh2023composite, hayashi_2016_correlation, gao_2024_generalized} information theory.
In particular, there has been a rising interest in quantum Stein's lemma~\cite{hiai_1991_proper, Ogawa_2000_strong}, which connects composite hypothesis testing divergence to the optimized relative entropy in the asymptotic composite hypothesis testing setting. 
Nevertheless, unlike the case of standard hypothesis testing, the operational significance of composite hypothesis testing in the context of quantum thermodynamics has been unclear.

Here, we establish the fundamental relation between these two---state-agnostic work extraction and composite hypothesis testing. 
We introduce a general framework for state-agnostic work extraction by considering a ``black box'', from which a state is picked and an experimenter---who knows what states are contained in the box but does not know which state was actually picked---applies a work extraction protocol. 
We show that the optimal guaranteed extractable work from a black box can be exactly characterized by the performance of composite hypothesis testing between the black box and thermal Gibbs state. 
This not only extends the result of state-aware work extraction to much more general and operational settings, but also provides the first operational interpretation of composite hypothesis testing in terms of quantum thermodynamics. 

We further employ this relation to obtain the asymptotic work extraction rate in the black box setting. 
Notably, we prove a new kind of Stein's lemma for composite hypothesis testing, where state copies from the composite hypothesis have a correlation generated by a pinching channel~\cite{hiai_1991_proper, Masahito_Hayashi_2002_optimal_sequence, Tomamichel_2016_QI_pricessing_with_finite_resources}. 
This---together with the general connection between black box work extraction and composite hypothesis testing---shows that the asymptotic work extraction rate from a black box with several standard classes of thermodynamic processes~\cite{Faist_2018_Fundamental_work_cost} can be smaller than the minimum Helmholtz free energy of the state in the black box. This implies the fundamental difficulty in the state-agnostic setting compared to the standard setting.
Additionally, we show that a similar characterization can be extended to a class of thermodynamic operations amenable to easier physical implementation~\cite{horodecki2013fundamental}. 

Work extraction protocol in quantum thermodynamics is an example of \emph{quantum resource distillation}.
We extend the notion of the black box resource distillation to general quantum resource theories~\cite{Chitamber_2019_QRT,gour2024resources_of_quantum_world}---examples of which include quantum entanglement~\cite{Horodecki_2009_quantum_entanglement} and magic states~\cite{Veitch_2014_RT_of_stabilizer,Howard_2017_Application}---and show that the optimal performance of the distillation task is universally characterized by the performance of the composite hypothesis testing divergence.
Potential limitations of resource distillation with unknown input states were discussed for some specific cases of entanglement, magic states and work extraction in the context of ergotoropy by different approaches~\cite{Matsumoto_universal_distortion_free_entanglement,aaronson_2023_quantum_pseudoentanglement,gu_2023_little_magic, Safranek_work_extraction}.
Our results complement these findings, offering a platform that allows mutual developments in state-agnostic resource distillation and composite hypothesis testing and boosting the interplay between physically motivated tasks and information-theoretic problems.


\paragraph{Thermodynamic operations.---}
We consider a system associated with a finite-dimensional Hilbert space $\mH$ with some Hamiltonian $H$ in contact with the heat bath of inverse temperature $\beta$.
Here, we employ a resource-theoretic approach to quantum thermodynamics to formalize the set of thermodynamic operations available for work extraction.
The idea of quantum resource theory is to consider the set of states easily prepared in the given physical setting (often called \emph{free states}) and the operations preserving the set of free states as accessible (often called \emph{free operations}). 

In quantum thermodynamics, it is standard to consider the thermal Gibbs state $\tau:=\exp(-\beta H)/\Tr[\exp(-\beta H)]$ as the only free state. Therefore, thermodynamically ``free'' operations must preserve the Gibbs state~\cite{Lostaglio2019introductory}, and several classes of such operations have been investigated depending on the goal of the study. 
The largest class that satisfies the minimum requirement is the \emph{Gibbs-preserving operations}~\cite{Janzing_2000_thermodynamic,Faist_2018_Fundamental_work_cost}, which include all operations that map the thermal state of the input system to that of the output system. 
This class is mainly studied due to its simple mathematical structure, which led to a number of recent key progress in quantum thermodynamics~\cite{Faist_2018_Fundamental_work_cost,Faist_2019_thermodynamic,shiraishi_quantum_thermo,Buscemi_2019_An_information-theoretic, Wang_RT_of_assymmetric_distinguishability, Sagawa2021asymptotic, Liu_2019_one-shot,Regula_2020_Benchmarking}. 
On the other hand, the class that respects the physical implementability is known as \emph{thermal operations}. A completely positive trace-preserving (CPTP) map $\mE$ from systems $A$ to $B$ is called a thermal operation if $\mE$ can be written as 
$\mE(\cdot)=\Tr_{(A+E)\backslash B}\qty[U(\cdot\otimes \tau_E)U^\dagger]$,
where $\tau_E$ is a thermal state of the ancillary system, and $U$ is an energy-conserving unitary satisfying $[U, H_A\otimes I_E+I_A\otimes H_E]=0$. 
Note that the thermal operation is Gibbs-preserving. 

Another important property of the thermal operation is the time translation covariance, i.e., any thermal operation $\mE$ from systems $A$ to $B$ satisfies 
\bal\label{Eq: definition of time translation covariance}
\mE\qty(e^{-iH_At}\rho_A e^{iH_At})=e^{-iH_Bt}\mE(\rho_A)e^{iH_Bt},~~\forall t\in\mbR,
\eal
which prohibits the operation from creating energetic coherence---which serves as another important thermodynamic resource~\cite{Takagi_2022_correlation, Shiraishi_2024_Arbitrary, tajima_2022_universal, tajima_2024_gibbspreserving, Gour_2022_Role_of_quantum_coherence}---from scratch. 

The class of operations that is mathematically easy to handle and closer to thermal operation is called \emph{Gibbs-preserving covariant operations}~\cite{Cwiklinski_limitations_on,Gour2018quantum,Shiraishi2024quantumthermodynamicscoherencecovariant}, which are Gibbs-preserving and time-translation covariant.


\paragraph{Composite hypothesis testing.---}
The composite quantum hypothesis testing aims to distinguish two different sets $\mS$, $\mT$ of states---called a null and an alternative hypothesis ---with a binary measurement with the positive operator-valued measure (POVM) elements $\qty{M, I-M}$ in which the outcome corresponding to $M$ and $I-M$ means that one guesses the given state is an element of $\mS$ and $\mT$ respectively.
If the hypotheses contain a single state, this setting is reduced to ordinary hypothesis testing.

The performance of this task is rephrased as how much one can minimize the probability of making mistakes in the guess. 
 In this work, we mainly focus on type \RomanNumeralCaps{2} error $\sup_{\sigma\in\mT}\Tr[\sigma M]$---the minimum worst-case probability of guessing the state in $\mT$ as that in $\mS$---under the constraint in which type \RomanNumeralCaps{1} error $\sup_{\rho\in\mS}\Tr[\rho(I-M)]$---the worst-case probability of guessing the state in $\mS$ as that in $\mT$---is at most $\ve$. The performance of this task is represented as the quantity called the hypothesis testing divergence defined as~\cite{Brandao2010_A_generalization_of, Bergh2023composite, Berta_2021_On_Composite_HT}
\bal
D_H^\ve(\mS||\mT)=-\log \inf_{\substack{0\leq M\leq I\\\sup_{\rho\in \mS}\Tr[\rho(I-M)]\leq \ve}}\sup_{\sigma\in\mT}\Tr[\sigma M].
\eal
If $\mT$ is a singleton $\{\tau\}$, we simply write $D_H^\ve(\mS\|\tau)$ without the set notation. 
Note that this quantity does not change even if we take the convex hull of either of the composite hypothesis.


\paragraph{Black box work extraction.---}
We now introduce the framework of black box work extraction.
In addition to the main system, we consider another system called a ``battery'' associated with a 2-dimensional Hilbert space $\mH_X=\Span\qty{\ket{0},\ket{1}}$. 
Following Ref.~\cite{Gour_2022_Role_of_quantum_coherence}, we take the Hamiltonian for the battery system as $H_{X}=E_{X,0}\dm{0}+E_{X,1}\dm{1}$ with $E_{X,1}-E_{X,0}=\beta^{-1}\log(m-1)$ so that the thermal state of the battery system $\mu_m$ is $\mu_m=\frac{m-1}{m}\ketbra{0}{0}+\frac{1}{m}\ketbra{1}{1}$ for $m\geq 1$.
If an allowed operation can transform an initial state and the equilibrium state $\mu_m$ in the battery system to the state $\dm{1}_X$ of the battery, we say that we can ``charge'' the battery.

We represent the inaccessibility to the information of the given state as a black box, a subset $\mS\subset\mD(\mH)$ of the set $\mD(\mH)$ of all quantum states.
The experimenters are informed about the description of $\mS$ and that the initial state is picked from $\mS$ but are not told which state is  given, preventing them from tailoring work extraction protocols depending on the state.  
Note that the state is not picked according to some known probability distribution. If so, the situation is reduced to the state-aware scenario where one is given the averaged state taken over the black box.

The problem is to find the maximum $m$ such that the battery with the thermal state $\mu_m$ can be charged with the unknown initial state picked up from the black box and allowed operations $\mbO$, i.e., to find the largest $m$ such that $\rho\otimes \mu_m\to\ketbra{1}{1}_X$ is possible for every choice of the state from the black box and the allowed operation which is independent of the initial state $\rho\in \mS$. 
Here, we formulate the optimal performance of the black box work extraction
\begin{defn}
    The one-shot extractable work of the black box $\mS\subset\mD(\mH)$ with error $\ve$ is defined  as $\beta W_{\mbO}^\ve(\mS)=\log m^*$, where $m^*$ is the largest $m\in\mbR$ which satisfies 
    \bal
    \max_{\mE\in\mbO} \min_{\rho\in\mS}F(\mE(\rho),\ketbra{1}{1}_X)\geq 1-\ve.
    \eal
    Here, $\mu_m=\frac{m-1}{m}\ketbra{0}{0}+\frac{1}{m}\ketbra{1}{1}$ is the Gibbs state of the battery system $X$, and $F(\rho,\sigma)=\| \sqrt{\rho}\sqrt{\sigma} \|_1^2$ is the square fidelity. 

\end{defn}
For the justification of this definition, see Appendix A~\cite{note_for_reference}.
\nocite{sion_1958_general_minimax_theorems,eisert_2003_remarks,Rubboli_private_communication,Rubboli_2024_new,Rubboli_2023_mixed_state_additivity,noauthor_probabilistic_nodate,Piani_2009_relative_entropy_of, Cover_Thomas, Wang_2019_resource, Lostagilo_Quantum_coherence_time_translation_symmetry, O'Donnell2016efficient, Takagi_2019_general_resource, Gour_2019_how_to, Liu_2019_resource_theories, Liu_2020_operational_resource, Regula_2021_fundamental_limitations, Fang_2022_no-go_theorems, Kuroiwa_2024_robustness, Bennett_teleporting_unknown, Bennett_communication_via, Bennett_1996_concentrating, Bravyi_2005_universal, Winter_2016_operational, Marvian_2020_coherence, Wang_2019_quantifying, Gour_2020_dynamical_entanglement, Takagi_2020_application, Kim_2021_One-shot_manipulation, Takagi_2022_One-shot_yield-cost_relation, Vidal_robustness_of_entanglement, Datta_2009_min_and_max_relative}

One can also consider the asymptotic limit of this by considering a sequence $\qty{\mS_n}_{n=1}^\infty$ of the black boxes with $\mS_n\subset \mD(\mH^{\otimes n})$. Suppose there are $n$ systems with the Hamiltonian $H^{\times n}:=\sum_{j=1}^nI^{\otimes j-1}\otimes H\otimes I^{\otimes (n-j)} $. 
To take the limit $n\to \infty$, we consider a family $\qty{\mS_n}_{n=1}^\infty$ of black boxes with $\mS_n\subset \mD(\mH^{\otimes n})$.
We define the asymptotic black box extractable work as the work drawn from the whole system per the number of subsystems, namely,
    $
    \beta W_\mbO(\qty{\mS_n}_{n=1}^\infty):=\lim_{\ve\to +0}\limsup_{n\to\infty}\beta W^\ve_\mbO(\mS_n)/n.
    $

In the main text, we mainly focus on the sequence of the black boxes with a tensor-product structure $\mS^{\rm TP}_n(S):=\qty{\bigotimes_{i=1}^n\rho_i~|~\rho_i\in S, \forall i}$ generated by an arbitrary set $S\subset\mD(\mH)$. In Appendix C, we show that similar results can be obtained for a slightly more general sequence of the black boxes.

\paragraph{Black box work extraction with Gibbs-preserving operations.---}
We are in the position to characterize the performance of black box work extraction. 
We first consider Gibbs-preserving operations as available thermodynamic processes.
The following result provides the general characterization of one-shot extractable work in terms of composite hypothesis testing divergence (Proof in Appendix B.1). 
\begin{thm}\label{theorem: one-shot GPO}
Let $H$ and $\tau$ be the Hamiltonian and the thermal state of the system. One-shot extractable work from an arbitrary black box $\mS$ under Gibbs-preserving operations (GPO) satisfies
    \bal\label{Eq: p-one shot extractable work under GPO}
    \beta W^\ve_{\rm{GPO}}(\mS)&=D^\ve_H(\mS||\tau).
    \eal
\end{thm}
Theorem~\ref{theorem: one-shot GPO} establishes a tight connection between the composite hypothesis testing and the work extraction task and provides a physical meaning of the composite hypothesis testing divergence in the context of thermodynamics. 
In the case of a singleton set $\mS=\{\rho\}$, our result recovers the known result for state-aware work extraction~\cite{Wang_RT_of_assymmetric_distinguishability,Gour_2022_Role_of_quantum_coherence}.

Let us now extend this to asymptotic work extraction. 
Theorem~\ref{theorem: one-shot GPO} allows us to focus on whether the composite hypothesis testing divergence connects to the standard relative entropy under the asymptotic limit, a central question in information theory known as Stein’s lemma.
This comes with a further physical significance in the context of quantum thermodynamics because relative entropy precisely corresponds to the free energy, which plays a central role in thermodynamics. 

Despite the difficulty, previous works found that there are several settings in which composite Stein's lemma can be established~\cite{Bjelakovic2005quantum,Brandao_Adversarial_Hypothesis,Notzel2014hypothesis,Berta_2021_On_Composite_HT,Mosonyi2022on,Berta2022_on_a_gap,Bergh2023composite,gao_2024_generalized}.
In particular, an extension of quantum Sanov's theorem~\cite{Notzel2014hypothesis}, together with our general characterization in Theorem~\ref{theorem: one-shot GPO}, implies the following simple expression for the asymptotic work extraction. (Proof in Appendix C.1)

\begin{thm}\label{theorem: asymptotic GPO}
Let $H$ and $\tau$ be the Hamiltonian and the thermal state of the system. The asymptotic black box extractable work of the sequence $\qty{\mS^{\rm TP}_n(S)}_{n=1}^\infty$ of the black boxes under Gibbs-preserving operations (GPO) is given by
\bal\label{Eq: p-asymptotic extractable work under GPO}
\beta W_{\rm{GPO}}(\qty{\mS^{\rm TP}_n(S)}_{n=1}^\infty)&=\min_{\rho\in\mC(S)}D(\rho||\tau),
\eal
where $D(\rho\|\tau)\coloneqq \Tr(\rho\log\rho)-\Tr(\rho\log\tau)$ is the Umegaki relative entropy, and $\mC(S)$ is the convex hull of $S$. 
\end{thm}

We remark that if the sequence of the black boxes is composed of i.i.d. states, i.e., $\mS^{\rm{i.i.d.}}_n(S):=\qty{\rho^{\otimes n}~|~\rho\in S}$, the right-hand side of Eq.~\eqref{Eq: p-asymptotic extractable work under GPO} is reduced to $\min_{\rho\in S}D(\rho||\tau)$, which can also be seen as a consequence of the quantum Sanov's theorem~\cite{Bjelakovic2005quantum}.

Theorem~\ref{theorem: asymptotic GPO} clarifies the fundamental restriction imposed by not knowing the input state. 
In Appendix C.3, we exhibit examples of the sequence of black boxes, which reveals the underlying difference between the standard state-aware work extraction task and the black box work extraction task.

\paragraph{Black box work extraction under Gibbs-preserving covariant operations.---}
Although Gibbs-preserving operations admit relatively simple mathematical analysis, there is also doubt in its operational justification. 
Notably, they can create quantum coherence from scratch~\cite{Faist2015Gibbs-preserving}, and some Gibbs-preserving operations require even unbounded quantum coherence to implement~\cite{tajima_2024_gibbspreserving}.
This motivates us to impose additional constraints described in \eqref{Eq: definition of time translation covariance} that operations should be time-translation covariant---which prohibits the creation and detection of quantum coherence---and this is precisely the class of Gibbs-preserving covariant operations.

We consider  the pinching channel, an important analytical tool in information theory~\cite{Tomamichel_2016_QI_pricessing_with_finite_resources,hiai_1991_proper,Masahito_Hayashi_2002_optimal_sequence}, with respect to the Hamiltonian of the whole system defined as
$
\mathcal{P}(\rho)=\lim_{T\to\infty}\int_{-T}^T dt e^{-iHt}\rho e^{iHt}/(2T)=\sum_{E_i}\Pi_{E_i}\rho\Pi_{E_i},
$
where $\Pi_{E_i}$ is the projector onto the eigenspace of the Hamiltonian of the whole system corresponding to the eigenvalue $E_i$. 
We then define the pinched black box as
$
 \mP(\mS)\coloneqq \qty{\mP(\rho) ~|~ \rho\in\mS}.
$

The following result shows that the black box work extraction with the time-tranlation covariant condition can be characterized by the composite hypothesis divergence for a pinched black box. (Proof in Appendix B.2.) 
\begin{thm}\label{theorem: one-shot GPC}
Let $H$ and $\tau$ be the Hamiltonian and the thermal state of the system. One-shot extractable work from an arbitrary black box $\mS$ under Gibbs-preserving covariant operations (GPC) satisfies
    \bal\label{Eq: p-one shot extractable work under GPC}
    \beta W^\ve_{\rm{GPC}}(\mS)&=D^\ve_H(\mP(\mS)||\tau).
    \eal
\end{thm}

We would also like to understand the asymptotic limit as Theorem~\ref{theorem: asymptotic GPO}.
However, the structure of the composite hypothesis is more involved in this case because of the correlation between different subsystems generated by the pinching channel. 
This prevents us from directly applying the prior results on composite quantum Stein's lemma~\cite{Brandao_Adversarial_Hypothesis,Berta_2021_On_Composite_HT,Bergh2023composite}.
Indeed, when correlation is present in a composite hypothesis, Stein's lemma can become extremely difficult to handle~\cite{Brandao2010_A_generalization_of,Berta2022_on_a_gap}.
Nevertheless, we show that quantum Stein's lemma holds in our setting. (Proof in Appendix C.2.)
\begin{lem}\label{lemma: pinched composite quantum Stein's lemma}
For an arbitrary sequence $\qty{\mS^{\rm TP}_n(S)}_{n=1}^\infty$ of the sets , it holds that
    \begin{equation}\begin{aligned}
    \lim_{\ve\to +0}\lim_{n\to \infty}\frac{1}{n}D^\ve_H(\mP(\mS^{\rm TP}_n(S))||\tau^{\otimes n})=\min_{\rho\in\mC(S)}D(\rho||\tau).
    \end{aligned}\end{equation}
\end{lem}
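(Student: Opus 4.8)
\emph{Plan.} The statement splits into two inequalities. The upper bound ``$\le$'' is soft. The pinching $\mP$ is CPTP and fixes the alternative, $\mP(\tau^{\otimes n})=\tau^{\otimes n}$, because $\tau^{\otimes n}\propto e^{-\beta H^{\times n}}$ commutes with $H^{\times n}$. Hence, by the data-processing inequality for the composite hypothesis-testing divergence (any test $M$ for the output induces the test $\mP(M)$ for the input with the same errors), $D_H^\ve(\mP(\mS_n(S))\|\tau^{\otimes n})=D_H^\ve(\mP(\mS_n(S))\|\mP(\tau^{\otimes n}))\le D_H^\ve(\mS_n(S)\|\tau^{\otimes n})$. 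Dividing by $n$, letting $n\to\infty$ and then $\ve\to+0$, and invoking Theorems~\ref{theorem: one-shot GPO} and~\ref{theorem: asymptotic GPO} (which together identify the unpinched rate as $\min_{\rho\in\mC(S)}D(\rho\|\tau)$) gives the upper bound. All the work is in the matching lower bound.

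First I would record the key structural fact. Since each eigenspace of $H^{\times n}$ carries a single total energy, $\tau^{\otimes n}=\sum_{E}(e^{-\beta E}/Z^n)\,\Pi_E$ is a scalar on each eigenspace, i.e.\ a function of $H^{\times n}$. Three consequences follow: (i) every pinched state $\mP(\rho)$ commutes with $\tau^{\otimes n}$, so the pinched problem is a test between mutually commuting (``classical'') operators and the alternative; (ii) $\mP$ is precisely the pinching onto the eigenbasis of the alternative $\tau^{\otimes n}$, the canonical setting of Hayashi's pinching method~\cite{hiai_1991_proper,Masahito_Hayashi_2002_optimal_sequence}; and (iii) as $\mP$ is self-adjoint with $\mP(I)=I$, for any block-diagonal (energy-conserving) test $M=\mP(M)$ one has $\Tr[\mP(\rho)M]=\Tr[\rho M]$ while $\Tr[\tau^{\otimes n}M]$ is unchanged. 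Because the null states $\mP(\rho)$ and the alternative are all block diagonal, replacing any $M$ by $\mP(M)$ preserves both error probabilities, so the optimal test may be taken block diagonal. Thus $D_H^\ve(\mP(\mS_n(S))\|\tau^{\otimes n})$ equals the \emph{unpinched} composite divergence $D_H^\ve(\mS_n(S)\|\tau^{\otimes n})$ evaluated with tests restricted to block-diagonal operators, and the lower bound reduces to showing that this restriction costs no asymptotic rate.

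For that direct part I would build an explicit block-diagonal (hence admissible) universal test and control both errors with the pinching inequality $\rho\le v_n\,\mP(\rho)$, where $v_n:=|\spec(H^{\times n})|$ is the number of distinct total energies and grows only polynomially in $n$, so $\tfrac1n\log v_n\to0$. The type II decay rate is governed by the relative-entropy rate of the pinched states: from the decomposition $D(\rho\|\tau^{\otimes n})=D(\mP(\rho)\|\tau^{\otimes n})+D(\rho\|\mP(\rho))$ (valid because $\tau^{\otimes n}$ is block diagonal) together with $D(\rho\|\mP(\rho))\le\log v_n$ (immediate from the pinching inequality and operator monotonicity of $\log$), the pinched relative entropy per copy converges to the unpinched one, making $\min_{\rho\in\mC(S)}D(\rho\|\tau)$ attainable. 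For the type I error I would use a Neyman--Pearson test $M_n=\{\Omega_n\ge\gamma_n\,\tau^{\otimes n}\}$ with a permutation-symmetric ``universal'' $\Omega_n$ in the commutant of $H^{\times n}$; on this commuting pair the error is controlled by the asymptotic equipartition (concentration) property for pinched states due to Hayashi~\cite{Masahito_Hayashi_2002_optimal_sequence}, giving $\Tr[\mP(\rho)(I-M_n)]\to0$.

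The composite structure and the convex hull are then handled by a universal/minimax argument exactly as in composite quantum Sanov's theorem~\cite{Bjelakovic2005quantum,Notzel2014hypothesis,Bergh2023composite}; this is compatible with pinching because $H^{\times n}$ is permutation invariant, so $\mP$ commutes with the symmetric-group action and the type/universal-state constructions stay block diagonal. The hull $\mC(S)$ emerges just as in the unpinched case and drops for permutationally invariant black boxes. The main obstacle is precisely this direct part: establishing the asymptotic equipartition property uniformly over the \emph{correlated} family $\{\mP(\bigotimes_i\rho_i)\}$, since pinching destroys the product structure on which ordinary i.i.d.\ and composite Sanov arguments rely. The resolution rests on the two facts above---that the pinching is taken in the eigenbasis of the alternative $\tau^{\otimes n}$ (so all relevant operators are simultaneously diagonal) and that its cost is only the polynomial factor $v_n$---which together let the correlated problem inherit the unpinched rate in the limit.
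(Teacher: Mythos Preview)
Your upper bound and your reduction of the pinched problem to the \emph{unpinched} composite divergence $D_H^\ve(\mS_n(S)\|\tau^{\otimes n})$ restricted to block-diagonal (incoherent) tests are both correct, and this reduction is exactly what the paper also establishes. The gap is in the direct part. Your proposed resolution---a Neyman--Pearson test $\{\Omega_n\ge\gamma_n\tau^{\otimes n}\}$ with a permutation-symmetric universal $\Omega_n$ in the commutant of $H^{\times n}$---does not close it. Even granting $\rho_n\le p(n)\,\Omega_n$ and hence $\mP(\rho_n)\le p(n)\,\mP(\Omega_n)$, the domination bound yields only $\Tr[\mP(\rho_n)(I-M_n)]\le p(n)\,\Tr[\mP(\Omega_n)\{\mP(\Omega_n)<\gamma_n\tau^{\otimes n}\}]\le p(n)\,\gamma_n$, which diverges for the threshold $\gamma_n\approx 2^{nD^*}$ needed for the type~II rate. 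The claim that ``all relevant operators are simultaneously diagonal'' overstates what pinching gives: the pinched null states all commute with $\tau^{\otimes n}$ (since $\tau^{\otimes n}$ is a scalar on each energy block) but need not commute with one another, so the problem is not classical and Hayashi's single-state AEP does not extend uniformly without further structure. Likewise, the observation that $\mP$ commutes with the permutation action keeps universal-state constructions permutation invariant, but permutation invariance does not force block-diagonality, nor does pinching the universal test preserve the type~I guarantee.

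The paper supplies the missing structural ingredient. After the same reduction to incoherent tests, it shows (Lemma~\ref{Lemma: incoherent measurement and pinched black box is the compatible pair}) that the pair $(\vec{\mbM}^{\rm incoh},\mP(\mC(\mS)))$ is a \emph{compatible pair} in the sense of Brand\~ao et al.: although pinching destroys the tensor-product structure, the pinched family is still closed under conditioning on incoherent measurements on subsystems. This is exactly the hypothesis needed to invoke the adversarial composite Stein framework (Theorem~16 of Ref.~\cite{Brandao_Adversarial_Hypothesis}), which outputs the regularized incoherently-measured relative entropy $\lim_n\frac{1}{n}\sup_{\mM\in\mbM^{\rm incoh}_n}\min_{\rho_n\in\mC(\mS_n)}D(\mM(\rho_n)\|\mM(\tau^{\otimes n}))$. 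Lemma~\ref{Lemma: incoherent measurement and pinching plus all measurement} then identifies this with the all-measurement relative entropy on pinched states; a minimax swap, the permutation-invariant reduction of Ref.~\cite{Bergh2023composite}, and finally your Lemma~\ref{lemma: relative entropy does not change under pinching in the limit} (the $\log v_n$ pinching bound you already noted) convert this to $\min_{\rho\in\mC(S)}D(\rho\|\tau)$. In short, the polynomial pinching factor controls relative-entropy losses at the end of the argument, but the composite direct part itself is driven by the compatible-pair closure property, not by a direct universal-test construction.
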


We remark that non-composite version of this was previously shown in Ref.~\cite{Lipka_Bartosik_Quantum_dichotomies}.

Let us remark on the relation between Lemma~\ref{lemma: pinched composite quantum Stein's lemma} and the so-called ``generalized quantum Stein's lemma''~\cite{Brandao2010_A_generalization_of}. 
Recent studies have revealed that the relation between (state-aware) resource distillation and hypothesis testing holds at the high level of generality~\cite{Brandao_2008_entanglement_theory,Brandao_2010_reversible_theory,Liu_2019_one-shot,Regula_2021_one-shot}, and both are characterized by the composite hypothesis testing divergence, where the \emph{second} argument of the divergence is a composite hypothesis (while our results contain composite hypothesis in the \emph{first} argument).
The major open question along this line is whether the quantum Stein's lemma holds in this case~\cite{Berta2022_on_a_gap}, whose difficulty rests on the fact that the family of composite hypotheses in the second argument generally has a correlation between different subsystems. 
In this sense, Lemma~\ref{lemma: pinched composite quantum Stein's lemma}, which involves correlation in the composite hypothesis, might be found useful in this context, although it does not appear to directly contribute to the resolution of the problem at the moment.

In the setting of black box work extraction, Lemma~\ref{lemma: pinched composite quantum Stein's lemma} is precisely the one that brings one-shot result (Theorem~\ref{theorem: one-shot GPC}) to the asymptotic setting, which is characterized as follows. 

\begin{thm}\label{theorem: asymptotic GPC}
Let $H$ and $\tau$ be the Hamiltonian and the thermal state of the system. The asymptotic black box extractable work of the sequence of the black boxes $\qty{\mS^{\rm TP}_n(S)}_{n=1}^\infty$ under Gibbs-preserving covariant operations (GPC) is given by
\bal\label{Eq: p-asymptotic extractable work under GPC}
\beta W_{\rm{GPC}}(\qty{\mS^{\rm TP}_n(S)}_{n=1}^\infty)&=\min_{\rho\in\mC(S)}D(\rho||\tau).
\eal
\end{thm}

Theorem~\ref{theorem: asymptotic GPC} shows that although Gibbs-preserving covariant operations come with restrictions compared to Gibbs-preserving operations in one-shot level (as can be seen in Theorems~\ref{theorem: one-shot GPO}~and~\ref{theorem: one-shot GPC}), their performance coincides in the asymptotic limit, both of which are characterized by the standard free energy. 
This result, therefore, extends the similar observation in the standard state-aware work extraction~\cite{Gour_2022_Role_of_quantum_coherence}, in which the work extraction rate also agrees in the asymptotic limit.


\paragraph{Asymptotic black box work extraction under thermal operations.---}
Since the classes of operations considered above are axiomatic, they do not always reflect the physical implementability~\cite{tajima_2024_gibbspreserving}. 
This motivates us to study thermal operations, which is an operationally well-motivated class of thermodynamic processes~\cite{horodecki2013fundamental}. 
Here, we focus on i.i.d. black boxes of the form $\mS^{\rm{i.i.d.}}_n(S):=\qty{\rho^{\otimes n}~|~\rho\in S}$ generated by a set $S$ of finite size, i.e., $\abs{S}<\infty$.
In the state-aware setting, the work extraction from i.i.d. state is discussed in Ref.~\cite{Brandao_2013_RT_of_quantum_states_out_of}, which constructed a protocol that extracts work whose rate asymptotically converges to $D(\rho||\tau)$, where $\rho$ is the known initial state. 

Toward characterizing the asymptotic black box work extraction with thermal operations, we first introduce a new class of thermodynamic processes, which contains thermal operations.  

\begin{defn}
    Let $\mH_A,\mH_B,\mH_C$ be Hilbert spaces, and $\mE:\mD(\mH_A\otimes \mH_B)\to \mD(\mH_C)$ be a CPTP map. We call $\mE$ an \emph{incoherently conditioned thermal operation} if $\mE$ has the form
    $
    \mE=\sum_a\mE^{\rm TO}_a\circ \Lambda^{\rm meas}_a.
    $
    Here, each $\mE_a^{\rm TO}:\mD(\mH_B)\to\mD(\mH_C)$ is a thermal operation and 
    $
    \Lambda^{\rm meas}_a(\rho_{AB}):=\Tr_A\qty[(M^{\rm incoh}_a\otimes I_{B})\rho_{AB}]
    $
    is an instrument representing an incoherent measurement, where $M^{\rm incoh}_a$ is a POVM element satisfying $\mP(M^{\rm incoh}_a) = M^{\rm incoh}_a$.
\end{defn}

We remark that this class clearly contains thermal operations, while this is a subset of the class called conditioned thermal operations introduced in Ref.~\cite{Narasimhachar2017resource}, in which all measurements are allowed to be performed.

First, we show that incoherently conditioned thermal operations perform as well as the two classes of operations discussed above in the asymptotic setting.
The protocol that achieves this is to first learn the given state using some copies and run the state-aware protocol by Ref.~\cite{Brandao_2013_RT_of_quantum_states_out_of}.
In Appendix E.1, we show that, despite the limitation of the available measurement, this protocol is indeed possible by focusing on the structure of pinched states $\mP(\rho^{\otimes n})$. The property of the pinching channel is also referred to in Appendix D.

We now extend this result to thermal operations. 
To this end, we show that in our situation, incoherently conditioned thermal operations coincide with the thermal operations, and we obtain the following result (Proof in Appendix E.2).

\begin{thm}\label{thm:asymptotic TO}
Let $H$ and $\tau$ be the Hamiltonian and the thermal state of the system, respectively.
    The asymptotic black box extractable work of $\qty{\mS^{\rm{i.i.d.}}_n(S)}_{n=1}^\infty$ satisfying $\abs{S}<\infty$ under thermal operations (TO) is represented as
    \bal
    \beta W_{\rm{TO} }\qty(\qty{\mS^{\rm{i.i.d.}}(S)}_{n=1}^\infty)=\min_{\rho\in S}D(\rho||\tau).
    \eal
\end{thm}
In Ref.~\cite{Gour_2022_Role_of_quantum_coherence}, it was shown that the extractable work of the known i.i.d. state is equal to the quantum relative entropy under any of the three free operations mentioned in the discussion above. Our result indicates that the same holds in this case. Whether this holds in the more general setting is not known. We leave a further investigation along this line as a future work.

\paragraph{Black box resource distillation in a general resource theory.---}
As previously discussed, the work extraction task is one example of resource distillation tasks in general resource theories. In Appendix F, we introduce the black box in the general scenario and show that this one-shot distillable resource can also be characterized as the composite hypothesis testing divergence, correspondent to Theorem~\ref{theorem: one-shot GPO}. This indicates the general relation between the state-agnostic resource distillation task and the composite hypothesis testing.

\paragraph{Conclusion.---}
We introduced a framework of black box work extraction, which represents the scenarios where one is to extract work from an unknown quantum state. We presented the optimal guaranteed extractable work in various settings by establishing the connection between one-shot black box work extraction and composite hypothesis testing. We utilized this general relation to characterize the asymptotic work distillation rate by employing and extending quantum Stein's lemma for composite hypothesis testing. Additionally, we devised an explicit protocol for asymptotic black box work extraction for physically motivated classes of thermodynamic processes, which is shown to perform as well as much larger classes of operations. 
We also extended the framework to the general resource theory and clarified the universal relation between the composite hypothesis testing and state-agnostic resource distillation tasks.

Our work clarifies when and how the lack of information about the initial state crucially affects the work extraction performance and what one can still do under such restricted scenarios.
The state-agnostic setting discussed in this work has not been investigated well despite its operational significance and still has much room to explore. 
Potential future directions include an extension of our results to a more general family of black boxes.
Another important extension is to other quantum resource theories beyond one-shot distillation with the maximal set of free operations discussed in this work. 
As our framework forms a new connection between the resource distillation tasks and the quantities in the composite quantum hypothesis testing, the black box resource distillation offers a richer landscape in general quantum resource theories, complementing and extending the state-aware asymptotic distillation tied to generalized quantum Stein's lemma.

\paragraph{Acknowledgments.---} 
We thank Gilad Gour, Bartosz Regula, Christoph Hirsch, and Seth Lloyd for helpful discussions.
We are also grateful to Roberto Rubboli for bringing our attention to the additivity of composite hypothesis testing relative entropy and for sharing his alternative proof with us. 
This work is supported by JSPS KAKENHI Grant Number JP23K19028, JP24K16975, JST, CREST Grant Number JPMJCR23I3, Japan, MEXT KAKENHI Grant-in-Aid for Transformative
Research Areas A ``Extreme Universe” Grant Number JP24H00943, and the
World-Leading Innovative Graduate Study Program for Advanced Basic Science Course at the
University of Tokyo.

\let\oldaddcontentsline\addcontentsline
\renewcommand{\addcontentsline}[3]{}

\bibliographystyle{apsrmp4-2}
\bibliography{myref}

\begin{thebibliography}{90}%
\makeatletter
\providecommand \@ifxundefined [1]{%
 \@ifx{#1\undefined}
}%
\providecommand \@ifnum [1]{%
 \ifnum #1\expandafter \@firstoftwo
 \else \expandafter \@secondoftwo
 \fi
}%
\providecommand \@ifx [1]{%
 \ifx #1\expandafter \@firstoftwo
 \else \expandafter \@secondoftwo
 \fi
}%
\providecommand \natexlab [1]{#1}%
\providecommand \emph  [1]{``#1''}%
\providecommand \bibnamefont  [1]{#1}%
\providecommand \bibfnamefont [1]{#1}%
\providecommand \citenamefont [1]{#1}%
\providecommand \href@noop [0]{\@secondoftwo}%
\providecommand \href [0]{\begingroup \@sanitize@url \@href}%
\providecommand \@href[1]{\@@startlink{#1}\@@href}%
\providecommand \@@href[1]{\endgroup#1\@@endlink}%
\providecommand \@sanitize@url [0]{\catcode `\\12\catcode `\$12\catcode `\&12\catcode `\#12\catcode `\^12\catcode `\_12\catcode `\%12\relax}%
\providecommand \@@startlink[1]{}%
\providecommand \@@endlink[0]{}%
\providecommand \url  [0]{\begingroup\@sanitize@url \@url }%
\providecommand \@url [1]{\endgroup\@href {#1}{\urlprefix }}%
\providecommand \urlprefix  [0]{URL }%
\providecommand \Eprint [0]{\href }%
\providecommand \doibase [0]{http://dx.doi.org/}%
\providecommand \selectlanguage [0]{\@gobble}%
\providecommand \bibinfo  [0]{\@secondoftwo}%
\providecommand \bibfield  [0]{\@secondoftwo}%
\providecommand \translation [1]{[#1]}%
\providecommand \BibitemOpen [0]{}%
\providecommand \bibitemStop [0]{}%
\providecommand \bibitemNoStop [0]{.\EOS\space}%
\providecommand \EOS [0]{\spacefactor3000\relax}%
\providecommand \BibitemShut  [1]{\csname bibitem#1\endcsname}%
\let\auto@bib@innerbib\@empty
\bibitem [{\citenamefont {Horodecki}\ and\ \citenamefont {Oppenheim}(2013)}]{horodecki2013fundamental}%
  \BibitemOpen
  \bibfield  {author} {\bibinfo {author} {\bibfnamefont {M.}~\bibnamefont {Horodecki}}\ and\ \bibinfo {author} {\bibfnamefont {J.}~\bibnamefont {Oppenheim}},\ }\bibfield  {title} {\emph {\bibinfo {title} {Fundamental limitations for quantum and nanoscale thermodynamics},}\ }\href {http://dx.doi.org/10.1038/ncomms3059} {\bibfield  {journal} {\bibinfo  {journal} {Nat. Commun.}\ }\textbf {\bibinfo {volume} {4}},\ \bibinfo {pages} {2059} (\bibinfo {year} {2013})}\BibitemShut {NoStop}%
\bibitem [{\citenamefont {Faist}\ and\ \citenamefont {Renner}(2018)}]{Faist_2018_Fundamental_work_cost}%
  \BibitemOpen
  \bibfield  {author} {\bibinfo {author} {\bibfnamefont {P.}~\bibnamefont {Faist}}\ and\ \bibinfo {author} {\bibfnamefont {R.}~\bibnamefont {Renner}},\ }\bibfield  {title} {\emph {\bibinfo {title} {Fundamental work cost of quantum processes},}\ }\href {http://dx.doi.org/10.1103/PhysRevX.8.021011} {\bibfield  {journal} {\bibinfo  {journal} {Phys. Rev. X}\ }\textbf {\bibinfo {volume} {8}},\ \bibinfo {pages} {021011} (\bibinfo {year} {2018})}\BibitemShut {NoStop}%
\bibitem [{\citenamefont {Brandão}\ \emph {et~al.}(2015)\citenamefont {Brandão}, \citenamefont {Horodecki}, \citenamefont {Ng}, \citenamefont {Oppenheim},\ and\ \citenamefont {Wehner}}]{Brandao_2015_second_law}%
  \BibitemOpen
  \bibfield  {author} {\bibinfo {author} {\bibfnamefont {F.}~\bibnamefont {Brandão}}, \bibinfo {author} {\bibfnamefont {M.}~\bibnamefont {Horodecki}}, \bibinfo {author} {\bibfnamefont {N.}~\bibnamefont {Ng}}, \bibinfo {author} {\bibfnamefont {J.}~\bibnamefont {Oppenheim}}, \ and\ \bibinfo {author} {\bibfnamefont {S.}~\bibnamefont {Wehner}},\ }\bibfield  {title} {\emph {\bibinfo {title} {The second laws of quantum thermodynamics},}\ }\href {http://dx.doi.org/10.1073/pnas.1411728112} {\bibfield  {journal} {\bibinfo  {journal} {Proc. Nat. Acad. Sci.}\ }\textbf {\bibinfo {volume} {112}},\ \bibinfo {pages} {3275} (\bibinfo {year} {2015})}\BibitemShut {NoStop}%
\bibitem [{\citenamefont {Brand\~ao}\ \emph {et~al.}(2013)\citenamefont {Brand\~ao}, \citenamefont {Horodecki}, \citenamefont {Oppenheim}, \citenamefont {Renes},\ and\ \citenamefont {Spekkens}}]{Brandao_2013_RT_of_quantum_states_out_of}%
  \BibitemOpen
  \bibfield  {author} {\bibinfo {author} {\bibfnamefont {F.~G. S.~L.}\ \bibnamefont {Brand\~ao}}, \bibinfo {author} {\bibfnamefont {M.}~\bibnamefont {Horodecki}}, \bibinfo {author} {\bibfnamefont {J.}~\bibnamefont {Oppenheim}}, \bibinfo {author} {\bibfnamefont {J.~M.}\ \bibnamefont {Renes}}, \ and\ \bibinfo {author} {\bibfnamefont {R.~W.}\ \bibnamefont {Spekkens}},\ }\bibfield  {title} {\emph {\bibinfo {title} {Resource theory of quantum states out of thermal equilibrium},}\ }\href {http://dx.doi.org/10.1103/PhysRevLett.111.250404} {\bibfield  {journal} {\bibinfo  {journal} {Phys. Rev. Lett.}\ }\textbf {\bibinfo {volume} {111}},\ \bibinfo {pages} {250404} (\bibinfo {year} {2013})}\BibitemShut {NoStop}%
\bibitem [{\citenamefont {Lostaglio}(2019)}]{Lostaglio2019introductory}%
  \BibitemOpen
  \bibfield  {author} {\bibinfo {author} {\bibfnamefont {M.}~\bibnamefont {Lostaglio}},\ }\bibfield  {title} {\emph {\bibinfo {title} {An introductory review of the resource theory approach to thermodynamics},}\ }\href {http://dx.doi.org/10.1088/1361-6633/ab46e5} {\bibfield  {journal} {\bibinfo  {journal} {Rep. Prog. Phys.}\ }\textbf {\bibinfo {volume} {82}},\ \bibinfo {pages} {114001} (\bibinfo {year} {2019})}\BibitemShut {NoStop}%
\bibitem [{\citenamefont {Gour}(2022)}]{Gour_2022_Role_of_quantum_coherence}%
  \BibitemOpen
  \bibfield  {author} {\bibinfo {author} {\bibfnamefont {G.}~\bibnamefont {Gour}},\ }\bibfield  {title} {\emph {\bibinfo {title} {Role of quantum coherence in thermodynamics},}\ }\href {http://dx.doi.org/10.1103/PRXQuantum.3.040323} {\bibfield  {journal} {\bibinfo  {journal} {PRX Quantum}\ }\textbf {\bibinfo {volume} {3}},\ \bibinfo {pages} {040323} (\bibinfo {year} {2022})}\BibitemShut {NoStop}%
\bibitem [{\citenamefont {Vogel}\ and\ \citenamefont {Risken}(1989)}]{Vogel_1989_Determination}%
  \BibitemOpen
  \bibfield  {author} {\bibinfo {author} {\bibfnamefont {K.}~\bibnamefont {Vogel}}\ and\ \bibinfo {author} {\bibfnamefont {H.}~\bibnamefont {Risken}},\ }\bibfield  {title} {\emph {\bibinfo {title} {Determination of quasiprobability distributions in terms of probability distributions for the rotated quadrature phase},}\ }\href {http://dx.doi.org/10.1103/PhysRevA.40.2847} {\bibfield  {journal} {\bibinfo  {journal} {Phys. Rev. A}\ }\textbf {\bibinfo {volume} {40}},\ \bibinfo {pages} {2847} (\bibinfo {year} {1989})}\BibitemShut {NoStop}%
\bibitem [{\citenamefont {Banaszek}\ \emph {et~al.}(2013)\citenamefont {Banaszek}, \citenamefont {Cramer},\ and\ \citenamefont {Gross}}]{Banaszek_2013_focusing}%
  \BibitemOpen
  \bibfield  {author} {\bibinfo {author} {\bibfnamefont {K.}~\bibnamefont {Banaszek}}, \bibinfo {author} {\bibfnamefont {M.}~\bibnamefont {Cramer}}, \ and\ \bibinfo {author} {\bibfnamefont {D.}~\bibnamefont {Gross}},\ }\bibfield  {title} {\emph {\bibinfo {title} {Focus on quantum tomography},}\ }\href {http://dx.doi.org/10.1088/1367-2630/15/12/125020} {\bibfield  {journal} {\bibinfo  {journal} {New J. Phys.}\ }\textbf {\bibinfo {volume} {15}},\ \bibinfo {pages} {125020} (\bibinfo {year} {2013})}\BibitemShut {NoStop}%
\bibitem [{\citenamefont {Hiai}\ and\ \citenamefont {Petz}(1991)}]{hiai_1991_proper}%
  \BibitemOpen
  \bibfield  {author} {\bibinfo {author} {\bibfnamefont {F.}~\bibnamefont {Hiai}}\ and\ \bibinfo {author} {\bibfnamefont {D.}~\bibnamefont {Petz}},\ }\bibfield  {title} {\emph {\bibinfo {title} {The proper formula for relative entropy and its asymptotics in quantum probability},}\ }\href {http://dx.doi.org/10.1007/BF02100287} {\bibfield  {journal} {\bibinfo  {journal} {Commun. Math. Phys.}\ }\textbf {\bibinfo {volume} {143}},\ \bibinfo {pages} {99} (\bibinfo {year} {1991})}\BibitemShut {NoStop}%
\bibitem [{\citenamefont {Wang}\ and\ \citenamefont {Renner}(2012)}]{Wang_One-Shot_Classical-Quantum_Capacity}%
  \BibitemOpen
  \bibfield  {author} {\bibinfo {author} {\bibfnamefont {L.}~\bibnamefont {Wang}}\ and\ \bibinfo {author} {\bibfnamefont {R.}~\bibnamefont {Renner}},\ }\bibfield  {title} {\emph {\bibinfo {title} {One-shot classical-quantum capacity and hypothesis testing},}\ }\href {http://dx.doi.org/10.1103/PhysRevLett.108.200501} {\bibfield  {journal} {\bibinfo  {journal} {Phys. Rev. Lett.}\ }\textbf {\bibinfo {volume} {108}},\ \bibinfo {pages} {200501} (\bibinfo {year} {2012})}\BibitemShut {NoStop}%
\bibitem [{\citenamefont {Buscemi}\ \emph {et~al.}(2019)\citenamefont {Buscemi}, \citenamefont {Sutter},\ and\ \citenamefont {Tomamichel}}]{Buscemi_2019_An_information-theoretic}%
  \BibitemOpen
  \bibfield  {author} {\bibinfo {author} {\bibfnamefont {F.}~\bibnamefont {Buscemi}}, \bibinfo {author} {\bibfnamefont {D.}~\bibnamefont {Sutter}}, \ and\ \bibinfo {author} {\bibfnamefont {M.}~\bibnamefont {Tomamichel}},\ }\bibfield  {title} {\emph {\bibinfo {title} {An information-theoretic treatment of quantum dichotomies},}\ }\href {http://dx.doi.org/10.22331/q-2019-12-09-209} {\bibfield  {journal} {\bibinfo  {journal} {Quantum}\ }\textbf {\bibinfo {volume} {3}},\ \bibinfo {pages} {209} (\bibinfo {year} {2019})}\BibitemShut {NoStop}%
\bibitem [{\citenamefont {Wang}\ and\ \citenamefont {Wilde}(2019{\natexlab{a}})}]{Wang_RT_of_assymmetric_distinguishability}%
  \BibitemOpen
  \bibfield  {author} {\bibinfo {author} {\bibfnamefont {X.}~\bibnamefont {Wang}}\ and\ \bibinfo {author} {\bibfnamefont {M.~M.}\ \bibnamefont {Wilde}},\ }\bibfield  {title} {\emph {\bibinfo {title} {Resource theory of asymmetric distinguishability},}\ }\href {http://dx.doi.org/10.1103/PhysRevResearch.1.033170} {\bibfield  {journal} {\bibinfo  {journal} {Phys. Rev. Res.}\ }\textbf {\bibinfo {volume} {1}},\ \bibinfo {pages} {033170} (\bibinfo {year} {2019}{\natexlab{a}})}\BibitemShut {NoStop}%
\bibitem [{\citenamefont {Brandão}\ \emph {et~al.}(2020)\citenamefont {Brandão}, \citenamefont {Harrow}, \citenamefont {Lee},\ and\ \citenamefont {Peres}}]{Brandao_Adversarial_Hypothesis}%
  \BibitemOpen
  \bibfield  {author} {\bibinfo {author} {\bibfnamefont {F.~G. S.~L.}\ \bibnamefont {Brandão}}, \bibinfo {author} {\bibfnamefont {A.~W.}\ \bibnamefont {Harrow}}, \bibinfo {author} {\bibfnamefont {J.~R.}\ \bibnamefont {Lee}}, \ and\ \bibinfo {author} {\bibfnamefont {Y.}~\bibnamefont {Peres}},\ }\bibfield  {title} {\emph {\bibinfo {title} {Adversarial hypothesis testing and a quantum stein’s lemma for restricted measurements},}\ }\href {http://dx.doi.org/10.1109/TIT.2020.2979704} {\bibfield  {journal} {\bibinfo  {journal} {IEEE Trans. Inf. Theory}\ }\textbf {\bibinfo {volume} {66}},\ \bibinfo {pages} {5037} (\bibinfo {year} {2020})}\BibitemShut {NoStop}%
\bibitem [{\citenamefont {Levitan}\ and\ \citenamefont {Merhav}(2002)}]{Levitan_2002_composite_Neyman_Pearson}%
  \BibitemOpen
  \bibfield  {author} {\bibinfo {author} {\bibfnamefont {E.}~\bibnamefont {Levitan}}\ and\ \bibinfo {author} {\bibfnamefont {N.}~\bibnamefont {Merhav}},\ }\bibfield  {title} {\emph {\bibinfo {title} {A competitive neyman-pearson approach to universal hypothesis testing with applications},}\ }\href {http://dx.doi.org/10.1109/TIT.2002.800478} {\bibfield  {journal} {\bibinfo  {journal} {IEEE Trans. Inf. Theory}\ }\textbf {\bibinfo {volume} {48}},\ \bibinfo {pages} {2215} (\bibinfo {year} {2002})}\BibitemShut {NoStop}%
\bibitem [{\citenamefont {Polyanskiy}(2013)}]{Polyanskiy_2013_Saddle_point}%
  \BibitemOpen
  \bibfield  {author} {\bibinfo {author} {\bibfnamefont {Y.}~\bibnamefont {Polyanskiy}},\ }\bibfield  {title} {\emph {\bibinfo {title} {Saddle point in the minimax converse for channel coding},}\ }\href {http://dx.doi.org/10.1109/TIT.2012.2236382} {\bibfield  {journal} {\bibinfo  {journal} {IEEE Trans. Inf. Theory}\ }\textbf {\bibinfo {volume} {59}},\ \bibinfo {pages} {2576} (\bibinfo {year} {2013})}\BibitemShut {NoStop}%
\bibitem [{\citenamefont {Fangwei}\ and\ \citenamefont {Shiyi}(1996)}]{fangwei_1996_hypothesis_testing_with_AVS}%
  \BibitemOpen
  \bibfield  {author} {\bibinfo {author} {\bibfnamefont {F.}~\bibnamefont {Fangwei}}\ and\ \bibinfo {author} {\bibfnamefont {S.}~\bibnamefont {Shiyi}},\ }\bibfield  {title} {\emph {\bibinfo {title} {Hypothesis testing for arbitrarily varying source},}\ }\href {http://dx.doi.org/10.1007/BF02109388} {\bibfield  {journal} {\bibinfo  {journal} {Acta Math. Sin.}\ }\textbf {\bibinfo {volume} {12}},\ \bibinfo {pages} {33} (\bibinfo {year} {1996})}\BibitemShut {NoStop}%
\bibitem [{\citenamefont {Modak}\ \emph {et~al.}(2023)\citenamefont {Modak}, \citenamefont {Sangwan}, \citenamefont {Bakshi}, \citenamefont {Dey},\ and\ \citenamefont {Prabhakaran}}]{modak_2023_hypothesis_testing_for_adversarial_channels}%
  \BibitemOpen
  \bibfield  {author} {\bibinfo {author} {\bibfnamefont {E.}~\bibnamefont {Modak}}, \bibinfo {author} {\bibfnamefont {N.}~\bibnamefont {Sangwan}}, \bibinfo {author} {\bibfnamefont {M.}~\bibnamefont {Bakshi}}, \bibinfo {author} {\bibfnamefont {B.~K.}\ \bibnamefont {Dey}}, \ and\ \bibinfo {author} {\bibfnamefont {V.~M.}\ \bibnamefont {Prabhakaran}},\ }\bibfield  {title} {\emph {\bibinfo {title} {Hypothesis {Testing} for {Adversarial} {Channels}: {Chernoff}-{Stein} {Exponents}},}\ }in\ \href {http://dx.doi.org/10.1109/ISIT54713.2023.10206736} {\emph {\bibinfo {booktitle} {2023 {IEEE} {International} {Symposium} on {Information} {Theory} ({ISIT})}}}\ (\bibinfo {year} {2023})\ pp.\ \bibinfo {pages} {1225--1230},\ \bibinfo {note} {iSSN: 2157-8117}\BibitemShut {NoStop}%
\bibitem [{\citenamefont {Tomamichel}\ and\ \citenamefont {Hayashi}(2018)}]{Tomamichel_2018_Operational_interpretation}%
  \BibitemOpen
  \bibfield  {author} {\bibinfo {author} {\bibfnamefont {M.}~\bibnamefont {Tomamichel}}\ and\ \bibinfo {author} {\bibfnamefont {M.}~\bibnamefont {Hayashi}},\ }\bibfield  {title} {\emph {\bibinfo {title} {Operational interpretation of rényi information measures via composite hypothesis testing against product and markov distributions},}\ }\href {http://dx.doi.org/10.1109/TIT.2017.2776900} {\bibfield  {journal} {\bibinfo  {journal} {IEEE Trans. Inf. Theory}\ }\textbf {\bibinfo {volume} {64}},\ \bibinfo {pages} {1064} (\bibinfo {year} {2018})}\BibitemShut {NoStop}%
\bibitem [{\citenamefont {Bjelaković}\ \emph {et~al.}(2005)\citenamefont {Bjelaković}, \citenamefont {Deuschel}, \citenamefont {Krüger}, \citenamefont {Seiler}, \citenamefont {Siegmund-Schultze},\ and\ \citenamefont {Szkoła}}]{Bjelakovic2005quantum}%
  \BibitemOpen
  \bibfield  {author} {\bibinfo {author} {\bibfnamefont {I.}~\bibnamefont {Bjelaković}}, \bibinfo {author} {\bibfnamefont {J.-D.}\ \bibnamefont {Deuschel}}, \bibinfo {author} {\bibfnamefont {T.}~\bibnamefont {Krüger}}, \bibinfo {author} {\bibfnamefont {R.}~\bibnamefont {Seiler}}, \bibinfo {author} {\bibfnamefont {R.}~\bibnamefont {Siegmund-Schultze}}, \ and\ \bibinfo {author} {\bibfnamefont {A.}~\bibnamefont {Szkoła}},\ }\bibfield  {title} {\emph {\bibinfo {title} {A {Quantum} {Version} of {Sanov}'s {Theorem}},}\ }\href {http://dx.doi.org/10.1007/s00220-005-1426-2} {\bibfield  {journal} {\bibinfo  {journal} {Commun. Math. Phys.}\ }\textbf {\bibinfo {volume} {260}},\ \bibinfo {pages} {659} (\bibinfo {year} {2005})}\BibitemShut {NoStop}%
\bibitem [{\citenamefont {Brand{\~{a}}o}\ and\ \citenamefont {Plenio}(2010)}]{Brandao2010_A_generalization_of}%
  \BibitemOpen
  \bibfield  {author} {\bibinfo {author} {\bibfnamefont {F.~G. S.~L.}\ \bibnamefont {Brand{\~{a}}o}}\ and\ \bibinfo {author} {\bibfnamefont {M.~B.}\ \bibnamefont {Plenio}},\ }\bibfield  {title} {\emph {\bibinfo {title} {A generalization of quantum stein's lemma},}\ }\href {http://dx.doi.org/10.1007/s00220-010-1005-z} {\bibfield  {journal} {\bibinfo  {journal} {Commun. Math. Phys.}\ }\textbf {\bibinfo {volume} {295}},\ \bibinfo {pages} {791} (\bibinfo {year} {2010})}\BibitemShut {NoStop}%
\bibitem [{\citenamefont {Nötzel}(2014)}]{Notzel2014hypothesis}%
  \BibitemOpen
  \bibfield  {author} {\bibinfo {author} {\bibfnamefont {J.}~\bibnamefont {Nötzel}},\ }\bibfield  {title} {\emph {\bibinfo {title} {Hypothesis testing on invariant subspaces of the symmetric group: part i. quantum sanov's theorem and arbitrarily varying sources},}\ }\href {http://dx.doi.org/10.1088/1751-8113/47/23/235303} {\bibfield  {journal} {\bibinfo  {journal} {J. Phys. A: Math. Theor.}\ }\textbf {\bibinfo {volume} {47}},\ \bibinfo {pages} {235303} (\bibinfo {year} {2014})}\BibitemShut {NoStop}%
\bibitem [{\citenamefont {Berta}\ \emph {et~al.}(2021)\citenamefont {Berta}, \citenamefont {Brandão},\ and\ \citenamefont {Hirche}}]{Berta_2021_On_Composite_HT}%
  \BibitemOpen
  \bibfield  {author} {\bibinfo {author} {\bibfnamefont {M.}~\bibnamefont {Berta}}, \bibinfo {author} {\bibfnamefont {F.~G. S.~L.}\ \bibnamefont {Brandão}}, \ and\ \bibinfo {author} {\bibfnamefont {C.}~\bibnamefont {Hirche}},\ }\bibfield  {title} {\emph {\bibinfo {title} {On composite quantum hypothesis testing},}\ }\href {http://dx.doi.org/10.1007/s00220-021-04133-8} {\bibfield  {journal} {\bibinfo  {journal} {Commun. Math. Phys.}\ }\textbf {\bibinfo {volume} {385}},\ \bibinfo {pages} {55–77} (\bibinfo {year} {2021})}\BibitemShut {NoStop}%
\bibitem [{\citenamefont {Bergh}\ \emph {et~al.}(2023)\citenamefont {Bergh}, \citenamefont {Datta},\ and\ \citenamefont {Salzmann}}]{Bergh2023composite}%
  \BibitemOpen
  \bibfield  {author} {\bibinfo {author} {\bibfnamefont {B.}~\bibnamefont {Bergh}}, \bibinfo {author} {\bibfnamefont {N.}~\bibnamefont {Datta}}, \ and\ \bibinfo {author} {\bibfnamefont {R.}~\bibnamefont {Salzmann}},\ }\href@noop {} {\emph {\bibinfo {title} {Composite classical and quantum channel discrimination},}\ } (\bibinfo {year} {2023}),\ \Eprint {http://arxiv.org/abs/2303.02016} {arXiv:2303.02016 [quant-ph]} \BibitemShut {NoStop}%
\bibitem [{\citenamefont {Hayashi}\ and\ \citenamefont {Tomamichel}(2016)}]{hayashi_2016_correlation}%
  \BibitemOpen
  \bibfield  {author} {\bibinfo {author} {\bibfnamefont {M.}~\bibnamefont {Hayashi}}\ and\ \bibinfo {author} {\bibfnamefont {M.}~\bibnamefont {Tomamichel}},\ }\bibfield  {title} {\emph {\bibinfo {title} {Correlation detection and an operational interpretation of the {Rényi} mutual information},}\ }\href {http://dx.doi.org/10.1063/1.4964755} {\bibfield  {journal} {\bibinfo  {journal} {J. Math. Phys.}\ }\textbf {\bibinfo {volume} {57}},\ \bibinfo {pages} {102201} (\bibinfo {year} {2016})}\BibitemShut {NoStop}%
\bibitem [{\citenamefont {Gao}\ and\ \citenamefont {Rahaman}(2024)}]{gao_2024_generalized}%
  \BibitemOpen
  \bibfield  {author} {\bibinfo {author} {\bibfnamefont {L.}~\bibnamefont {Gao}}\ and\ \bibinfo {author} {\bibfnamefont {M.}~\bibnamefont {Rahaman}},\ }\href@noop {} {\emph {\bibinfo {title} {Generalized stein's lemma and asymptotic equipartition property for subalgebra entropies},}\ } (\bibinfo {year} {2024}),\ \Eprint {http://arxiv.org/abs/2401.03090} {arXiv:2401.03090 [quant-ph]} \BibitemShut {NoStop}%
\bibitem [{\citenamefont {Ogawa}\ and\ \citenamefont {Nagaoka}(2000)}]{Ogawa_2000_strong}%
  \BibitemOpen
  \bibfield  {author} {\bibinfo {author} {\bibfnamefont {T.}~\bibnamefont {Ogawa}}\ and\ \bibinfo {author} {\bibfnamefont {H.}~\bibnamefont {Nagaoka}},\ }\bibfield  {title} {\emph {\bibinfo {title} {Strong converse and stein's lemma in quantum hypothesis testing},}\ }\href {http://dx.doi.org/10.1109/18.887855} {\bibfield  {journal} {\bibinfo  {journal} {IEEE Trans. Inf. Theory}\ }\textbf {\bibinfo {volume} {46}},\ \bibinfo {pages} {2428} (\bibinfo {year} {2000})}\BibitemShut {NoStop}%
\bibitem [{\citenamefont {Hayashi}(2002)}]{Masahito_Hayashi_2002_optimal_sequence}%
  \BibitemOpen
  \bibfield  {author} {\bibinfo {author} {\bibfnamefont {M.}~\bibnamefont {Hayashi}},\ }\bibfield  {title} {\emph {\bibinfo {title} {Optimal sequence of quantum measurements in the sense of stein's lemma in quantum hypothesis testing},}\ }\href {http://dx.doi.org/10.1088/0305-4470/35/50/307} {\bibfield  {journal} {\bibinfo  {journal} {J. Phys. A}\ }\textbf {\bibinfo {volume} {35}},\ \bibinfo {pages} {10759} (\bibinfo {year} {2002})}\BibitemShut {NoStop}%
\bibitem [{\citenamefont {Tomamichel}(2016)}]{Tomamichel_2016_QI_pricessing_with_finite_resources}%
  \BibitemOpen
  \bibfield  {author} {\bibinfo {author} {\bibfnamefont {M.}~\bibnamefont {Tomamichel}},\ }\href {http://dx.doi.org/10.1007/978-3-319-21891-5} {\emph {\bibinfo {title} {Quantum Information Processing with Finite Resources}}}\ (\bibinfo  {publisher} {Springer International Publishing},\ \bibinfo {year} {2016})\BibitemShut {NoStop}%
\bibitem [{\citenamefont {Chitambar}\ and\ \citenamefont {Gour}(2019)}]{Chitamber_2019_QRT}%
  \BibitemOpen
  \bibfield  {author} {\bibinfo {author} {\bibfnamefont {E.}~\bibnamefont {Chitambar}}\ and\ \bibinfo {author} {\bibfnamefont {G.}~\bibnamefont {Gour}},\ }\bibfield  {title} {\emph {\bibinfo {title} {Quantum resource theories},}\ }\href {http://dx.doi.org/10.1103/RevModPhys.91.025001} {\bibfield  {journal} {\bibinfo  {journal} {Rev. Mod. Phys.}\ }\textbf {\bibinfo {volume} {91}},\ \bibinfo {pages} {025001} (\bibinfo {year} {2019})}\BibitemShut {NoStop}%
\bibitem [{\citenamefont {Gour}(2024)}]{gour2024resources_of_quantum_world}%
  \BibitemOpen
  \bibfield  {author} {\bibinfo {author} {\bibfnamefont {G.}~\bibnamefont {Gour}},\ }\href@noop {} {\emph {\bibinfo {title} {Resources of the quantum world},}\ } (\bibinfo {year} {2024}),\ \Eprint {http://arxiv.org/abs/2402.05474} {arXiv:2402.05474 [quant-ph]} \BibitemShut {NoStop}%
\bibitem [{\citenamefont {Horodecki}\ \emph {et~al.}(2009)\citenamefont {Horodecki}, \citenamefont {Horodecki}, \citenamefont {Horodecki},\ and\ \citenamefont {Horodecki}}]{Horodecki_2009_quantum_entanglement}%
  \BibitemOpen
  \bibfield  {author} {\bibinfo {author} {\bibfnamefont {R.}~\bibnamefont {Horodecki}}, \bibinfo {author} {\bibfnamefont {P.}~\bibnamefont {Horodecki}}, \bibinfo {author} {\bibfnamefont {M.}~\bibnamefont {Horodecki}}, \ and\ \bibinfo {author} {\bibfnamefont {K.}~\bibnamefont {Horodecki}},\ }\bibfield  {title} {\emph {\bibinfo {title} {Quantum entanglement},}\ }\href {http://dx.doi.org/10.1103/RevModPhys.81.865} {\bibfield  {journal} {\bibinfo  {journal} {Rev. Mod. Phys.}\ }\textbf {\bibinfo {volume} {81}},\ \bibinfo {pages} {865} (\bibinfo {year} {2009})}\BibitemShut {NoStop}%
\bibitem [{\citenamefont {Veitch}\ \emph {et~al.}(2014)\citenamefont {Veitch}, \citenamefont {Mousavian}, \citenamefont {Gottesman},\ and\ \citenamefont {Emerson}}]{Veitch_2014_RT_of_stabilizer}%
  \BibitemOpen
  \bibfield  {author} {\bibinfo {author} {\bibfnamefont {V.}~\bibnamefont {Veitch}}, \bibinfo {author} {\bibfnamefont {S.~A.~H.}\ \bibnamefont {Mousavian}}, \bibinfo {author} {\bibfnamefont {D.}~\bibnamefont {Gottesman}}, \ and\ \bibinfo {author} {\bibfnamefont {J.}~\bibnamefont {Emerson}},\ }\bibfield  {title} {\emph {\bibinfo {title} {The resource theory of stabilizer quantum computation},}\ }\href {http://dx.doi.org/10.1088/1367-2630/16/1/013009} {\bibfield  {journal} {\bibinfo  {journal} {New J. Phys.}\ }\textbf {\bibinfo {volume} {16}},\ \bibinfo {pages} {013009} (\bibinfo {year} {2014})}\BibitemShut {NoStop}%
\bibitem [{\citenamefont {Howard}\ and\ \citenamefont {Campbell}(2017)}]{Howard_2017_Application}%
  \BibitemOpen
  \bibfield  {author} {\bibinfo {author} {\bibfnamefont {M.}~\bibnamefont {Howard}}\ and\ \bibinfo {author} {\bibfnamefont {E.}~\bibnamefont {Campbell}},\ }\bibfield  {title} {\emph {\bibinfo {title} {Application of a resource theory for magic states to fault-tolerant quantum computing},}\ }\href {http://dx.doi.org/10.1103/PhysRevLett.118.090501} {\bibfield  {journal} {\bibinfo  {journal} {Phys. Rev. Lett.}\ }\textbf {\bibinfo {volume} {118}},\ \bibinfo {pages} {090501} (\bibinfo {year} {2017})}\BibitemShut {NoStop}%
\bibitem [{\citenamefont {Matsumoto}\ and\ \citenamefont {Hayashi}(2007)}]{Matsumoto_universal_distortion_free_entanglement}%
  \BibitemOpen
  \bibfield  {author} {\bibinfo {author} {\bibfnamefont {K.}~\bibnamefont {Matsumoto}}\ and\ \bibinfo {author} {\bibfnamefont {M.}~\bibnamefont {Hayashi}},\ }\bibfield  {title} {\emph {\bibinfo {title} {Universal distortion-free entanglement concentration},}\ }\href {http://dx.doi.org/10.1103/PhysRevA.75.062338} {\bibfield  {journal} {\bibinfo  {journal} {Phys. Rev. A}\ }\textbf {\bibinfo {volume} {75}},\ \bibinfo {pages} {062338} (\bibinfo {year} {2007})}\BibitemShut {NoStop}%
\bibitem [{\citenamefont {Aaronson}\ \emph {et~al.}(2023)\citenamefont {Aaronson}, \citenamefont {Bouland}, \citenamefont {Fefferman}, \citenamefont {Ghosh}, \citenamefont {Vazirani}, \citenamefont {Zhang},\ and\ \citenamefont {Zhou}}]{aaronson_2023_quantum_pseudoentanglement}%
  \BibitemOpen
  \bibfield  {author} {\bibinfo {author} {\bibfnamefont {S.}~\bibnamefont {Aaronson}}, \bibinfo {author} {\bibfnamefont {A.}~\bibnamefont {Bouland}}, \bibinfo {author} {\bibfnamefont {B.}~\bibnamefont {Fefferman}}, \bibinfo {author} {\bibfnamefont {S.}~\bibnamefont {Ghosh}}, \bibinfo {author} {\bibfnamefont {U.}~\bibnamefont {Vazirani}}, \bibinfo {author} {\bibfnamefont {C.}~\bibnamefont {Zhang}}, \ and\ \bibinfo {author} {\bibfnamefont {Z.}~\bibnamefont {Zhou}},\ }\href@noop {} {\emph {\bibinfo {title} {Quantum pseudoentanglement},}\ } (\bibinfo {year} {2023}),\ \Eprint {http://arxiv.org/abs/2211.00747} {arXiv:2211.00747 [quant-ph]} \BibitemShut {NoStop}%
\bibitem [{\citenamefont {Gu}\ \emph {et~al.}(2024)\citenamefont {Gu}, \citenamefont {Leone}, \citenamefont {Ghosh}, \citenamefont {Eisert}, \citenamefont {Yelin},\ and\ \citenamefont {Quek}}]{gu_2023_little_magic}%
  \BibitemOpen
  \bibfield  {author} {\bibinfo {author} {\bibfnamefont {A.}~\bibnamefont {Gu}}, \bibinfo {author} {\bibfnamefont {L.}~\bibnamefont {Leone}}, \bibinfo {author} {\bibfnamefont {S.}~\bibnamefont {Ghosh}}, \bibinfo {author} {\bibfnamefont {J.}~\bibnamefont {Eisert}}, \bibinfo {author} {\bibfnamefont {S.~F.}\ \bibnamefont {Yelin}}, \ and\ \bibinfo {author} {\bibfnamefont {Y.}~\bibnamefont {Quek}},\ }\bibfield  {title} {\emph {\bibinfo {title} {Pseudomagic quantum states},}\ }\href {http://dx.doi.org/10.1103/PhysRevLett.132.210602} {\bibfield  {journal} {\bibinfo  {journal} {Phys. Rev. Lett.}\ }\textbf {\bibinfo {volume} {132}},\ \bibinfo {pages} {210602} (\bibinfo {year} {2024})}\BibitemShut {NoStop}%
\bibitem [{\citenamefont {\ifmmode~\check{S}\else \v{S}\fi{}afr\'anek}\ \emph {et~al.}(2023)\citenamefont {\ifmmode~\check{S}\else \v{S}\fi{}afr\'anek}, \citenamefont {Rosa},\ and\ \citenamefont {Binder}}]{Safranek_work_extraction}%
  \BibitemOpen
  \bibfield  {author} {\bibinfo {author} {\bibfnamefont {D.}~\bibnamefont {\ifmmode~\check{S}\else \v{S}\fi{}afr\'anek}}, \bibinfo {author} {\bibfnamefont {D.}~\bibnamefont {Rosa}}, \ and\ \bibinfo {author} {\bibfnamefont {F.~C.}\ \bibnamefont {Binder}},\ }\bibfield  {title} {\emph {\bibinfo {title} {Work extraction from unknown quantum sources},}\ }\href {http://dx.doi.org/10.1103/PhysRevLett.130.210401} {\bibfield  {journal} {\bibinfo  {journal} {Phys. Rev. Lett.}\ }\textbf {\bibinfo {volume} {130}},\ \bibinfo {pages} {210401} (\bibinfo {year} {2023})}\BibitemShut {NoStop}%
\bibitem [{\citenamefont {Janzing}\ \emph {et~al.}(2000)\citenamefont {Janzing}, \citenamefont {Wocjan}, \citenamefont {Zeier}, \citenamefont {Geiss},\ and\ \citenamefont {Beth}}]{Janzing_2000_thermodynamic}%
  \BibitemOpen
  \bibfield  {author} {\bibinfo {author} {\bibfnamefont {D.}~\bibnamefont {Janzing}}, \bibinfo {author} {\bibfnamefont {P.}~\bibnamefont {Wocjan}}, \bibinfo {author} {\bibfnamefont {R.}~\bibnamefont {Zeier}}, \bibinfo {author} {\bibfnamefont {R.}~\bibnamefont {Geiss}}, \ and\ \bibinfo {author} {\bibfnamefont {T.}~\bibnamefont {Beth}},\ }\bibfield  {title} {\emph {\bibinfo {title} {Thermodynamic {Cost} of {Reliability} and {Low} {Temperatures}: {Tightening} {Landauer}'s {Principle} and the {Second} {Law}},}\ }\href {http://dx.doi.org/10.1023/A:1026422630734} {\bibfield  {journal} {\bibinfo  {journal} {Int. J. Theor. Phys.}\ }\textbf {\bibinfo {volume} {39}},\ \bibinfo {pages} {2717} (\bibinfo {year} {2000})}\BibitemShut {NoStop}%
\bibitem [{\citenamefont {Faist}\ \emph {et~al.}(2019)\citenamefont {Faist}, \citenamefont {Berta},\ and\ \citenamefont {Brand\~ao}}]{Faist_2019_thermodynamic}%
  \BibitemOpen
  \bibfield  {author} {\bibinfo {author} {\bibfnamefont {P.}~\bibnamefont {Faist}}, \bibinfo {author} {\bibfnamefont {M.}~\bibnamefont {Berta}}, \ and\ \bibinfo {author} {\bibfnamefont {F.}~\bibnamefont {Brand\~ao}},\ }\bibfield  {title} {\emph {\bibinfo {title} {Thermodynamic capacity of quantum processes},}\ }\href {http://dx.doi.org/10.1103/PhysRevLett.122.200601} {\bibfield  {journal} {\bibinfo  {journal} {Phys. Rev. Lett.}\ }\textbf {\bibinfo {volume} {122}},\ \bibinfo {pages} {200601} (\bibinfo {year} {2019})}\BibitemShut {NoStop}%
\bibitem [{\citenamefont {Shiraishi}\ and\ \citenamefont {Sagawa}(2021)}]{shiraishi_quantum_thermo}%
  \BibitemOpen
  \bibfield  {author} {\bibinfo {author} {\bibfnamefont {N.}~\bibnamefont {Shiraishi}}\ and\ \bibinfo {author} {\bibfnamefont {T.}~\bibnamefont {Sagawa}},\ }\bibfield  {title} {\emph {\bibinfo {title} {Quantum thermodynamics of correlated-catalytic state conversion at small scale},}\ }\href {http://dx.doi.org/10.1103/PhysRevLett.126.150502} {\bibfield  {journal} {\bibinfo  {journal} {Phys. Rev. Lett.}\ }\textbf {\bibinfo {volume} {126}},\ \bibinfo {pages} {150502} (\bibinfo {year} {2021})}\BibitemShut {NoStop}%
\bibitem [{\citenamefont {Sagawa}\ \emph {et~al.}(2021)\citenamefont {Sagawa}, \citenamefont {Faist}, \citenamefont {Kato}, \citenamefont {Matsumoto}, \citenamefont {Nagaoka},\ and\ \citenamefont {Brandão}}]{Sagawa2021asymptotic}%
  \BibitemOpen
  \bibfield  {author} {\bibinfo {author} {\bibfnamefont {T.}~\bibnamefont {Sagawa}}, \bibinfo {author} {\bibfnamefont {P.}~\bibnamefont {Faist}}, \bibinfo {author} {\bibfnamefont {K.}~\bibnamefont {Kato}}, \bibinfo {author} {\bibfnamefont {K.}~\bibnamefont {Matsumoto}}, \bibinfo {author} {\bibfnamefont {H.}~\bibnamefont {Nagaoka}}, \ and\ \bibinfo {author} {\bibfnamefont {F.~G. S.~L.}\ \bibnamefont {Brandão}},\ }\bibfield  {title} {\emph {\bibinfo {title} {Asymptotic reversibility of thermal operations for interacting quantum spin systems via generalized quantum stein’s lemma},}\ }\href {http://dx.doi.org/10.1088/1751-8121/ac333c} {\bibfield  {journal} {\bibinfo  {journal} {J. Phys. A: Math. Theor.}\ }\textbf {\bibinfo {volume} {54}},\ \bibinfo {pages} {495303} (\bibinfo {year} {2021})}\BibitemShut {NoStop}%
\bibitem [{\citenamefont {Liu}\ \emph {et~al.}(2019)\citenamefont {Liu}, \citenamefont {Bu},\ and\ \citenamefont {Takagi}}]{Liu_2019_one-shot}%
  \BibitemOpen
  \bibfield  {author} {\bibinfo {author} {\bibfnamefont {Z.-W.}\ \bibnamefont {Liu}}, \bibinfo {author} {\bibfnamefont {K.}~\bibnamefont {Bu}}, \ and\ \bibinfo {author} {\bibfnamefont {R.}~\bibnamefont {Takagi}},\ }\bibfield  {title} {\emph {\bibinfo {title} {One-shot operational quantum resource theory},}\ }\href {http://dx.doi.org/10.1103/PhysRevLett.123.020401} {\bibfield  {journal} {\bibinfo  {journal} {Phys. Rev. Lett.}\ }\textbf {\bibinfo {volume} {123}},\ \bibinfo {pages} {020401} (\bibinfo {year} {2019})}\BibitemShut {NoStop}%
\bibitem [{\citenamefont {Regula}\ \emph {et~al.}(2020)\citenamefont {Regula}, \citenamefont {Bu}, \citenamefont {Takagi},\ and\ \citenamefont {Liu}}]{Regula_2020_Benchmarking}%
  \BibitemOpen
  \bibfield  {author} {\bibinfo {author} {\bibfnamefont {B.}~\bibnamefont {Regula}}, \bibinfo {author} {\bibfnamefont {K.}~\bibnamefont {Bu}}, \bibinfo {author} {\bibfnamefont {R.}~\bibnamefont {Takagi}}, \ and\ \bibinfo {author} {\bibfnamefont {Z.-W.}\ \bibnamefont {Liu}},\ }\bibfield  {title} {\emph {\bibinfo {title} {Benchmarking one-shot distillation in general quantum resource theories},}\ }\href {http://dx.doi.org/10.1103/PhysRevA.101.062315} {\bibfield  {journal} {\bibinfo  {journal} {Phys. Rev. A}\ }\textbf {\bibinfo {volume} {101}},\ \bibinfo {pages} {062315} (\bibinfo {year} {2020})}\BibitemShut {NoStop}%
\bibitem [{\citenamefont {Takagi}\ and\ \citenamefont {Shiraishi}(2022)}]{Takagi_2022_correlation}%
  \BibitemOpen
  \bibfield  {author} {\bibinfo {author} {\bibfnamefont {R.}~\bibnamefont {Takagi}}\ and\ \bibinfo {author} {\bibfnamefont {N.}~\bibnamefont {Shiraishi}},\ }\bibfield  {title} {\emph {\bibinfo {title} {Correlation in catalysts enables arbitrary manipulation of quantum coherence},}\ }\href {http://dx.doi.org/10.1103/PhysRevLett.128.240501} {\bibfield  {journal} {\bibinfo  {journal} {Phys. Rev. Lett.}\ }\textbf {\bibinfo {volume} {128}},\ \bibinfo {pages} {240501} (\bibinfo {year} {2022})}\BibitemShut {NoStop}%
\bibitem [{\citenamefont {Shiraishi}\ and\ \citenamefont {Takagi}(2024)}]{Shiraishi_2024_Arbitrary}%
  \BibitemOpen
  \bibfield  {author} {\bibinfo {author} {\bibfnamefont {N.}~\bibnamefont {Shiraishi}}\ and\ \bibinfo {author} {\bibfnamefont {R.}~\bibnamefont {Takagi}},\ }\bibfield  {title} {\emph {\bibinfo {title} {Arbitrary amplification of quantum coherence in asymptotic and catalytic transformation},}\ }\href {http://dx.doi.org/10.1103/PhysRevLett.132.180202} {\bibfield  {journal} {\bibinfo  {journal} {Phys. Rev. Lett.}\ }\textbf {\bibinfo {volume} {132}},\ \bibinfo {pages} {180202} (\bibinfo {year} {2024})}\BibitemShut {NoStop}%
\bibitem [{\citenamefont {Tajima}\ \emph {et~al.}(2022)\citenamefont {Tajima}, \citenamefont {Takagi},\ and\ \citenamefont {Kuramochi}}]{tajima_2022_universal}%
  \BibitemOpen
  \bibfield  {author} {\bibinfo {author} {\bibfnamefont {H.}~\bibnamefont {Tajima}}, \bibinfo {author} {\bibfnamefont {R.}~\bibnamefont {Takagi}}, \ and\ \bibinfo {author} {\bibfnamefont {Y.}~\bibnamefont {Kuramochi}},\ }\href@noop {} {\emph {\bibinfo {title} {Universal trade-off structure between symmetry, irreversibility, and quantum coherence in quantum processes},}\ } (\bibinfo {year} {2022}),\ \Eprint {http://arxiv.org/abs/2206.11086} {arXiv:2206.11086 [quant-ph]} \BibitemShut {NoStop}%
\bibitem [{\citenamefont {Tajima}\ and\ \citenamefont {Takagi}(2024)}]{tajima_2024_gibbspreserving}%
  \BibitemOpen
  \bibfield  {author} {\bibinfo {author} {\bibfnamefont {H.}~\bibnamefont {Tajima}}\ and\ \bibinfo {author} {\bibfnamefont {R.}~\bibnamefont {Takagi}},\ }\href@noop {} {\emph {\bibinfo {title} {Gibbs-preserving operations requiring infinite amount of quantum coherence},}\ } (\bibinfo {year} {2024}),\ \Eprint {http://arxiv.org/abs/2404.03479} {arXiv:2404.03479 [quant-ph]} \BibitemShut {NoStop}%
\bibitem [{\citenamefont {\ifmmode \acute{C}\else \'{C}\fi{}wikli\ifmmode~\acute{n}\else \'{n}\fi{}ski}\ \emph {et~al.}(2015)\citenamefont {\ifmmode \acute{C}\else \'{C}\fi{}wikli\ifmmode~\acute{n}\else \'{n}\fi{}ski}, \citenamefont {Studzi\ifmmode~\acute{n}\else \'{n}\fi{}ski}, \citenamefont {Horodecki},\ and\ \citenamefont {Oppenheim}}]{Cwiklinski_limitations_on}%
  \BibitemOpen
  \bibfield  {author} {\bibinfo {author} {\bibfnamefont {P.}~\bibnamefont {\ifmmode \acute{C}\else \'{C}\fi{}wikli\ifmmode~\acute{n}\else \'{n}\fi{}ski}}, \bibinfo {author} {\bibfnamefont {M.}~\bibnamefont {Studzi\ifmmode~\acute{n}\else \'{n}\fi{}ski}}, \bibinfo {author} {\bibfnamefont {M.}~\bibnamefont {Horodecki}}, \ and\ \bibinfo {author} {\bibfnamefont {J.}~\bibnamefont {Oppenheim}},\ }\bibfield  {title} {\emph {\bibinfo {title} {Limitations on the evolution of quantum coherences: Towards fully quantum second laws of thermodynamics},}\ }\href {http://dx.doi.org/10.1103/PhysRevLett.115.210403} {\bibfield  {journal} {\bibinfo  {journal} {Phys. Rev. Lett.}\ }\textbf {\bibinfo {volume} {115}},\ \bibinfo {pages} {210403} (\bibinfo {year} {2015})}\BibitemShut {NoStop}%
\bibitem [{\citenamefont {Gour}\ \emph {et~al.}(2018)\citenamefont {Gour}, \citenamefont {Jennings}, \citenamefont {Buscemi}, \citenamefont {Duan},\ and\ \citenamefont {Marvian}}]{Gour2018quantum}%
  \BibitemOpen
  \bibfield  {author} {\bibinfo {author} {\bibfnamefont {G.}~\bibnamefont {Gour}}, \bibinfo {author} {\bibfnamefont {D.}~\bibnamefont {Jennings}}, \bibinfo {author} {\bibfnamefont {F.}~\bibnamefont {Buscemi}}, \bibinfo {author} {\bibfnamefont {R.}~\bibnamefont {Duan}}, \ and\ \bibinfo {author} {\bibfnamefont {I.}~\bibnamefont {Marvian}},\ }\bibfield  {title} {\emph {\bibinfo {title} {Quantum majorization and a complete set of entropic conditions for quantum thermodynamics},}\ }\href {http://dx.doi.org/10.1038/s41467-018-06261-7} {\bibfield  {journal} {\bibinfo  {journal} {Nat. Commun.}\ }\textbf {\bibinfo {volume} {9}},\ \bibinfo {pages} {5352} (\bibinfo {year} {2018})}\BibitemShut {NoStop}%
\bibitem [{\citenamefont {Shiraishi}(2024)}]{Shiraishi2024quantumthermodynamicscoherencecovariant}%
  \BibitemOpen
  \bibfield  {author} {\bibinfo {author} {\bibfnamefont {N.}~\bibnamefont {Shiraishi}},\ }\href@noop {} {\emph {\bibinfo {title} {Quantum thermodynamics with coherence: Covariant gibbs-preserving operation is characterized by the free energy},}\ } (\bibinfo {year} {2024}),\ \Eprint {http://arxiv.org/abs/2406.06234} {arXiv:2406.06234 [quant-ph]} \BibitemShut {NoStop}%
\bibitem [{not()}]{note_for_reference}%
  \BibitemOpen
  \href@noop {} {}\bibinfo {note} {See Supplemental Material for detailed proofs and discussions of our main results, which includes Refs.~\cite{sion_1958_general_minimax_theorems,eisert_2003_remarks,Rubboli_private_communication,Rubboli_2024_new,Rubboli_2023_mixed_state_additivity,noauthor_probabilistic_nodate,Piani_2009_relative_entropy_of, Cover_Thomas, Wang_2019_resource, Lostagilo_Quantum_coherence_time_translation_symmetry, O'Donnell2016efficient, Takagi_2019_general_resource, Gour_2019_how_to, Liu_2019_resource_theories, Liu_2020_operational_resource, Regula_2021_fundamental_limitations, Fang_2022_no-go_theorems, Kuroiwa_2024_robustness, Bennett_teleporting_unknown, Bennett_communication_via, Bennett_1996_concentrating, Bravyi_2005_universal, Winter_2016_operational, Marvian_2020_coherence, Wang_2019_quantifying, Gour_2020_dynamical_entanglement, Takagi_2020_application, Kim_2021_One-shot_manipulation, Takagi_2022_One-shot_yield-cost_relation, Vidal_robustness_of_entanglement,
  Datta_2009_min_and_max_relative}}\BibitemShut {NoStop}%
\bibitem [{\citenamefont {Sion}(1958)}]{sion_1958_general_minimax_theorems}%
  \BibitemOpen
  \bibfield  {author} {\bibinfo {author} {\bibfnamefont {M.}~\bibnamefont {Sion}},\ }\bibfield  {title} {\emph {\bibinfo {title} {On general minimax theorems.}}\ }\href {https://projecteuclid.org/journals/pacific-journal-of-mathematics/volume-8/issue-1/On-general-minimax-theorems/pjm/1103040253.full} {\bibfield  {journal} {\bibinfo  {journal} {Pac. J. Math.}\ }\textbf {\bibinfo {volume} {8}},\ \bibinfo {pages} {171} (\bibinfo {year} {1958})}\BibitemShut {NoStop}%
\bibitem [{\citenamefont {Eisert}\ \emph {et~al.}(2003)\citenamefont {Eisert}, \citenamefont {Audenaert},\ and\ \citenamefont {Plenio}}]{eisert_2003_remarks}%
  \BibitemOpen
  \bibfield  {author} {\bibinfo {author} {\bibfnamefont {J.}~\bibnamefont {Eisert}}, \bibinfo {author} {\bibfnamefont {K.}~\bibnamefont {Audenaert}}, \ and\ \bibinfo {author} {\bibfnamefont {M.~B.}\ \bibnamefont {Plenio}},\ }\bibfield  {title} {\emph {\bibinfo {title} {Remarks on entanglement measures and non-local state distinguishability},}\ }\href {http://dx.doi.org/10.1088/0305-4470/36/20/316} {\bibfield  {journal} {\bibinfo  {journal} {J. Phys. A}\ }\textbf {\bibinfo {volume} {36}},\ \bibinfo {pages} {5605} (\bibinfo {year} {2003})}\BibitemShut {NoStop}%
\bibitem [{\citenamefont {Rubboli}(2024)}]{Rubboli_private_communication}%
  \BibitemOpen
  \bibfield  {author} {\bibinfo {author} {\bibfnamefont {R.}~\bibnamefont {Rubboli}},\ }\href@noop {} {}\bibinfo {howpublished} {private communication} (\bibinfo {year} {2024})\BibitemShut {NoStop}%
\bibitem [{\citenamefont {Rubboli}\ and\ \citenamefont {Tomamichel}(2024)}]{Rubboli_2024_new}%
  \BibitemOpen
  \bibfield  {author} {\bibinfo {author} {\bibfnamefont {R.}~\bibnamefont {Rubboli}}\ and\ \bibinfo {author} {\bibfnamefont {M.}~\bibnamefont {Tomamichel}},\ }\bibfield  {title} {\emph {\bibinfo {title} {New {Additivity} {Properties} of the {Relative} {Entropy} of {Entanglement} and {Its} {Generalizations}},}\ }\href {http://dx.doi.org/10.1007/s00220-024-05025-3} {\bibfield  {journal} {\bibinfo  {journal} {Commun. Math. Phys.}\ }\textbf {\bibinfo {volume} {405}},\ \bibinfo {pages} {162} (\bibinfo {year} {2024})}\BibitemShut {NoStop}%
\bibitem [{\citenamefont {Rubboli}\ \emph {et~al.}(2024)\citenamefont {Rubboli}, \citenamefont {Takagi},\ and\ \citenamefont {Tomamichel}}]{Rubboli_2023_mixed_state_additivity}%
  \BibitemOpen
  \bibfield  {author} {\bibinfo {author} {\bibfnamefont {R.}~\bibnamefont {Rubboli}}, \bibinfo {author} {\bibfnamefont {R.}~\bibnamefont {Takagi}}, \ and\ \bibinfo {author} {\bibfnamefont {M.}~\bibnamefont {Tomamichel}},\ }\bibfield  {title} {\emph {\bibinfo {title} {Mixed-state additivity properties of magic monotones based on quantum relative entropies for single-qubit states and beyond},}\ }\href {http://dx.doi.org/10.22331/q-2024-10-04-1492} {\bibfield  {journal} {\bibinfo  {journal} {{Quantum}}\ }\textbf {\bibinfo {volume} {8}},\ \bibinfo {pages} {1492} (\bibinfo {year} {2024})}\BibitemShut {NoStop}%
\bibitem [{\citenamefont {Holevo}(2011)}]{noauthor_probabilistic_nodate}%
  \BibitemOpen
  \bibfield  {author} {\bibinfo {author} {\bibfnamefont {A.~S.}\ \bibnamefont {Holevo}},\ }\href {https://link.springer.com/book/10.1007/978-88-7642-378-9} {\emph {\bibinfo {title} {Probabilistic and {Statistical} {Aspects} of {Quantum} {Theory}}}}\ (\bibinfo  {publisher} {Springer Science \& Business Media},\ \bibinfo {year} {2011})\BibitemShut {NoStop}%
\bibitem [{\citenamefont {Piani}(2009)}]{Piani_2009_relative_entropy_of}%
  \BibitemOpen
  \bibfield  {author} {\bibinfo {author} {\bibfnamefont {M.}~\bibnamefont {Piani}},\ }\bibfield  {title} {\emph {\bibinfo {title} {Relative entropy of entanglement and restricted measurements},}\ }\href {http://dx.doi.org/10.1103/PhysRevLett.103.160504} {\bibfield  {journal} {\bibinfo  {journal} {Phys. Rev. Lett.}\ }\textbf {\bibinfo {volume} {103}},\ \bibinfo {pages} {160504} (\bibinfo {year} {2009})}\BibitemShut {NoStop}%
\bibitem [{\citenamefont {Cover}\ and\ \citenamefont {Thomas}(2006)}]{Cover_Thomas}%
  \BibitemOpen
  \bibfield  {author} {\bibinfo {author} {\bibfnamefont {T.~M.}\ \bibnamefont {Cover}}\ and\ \bibinfo {author} {\bibfnamefont {J.~A.}\ \bibnamefont {Thomas}},\ }\href@noop {} {\emph {\bibinfo {title} {Elements of Information Theory (Wiley Series in Telecommunications and Signal Processing)}}}\ (\bibinfo  {publisher} {Wiley-Interscience},\ \bibinfo {address} {USA},\ \bibinfo {year} {2006})\BibitemShut {NoStop}%
\bibitem [{\citenamefont {Wang}\ and\ \citenamefont {Wilde}(2019{\natexlab{b}})}]{Wang_2019_resource}%
  \BibitemOpen
  \bibfield  {author} {\bibinfo {author} {\bibfnamefont {X.}~\bibnamefont {Wang}}\ and\ \bibinfo {author} {\bibfnamefont {M.~M.}\ \bibnamefont {Wilde}},\ }\bibfield  {title} {\emph {\bibinfo {title} {Resource theory of asymmetric distinguishability for quantum channels},}\ }\href {http://dx.doi.org/10.1103/PhysRevResearch.1.033169} {\bibfield  {journal} {\bibinfo  {journal} {Phys. Rev. Res.}\ }\textbf {\bibinfo {volume} {1}},\ \bibinfo {pages} {033169} (\bibinfo {year} {2019}{\natexlab{b}})}\BibitemShut {NoStop}%
\bibitem [{\citenamefont {Lostaglio}\ \emph {et~al.}(2015)\citenamefont {Lostaglio}, \citenamefont {Korzekwa}, \citenamefont {Jennings},\ and\ \citenamefont {Rudolph}}]{Lostagilo_Quantum_coherence_time_translation_symmetry}%
  \BibitemOpen
  \bibfield  {author} {\bibinfo {author} {\bibfnamefont {M.}~\bibnamefont {Lostaglio}}, \bibinfo {author} {\bibfnamefont {K.}~\bibnamefont {Korzekwa}}, \bibinfo {author} {\bibfnamefont {D.}~\bibnamefont {Jennings}}, \ and\ \bibinfo {author} {\bibfnamefont {T.}~\bibnamefont {Rudolph}},\ }\bibfield  {title} {\emph {\bibinfo {title} {Quantum coherence, time-translation symmetry, and thermodynamics},}\ }\href {http://dx.doi.org/10.1103/PhysRevX.5.021001} {\bibfield  {journal} {\bibinfo  {journal} {Phys. Rev. X}\ }\textbf {\bibinfo {volume} {5}},\ \bibinfo {pages} {021001} (\bibinfo {year} {2015})}\BibitemShut {NoStop}%
\bibitem [{\citenamefont {O'Donnell}\ and\ \citenamefont {Wright}(2016)}]{O'Donnell2016efficient}%
  \BibitemOpen
  \bibfield  {author} {\bibinfo {author} {\bibfnamefont {R.}~\bibnamefont {O'Donnell}}\ and\ \bibinfo {author} {\bibfnamefont {J.}~\bibnamefont {Wright}},\ }\bibfield  {title} {\emph {\bibinfo {title} {Efficient quantum tomography},}\ }in\ \href {http://dx.doi.org/10.1145/2897518.2897544} {\emph {\bibinfo {booktitle} {Proceedings of the forty-eighth annual {ACM} symposium on {Theory} of {Computing}}}}\ (\bibinfo  {publisher} {ACM},\ \bibinfo {address} {Cambridge MA USA},\ \bibinfo {year} {2016})\ pp.\ \bibinfo {pages} {899--912}\BibitemShut {NoStop}%
\bibitem [{\citenamefont {Takagi}\ and\ \citenamefont {Regula}(2019)}]{Takagi_2019_general_resource}%
  \BibitemOpen
  \bibfield  {author} {\bibinfo {author} {\bibfnamefont {R.}~\bibnamefont {Takagi}}\ and\ \bibinfo {author} {\bibfnamefont {B.}~\bibnamefont {Regula}},\ }\bibfield  {title} {\emph {\bibinfo {title} {General resource theories in quantum mechanics and beyond: Operational characterization via discrimination tasks},}\ }\href {http://dx.doi.org/10.1103/PhysRevX.9.031053} {\bibfield  {journal} {\bibinfo  {journal} {Phys. Rev. X}\ }\textbf {\bibinfo {volume} {9}},\ \bibinfo {pages} {031053} (\bibinfo {year} {2019})}\BibitemShut {NoStop}%
\bibitem [{\citenamefont {Gour}\ and\ \citenamefont {Winter}(2019)}]{Gour_2019_how_to}%
  \BibitemOpen
  \bibfield  {author} {\bibinfo {author} {\bibfnamefont {G.}~\bibnamefont {Gour}}\ and\ \bibinfo {author} {\bibfnamefont {A.}~\bibnamefont {Winter}},\ }\bibfield  {title} {\emph {\bibinfo {title} {How to quantify a dynamical quantum resource},}\ }\href {http://dx.doi.org/10.1103/PhysRevLett.123.150401} {\bibfield  {journal} {\bibinfo  {journal} {Phys. Rev. Lett.}\ }\textbf {\bibinfo {volume} {123}},\ \bibinfo {pages} {150401} (\bibinfo {year} {2019})}\BibitemShut {NoStop}%
\bibitem [{\citenamefont {Liu}\ and\ \citenamefont {Winter}(2019)}]{Liu_2019_resource_theories}%
  \BibitemOpen
  \bibfield  {author} {\bibinfo {author} {\bibfnamefont {Z.-W.}\ \bibnamefont {Liu}}\ and\ \bibinfo {author} {\bibfnamefont {A.}~\bibnamefont {Winter}},\ }\href@noop {} {\emph {\bibinfo {title} {Resource theories of quantum channels and the universal role of resource erasure},}\ } (\bibinfo {year} {2019}),\ \Eprint {http://arxiv.org/abs/1904.04201} {arXiv:1904.04201 [quant-ph]} \BibitemShut {NoStop}%
\bibitem [{\citenamefont {Liu}\ and\ \citenamefont {Yuan}(2020)}]{Liu_2020_operational_resource}%
  \BibitemOpen
  \bibfield  {author} {\bibinfo {author} {\bibfnamefont {Y.}~\bibnamefont {Liu}}\ and\ \bibinfo {author} {\bibfnamefont {X.}~\bibnamefont {Yuan}},\ }\bibfield  {title} {\emph {\bibinfo {title} {Operational resource theory of quantum channels},}\ }\href {http://dx.doi.org/10.1103/PhysRevResearch.2.012035} {\bibfield  {journal} {\bibinfo  {journal} {Phys. Rev. Res.}\ }\textbf {\bibinfo {volume} {2}},\ \bibinfo {pages} {012035} (\bibinfo {year} {2020})}\BibitemShut {NoStop}%
\bibitem [{\citenamefont {Regula}\ and\ \citenamefont {Takagi}(2021{\natexlab{a}})}]{Regula_2021_fundamental_limitations}%
  \BibitemOpen
  \bibfield  {author} {\bibinfo {author} {\bibfnamefont {B.}~\bibnamefont {Regula}}\ and\ \bibinfo {author} {\bibfnamefont {R.}~\bibnamefont {Takagi}},\ }\bibfield  {title} {\emph {\bibinfo {title} {Fundamental limitations on distillation of quantum channel resources},}\ }\href {http://dx.doi.org/10.1038/s41467-021-24699-0} {\bibfield  {journal} {\bibinfo  {journal} {Nat. Commun.}\ }\textbf {\bibinfo {volume} {12}},\ \bibinfo {pages} {4411} (\bibinfo {year} {2021}{\natexlab{a}})}\BibitemShut {NoStop}%
\bibitem [{\citenamefont {Fang}\ and\ \citenamefont {Liu}(2022)}]{Fang_2022_no-go_theorems}%
  \BibitemOpen
  \bibfield  {author} {\bibinfo {author} {\bibfnamefont {K.}~\bibnamefont {Fang}}\ and\ \bibinfo {author} {\bibfnamefont {Z.-W.}\ \bibnamefont {Liu}},\ }\bibfield  {title} {\emph {\bibinfo {title} {No-go theorems for quantum resource purification: New approach and channel theory},}\ }\href {http://dx.doi.org/10.1103/PRXQuantum.3.010337} {\bibfield  {journal} {\bibinfo  {journal} {PRX Quantum}\ }\textbf {\bibinfo {volume} {3}},\ \bibinfo {pages} {010337} (\bibinfo {year} {2022})}\BibitemShut {NoStop}%
\bibitem [{\citenamefont {Kuroiwa}\ \emph {et~al.}(2024)\citenamefont {Kuroiwa}, \citenamefont {Takagi}, \citenamefont {Adesso},\ and\ \citenamefont {Yamasaki}}]{Kuroiwa_2024_robustness}%
  \BibitemOpen
  \bibfield  {author} {\bibinfo {author} {\bibfnamefont {K.}~\bibnamefont {Kuroiwa}}, \bibinfo {author} {\bibfnamefont {R.}~\bibnamefont {Takagi}}, \bibinfo {author} {\bibfnamefont {G.}~\bibnamefont {Adesso}}, \ and\ \bibinfo {author} {\bibfnamefont {H.}~\bibnamefont {Yamasaki}},\ }\bibfield  {title} {\emph {\bibinfo {title} {Robustness- and weight-based resource measures without convexity restriction: Multicopy witness and operational advantage in static and dynamical quantum resource theories},}\ }\href {http://dx.doi.org/10.1103/PhysRevA.109.042403} {\bibfield  {journal} {\bibinfo  {journal} {Phys. Rev. A}\ }\textbf {\bibinfo {volume} {109}},\ \bibinfo {pages} {042403} (\bibinfo {year} {2024})}\BibitemShut {NoStop}%
\bibitem [{\citenamefont {Bennett}\ \emph {et~al.}(1993)\citenamefont {Bennett}, \citenamefont {Brassard}, \citenamefont {Cr\'epeau}, \citenamefont {Jozsa}, \citenamefont {Peres},\ and\ \citenamefont {Wootters}}]{Bennett_teleporting_unknown}%
  \BibitemOpen
  \bibfield  {author} {\bibinfo {author} {\bibfnamefont {C.~H.}\ \bibnamefont {Bennett}}, \bibinfo {author} {\bibfnamefont {G.}~\bibnamefont {Brassard}}, \bibinfo {author} {\bibfnamefont {C.}~\bibnamefont {Cr\'epeau}}, \bibinfo {author} {\bibfnamefont {R.}~\bibnamefont {Jozsa}}, \bibinfo {author} {\bibfnamefont {A.}~\bibnamefont {Peres}}, \ and\ \bibinfo {author} {\bibfnamefont {W.~K.}\ \bibnamefont {Wootters}},\ }\bibfield  {title} {\emph {\bibinfo {title} {Teleporting an unknown quantum state via dual classical and einstein-podolsky-rosen channels},}\ }\href {http://dx.doi.org/10.1103/PhysRevLett.70.1895} {\bibfield  {journal} {\bibinfo  {journal} {Phys. Rev. Lett.}\ }\textbf {\bibinfo {volume} {70}},\ \bibinfo {pages} {1895} (\bibinfo {year} {1993})}\BibitemShut {NoStop}%
\bibitem [{\citenamefont {Bennett}\ and\ \citenamefont {Wiesner}(1992)}]{Bennett_communication_via}%
  \BibitemOpen
  \bibfield  {author} {\bibinfo {author} {\bibfnamefont {C.~H.}\ \bibnamefont {Bennett}}\ and\ \bibinfo {author} {\bibfnamefont {S.~J.}\ \bibnamefont {Wiesner}},\ }\bibfield  {title} {\emph {\bibinfo {title} {Communication via one- and two-particle operators on einstein-podolsky-rosen states},}\ }\href {http://dx.doi.org/10.1103/PhysRevLett.69.2881} {\bibfield  {journal} {\bibinfo  {journal} {Phys. Rev. Lett.}\ }\textbf {\bibinfo {volume} {69}},\ \bibinfo {pages} {2881} (\bibinfo {year} {1992})}\BibitemShut {NoStop}%
\bibitem [{\citenamefont {Bennett}\ \emph {et~al.}(1996)\citenamefont {Bennett}, \citenamefont {Bernstein}, \citenamefont {Popescu},\ and\ \citenamefont {Schumacher}}]{Bennett_1996_concentrating}%
  \BibitemOpen
  \bibfield  {author} {\bibinfo {author} {\bibfnamefont {C.~H.}\ \bibnamefont {Bennett}}, \bibinfo {author} {\bibfnamefont {H.~J.}\ \bibnamefont {Bernstein}}, \bibinfo {author} {\bibfnamefont {S.}~\bibnamefont {Popescu}}, \ and\ \bibinfo {author} {\bibfnamefont {B.}~\bibnamefont {Schumacher}},\ }\bibfield  {title} {\emph {\bibinfo {title} {Concentrating partial entanglement by local operations},}\ }\href {http://dx.doi.org/10.1103/PhysRevA.53.2046} {\bibfield  {journal} {\bibinfo  {journal} {Phys. Rev. A}\ }\textbf {\bibinfo {volume} {53}},\ \bibinfo {pages} {2046} (\bibinfo {year} {1996})}\BibitemShut {NoStop}%
\bibitem [{\citenamefont {Bravyi}\ and\ \citenamefont {Kitaev}(2005)}]{Bravyi_2005_universal}%
  \BibitemOpen
  \bibfield  {author} {\bibinfo {author} {\bibfnamefont {S.}~\bibnamefont {Bravyi}}\ and\ \bibinfo {author} {\bibfnamefont {A.}~\bibnamefont {Kitaev}},\ }\bibfield  {title} {\emph {\bibinfo {title} {Universal quantum computation with ideal clifford gates and noisy ancillas},}\ }\href {http://dx.doi.org/10.1103/PhysRevA.71.022316} {\bibfield  {journal} {\bibinfo  {journal} {Phys. Rev. A}\ }\textbf {\bibinfo {volume} {71}},\ \bibinfo {pages} {022316} (\bibinfo {year} {2005})}\BibitemShut {NoStop}%
\bibitem [{\citenamefont {Winter}\ and\ \citenamefont {Yang}(2016)}]{Winter_2016_operational}%
  \BibitemOpen
  \bibfield  {author} {\bibinfo {author} {\bibfnamefont {A.}~\bibnamefont {Winter}}\ and\ \bibinfo {author} {\bibfnamefont {D.}~\bibnamefont {Yang}},\ }\bibfield  {title} {\emph {\bibinfo {title} {Operational resource theory of coherence},}\ }\href {http://dx.doi.org/10.1103/PhysRevLett.116.120404} {\bibfield  {journal} {\bibinfo  {journal} {Phys. Rev. Lett.}\ }\textbf {\bibinfo {volume} {116}},\ \bibinfo {pages} {120404} (\bibinfo {year} {2016})}\BibitemShut {NoStop}%
\bibitem [{\citenamefont {Marvian}(2020)}]{Marvian_2020_coherence}%
  \BibitemOpen
  \bibfield  {author} {\bibinfo {author} {\bibfnamefont {I.}~\bibnamefont {Marvian}},\ }\bibfield  {title} {\emph {\bibinfo {title} {Coherence distillation machines are impossible in quantum thermodynamics},}\ }\href {http://dx.doi.org/10.1038/s41467-019-13846-3} {\bibfield  {journal} {\bibinfo  {journal} {Nat. Commun.}\ }\textbf {\bibinfo {volume} {11}},\ \bibinfo {pages} {25} (\bibinfo {year} {2020})}\BibitemShut {NoStop}%
\bibitem [{\citenamefont {Wang}\ \emph {et~al.}(2019)\citenamefont {Wang}, \citenamefont {Wilde},\ and\ \citenamefont {Su}}]{Wang_2019_quantifying}%
  \BibitemOpen
  \bibfield  {author} {\bibinfo {author} {\bibfnamefont {X.}~\bibnamefont {Wang}}, \bibinfo {author} {\bibfnamefont {M.~M.}\ \bibnamefont {Wilde}}, \ and\ \bibinfo {author} {\bibfnamefont {Y.}~\bibnamefont {Su}},\ }\bibfield  {title} {\emph {\bibinfo {title} {Quantifying the magic of quantum channels},}\ }\href {http://dx.doi.org/10.1088/1367-2630/ab451d} {\bibfield  {journal} {\bibinfo  {journal} {New J. Phys.}\ }\textbf {\bibinfo {volume} {21}},\ \bibinfo {pages} {103002} (\bibinfo {year} {2019})}\BibitemShut {NoStop}%
\bibitem [{\citenamefont {Gour}\ and\ \citenamefont {Scandolo}(2020)}]{Gour_2020_dynamical_entanglement}%
  \BibitemOpen
  \bibfield  {author} {\bibinfo {author} {\bibfnamefont {G.}~\bibnamefont {Gour}}\ and\ \bibinfo {author} {\bibfnamefont {C.~M.}\ \bibnamefont {Scandolo}},\ }\bibfield  {title} {\emph {\bibinfo {title} {Dynamical entanglement},}\ }\href {http://dx.doi.org/10.1103/PhysRevLett.125.180505} {\bibfield  {journal} {\bibinfo  {journal} {Phys. Rev. Lett.}\ }\textbf {\bibinfo {volume} {125}},\ \bibinfo {pages} {180505} (\bibinfo {year} {2020})}\BibitemShut {NoStop}%
\bibitem [{\citenamefont {Takagi}\ \emph {et~al.}(2020)\citenamefont {Takagi}, \citenamefont {Wang},\ and\ \citenamefont {Hayashi}}]{Takagi_2020_application}%
  \BibitemOpen
  \bibfield  {author} {\bibinfo {author} {\bibfnamefont {R.}~\bibnamefont {Takagi}}, \bibinfo {author} {\bibfnamefont {K.}~\bibnamefont {Wang}}, \ and\ \bibinfo {author} {\bibfnamefont {M.}~\bibnamefont {Hayashi}},\ }\bibfield  {title} {\emph {\bibinfo {title} {Application of the resource theory of channels to communication scenarios},}\ }\href {http://dx.doi.org/10.1103/PhysRevLett.124.120502} {\bibfield  {journal} {\bibinfo  {journal} {Phys. Rev. Lett.}\ }\textbf {\bibinfo {volume} {124}},\ \bibinfo {pages} {120502} (\bibinfo {year} {2020})}\BibitemShut {NoStop}%
\bibitem [{\citenamefont {Kim}\ \emph {et~al.}(2021)\citenamefont {Kim}, \citenamefont {Lee}, \citenamefont {Lami},\ and\ \citenamefont {Plenio}}]{Kim_2021_One-shot_manipulation}%
  \BibitemOpen
  \bibfield  {author} {\bibinfo {author} {\bibfnamefont {H.-J.}\ \bibnamefont {Kim}}, \bibinfo {author} {\bibfnamefont {S.}~\bibnamefont {Lee}}, \bibinfo {author} {\bibfnamefont {L.}~\bibnamefont {Lami}}, \ and\ \bibinfo {author} {\bibfnamefont {M.~B.}\ \bibnamefont {Plenio}},\ }\bibfield  {title} {\emph {\bibinfo {title} {One-shot manipulation of entanglement for quantum channels},}\ }\href {http://dx.doi.org/10.1109/TIT.2021.3079938} {\bibfield  {journal} {\bibinfo  {journal} {IEEE Trans. Inf. Theory}\ }\textbf {\bibinfo {volume} {67}},\ \bibinfo {pages} {5339} (\bibinfo {year} {2021})}\BibitemShut {NoStop}%
\bibitem [{\citenamefont {Takagi}\ \emph {et~al.}(2022)\citenamefont {Takagi}, \citenamefont {Regula},\ and\ \citenamefont {Wilde}}]{Takagi_2022_One-shot_yield-cost_relation}%
  \BibitemOpen
  \bibfield  {author} {\bibinfo {author} {\bibfnamefont {R.}~\bibnamefont {Takagi}}, \bibinfo {author} {\bibfnamefont {B.}~\bibnamefont {Regula}}, \ and\ \bibinfo {author} {\bibfnamefont {M.~M.}\ \bibnamefont {Wilde}},\ }\bibfield  {title} {\emph {\bibinfo {title} {One-shot yield-cost relations in general quantum resource theories},}\ }\href {http://dx.doi.org/10.1103/PRXQuantum.3.010348} {\bibfield  {journal} {\bibinfo  {journal} {PRX Quantum}\ }\textbf {\bibinfo {volume} {3}},\ \bibinfo {pages} {010348} (\bibinfo {year} {2022})}\BibitemShut {NoStop}%
\bibitem [{\citenamefont {Vidal}\ and\ \citenamefont {Tarrach}(1999)}]{Vidal_robustness_of_entanglement}%
  \BibitemOpen
  \bibfield  {author} {\bibinfo {author} {\bibfnamefont {G.}~\bibnamefont {Vidal}}\ and\ \bibinfo {author} {\bibfnamefont {R.}~\bibnamefont {Tarrach}},\ }\bibfield  {title} {\emph {\bibinfo {title} {Robustness of entanglement},}\ }\href {http://dx.doi.org/10.1103/PhysRevA.59.141} {\bibfield  {journal} {\bibinfo  {journal} {Phys. Rev. A}\ }\textbf {\bibinfo {volume} {59}},\ \bibinfo {pages} {141} (\bibinfo {year} {1999})}\BibitemShut {NoStop}%
\bibitem [{\citenamefont {Datta}(2009)}]{Datta_2009_min_and_max_relative}%
  \BibitemOpen
  \bibfield  {author} {\bibinfo {author} {\bibfnamefont {N.}~\bibnamefont {Datta}},\ }\bibfield  {title} {\emph {\bibinfo {title} {Min- and max-relative entropies and a new entanglement monotone},}\ }\href {http://dx.doi.org/10.1109/TIT.2009.2018325} {\bibfield  {journal} {\bibinfo  {journal} {IEEE Trans. Inf. Theory}\ }\textbf {\bibinfo {volume} {55}},\ \bibinfo {pages} {2816} (\bibinfo {year} {2009})}\BibitemShut {NoStop}%
\bibitem [{\citenamefont {Mosonyi}\ \emph {et~al.}(2022)\citenamefont {Mosonyi}, \citenamefont {Szilágyi},\ and\ \citenamefont {Weiner}}]{Mosonyi2022on}%
  \BibitemOpen
  \bibfield  {author} {\bibinfo {author} {\bibfnamefont {M.}~\bibnamefont {Mosonyi}}, \bibinfo {author} {\bibfnamefont {Z.}~\bibnamefont {Szilágyi}}, \ and\ \bibinfo {author} {\bibfnamefont {M.}~\bibnamefont {Weiner}},\ }\bibfield  {title} {\emph {\bibinfo {title} {On the error exponents of binary state discrimination with composite hypotheses},}\ }\href {http://dx.doi.org/10.1109/TIT.2021.3125683} {\bibfield  {journal} {\bibinfo  {journal} {IEEE Trans. Inf. Theory}\ }\textbf {\bibinfo {volume} {68}},\ \bibinfo {pages} {1032} (\bibinfo {year} {2022})}\BibitemShut {NoStop}%
\bibitem [{\citenamefont {{Berta}}\ \emph {et~al.}(2023)\citenamefont {{Berta}}, \citenamefont {{Brand{\~a}o}}, \citenamefont {{Gour}}, \citenamefont {{Lami}}, \citenamefont {{Plenio}}, \citenamefont {{Regula}},\ and\ \citenamefont {{Tomamichel}}}]{Berta2022_on_a_gap}%
  \BibitemOpen
  \bibfield  {author} {\bibinfo {author} {\bibfnamefont {M.}~\bibnamefont {{Berta}}}, \bibinfo {author} {\bibfnamefont {F.~G.~S.~L.}\ \bibnamefont {{Brand{\~a}o}}}, \bibinfo {author} {\bibfnamefont {G.}~\bibnamefont {{Gour}}}, \bibinfo {author} {\bibfnamefont {L.}~\bibnamefont {{Lami}}}, \bibinfo {author} {\bibfnamefont {M.~B.}\ \bibnamefont {{Plenio}}}, \bibinfo {author} {\bibfnamefont {B.}~\bibnamefont {{Regula}}}, \ and\ \bibinfo {author} {\bibfnamefont {M.}~\bibnamefont {{Tomamichel}}},\ }\bibfield  {title} {\emph {\bibinfo {title} {{On a gap in the proof of the generalised quantum Stein's lemma and its consequences for the reversibility of quantum resources}},}\ }\href {http://dx.doi.org/10.22331/q-2023-09-07-1103} {\bibfield  {journal} {\bibinfo  {journal} {Quantum}\ }\textbf {\bibinfo {volume} {7}},\ \bibinfo {pages} {1103} (\bibinfo {year} {2023})}\BibitemShut {NoStop}%
\bibitem [{\citenamefont {Faist}\ \emph {et~al.}(2015)\citenamefont {Faist}, \citenamefont {Oppenheim},\ and\ \citenamefont {Renner}}]{Faist2015Gibbs-preserving}%
  \BibitemOpen
  \bibfield  {author} {\bibinfo {author} {\bibfnamefont {P.}~\bibnamefont {Faist}}, \bibinfo {author} {\bibfnamefont {J.}~\bibnamefont {Oppenheim}}, \ and\ \bibinfo {author} {\bibfnamefont {R.}~\bibnamefont {Renner}},\ }\bibfield  {title} {\emph {\bibinfo {title} {Gibbs-preserving maps outperform thermal operations in the quantum regime},}\ }\href {http://dx.doi.org/10.1088/1367-2630/17/4/043003} {\bibfield  {journal} {\bibinfo  {journal} {New J. Phys.}\ }\textbf {\bibinfo {volume} {17}},\ \bibinfo {pages} {043003} (\bibinfo {year} {2015})}\BibitemShut {NoStop}%
\bibitem [{\citenamefont {Lipka-Bartosik}\ \emph {et~al.}(2024)\citenamefont {Lipka-Bartosik}, \citenamefont {Chubb}, \citenamefont {Renes}, \citenamefont {Tomamichel},\ and\ \citenamefont {Korzekwa}}]{Lipka_Bartosik_Quantum_dichotomies}%
  \BibitemOpen
  \bibfield  {author} {\bibinfo {author} {\bibfnamefont {P.}~\bibnamefont {Lipka-Bartosik}}, \bibinfo {author} {\bibfnamefont {C.~T.}\ \bibnamefont {Chubb}}, \bibinfo {author} {\bibfnamefont {J.~M.}\ \bibnamefont {Renes}}, \bibinfo {author} {\bibfnamefont {M.}~\bibnamefont {Tomamichel}}, \ and\ \bibinfo {author} {\bibfnamefont {K.}~\bibnamefont {Korzekwa}},\ }\bibfield  {title} {\emph {\bibinfo {title} {Quantum dichotomies and coherent thermodynamics beyond first-order asymptotics},}\ }\href {http://dx.doi.org/10.1103/PRXQuantum.5.020335} {\bibfield  {journal} {\bibinfo  {journal} {PRX Quantum}\ }\textbf {\bibinfo {volume} {5}},\ \bibinfo {pages} {020335} (\bibinfo {year} {2024})}\BibitemShut {NoStop}%
\bibitem [{\citenamefont {Brandão}\ and\ \citenamefont {Plenio}(2008)}]{Brandao_2008_entanglement_theory}%
  \BibitemOpen
  \bibfield  {author} {\bibinfo {author} {\bibfnamefont {F.~G. S.~L.}\ \bibnamefont {Brandão}}\ and\ \bibinfo {author} {\bibfnamefont {M.~B.}\ \bibnamefont {Plenio}},\ }\bibfield  {title} {\emph {\bibinfo {title} {Entanglement theory and the second law of thermodynamics},}\ }\href {http://dx.doi.org/10.1038/nphys1100} {\bibfield  {journal} {\bibinfo  {journal} {Nat. Phys.}\ }\textbf {\bibinfo {volume} {4}},\ \bibinfo {pages} {873} (\bibinfo {year} {2008})}\BibitemShut {NoStop}%
\bibitem [{\citenamefont {Brandão}\ and\ \citenamefont {Plenio}(2010)}]{Brandao_2010_reversible_theory}%
  \BibitemOpen
  \bibfield  {author} {\bibinfo {author} {\bibfnamefont {F.~G. S.~L.}\ \bibnamefont {Brandão}}\ and\ \bibinfo {author} {\bibfnamefont {M.~B.}\ \bibnamefont {Plenio}},\ }\bibfield  {title} {\emph {\bibinfo {title} {A {Reversible} {Theory} of {Entanglement} and its {Relation} to the {Second} {Law}},}\ }\href {http://dx.doi.org/10.1007/s00220-010-1003-1} {\bibfield  {journal} {\bibinfo  {journal} {Commun. Math. Phys.}\ }\textbf {\bibinfo {volume} {295}},\ \bibinfo {pages} {829} (\bibinfo {year} {2010})}\BibitemShut {NoStop}%
\bibitem [{\citenamefont {Regula}\ and\ \citenamefont {Takagi}(2021{\natexlab{b}})}]{Regula_2021_one-shot}%
  \BibitemOpen
  \bibfield  {author} {\bibinfo {author} {\bibfnamefont {B.}~\bibnamefont {Regula}}\ and\ \bibinfo {author} {\bibfnamefont {R.}~\bibnamefont {Takagi}},\ }\bibfield  {title} {\emph {\bibinfo {title} {One-shot manipulation of dynamical quantum resources},}\ }\href {http://dx.doi.org/10.1103/PhysRevLett.127.060402} {\bibfield  {journal} {\bibinfo  {journal} {Phys. Rev. Lett.}\ }\textbf {\bibinfo {volume} {127}},\ \bibinfo {pages} {060402} (\bibinfo {year} {2021}{\natexlab{b}})}\BibitemShut {NoStop}%
\bibitem [{\citenamefont {Narasimhachar}\ and\ \citenamefont {Gour}(2017)}]{Narasimhachar2017resource}%
  \BibitemOpen
  \bibfield  {author} {\bibinfo {author} {\bibfnamefont {V.}~\bibnamefont {Narasimhachar}}\ and\ \bibinfo {author} {\bibfnamefont {G.}~\bibnamefont {Gour}},\ }\bibfield  {title} {\emph {\bibinfo {title} {Resource theory under conditioned thermal operations},}\ }\href {http://dx.doi.org/10.1103/PhysRevA.95.012313} {\bibfield  {journal} {\bibinfo  {journal} {Phys. Rev. A}\ }\textbf {\bibinfo {volume} {95}},\ \bibinfo {pages} {012313} (\bibinfo {year} {2017})}\BibitemShut {NoStop}%
\end{thebibliography}%

\let\addcontentsline\oldaddcontentsline


\clearpage
\newgeometry{hmargin=1.2in,vmargin=0.8in}

\widetext

\appendix

\setcounter{thm}{0}
\renewcommand{\thethm}{S.\arabic{thm}}
\setcounter{figure}{0}
\renewcommand{\thefigure}{S.\arabic{figure}}

\begin{center}
{\large \bf Appendices}
\end{center}

\tableofcontents

\section{Justification of the definition of the extractable work}\label{app:justification}
The definition of extractable work based on Ref.~\cite{Gour_2022_Role_of_quantum_coherence} employed in the main text is slightly different from the seminal papers discussing the work extraction tasks. 
Here, we show that the definition is equivalent to the standard one based on Ref.~\cite{horodecki2013fundamental}.

In Ref.~\cite{horodecki2013fundamental}, the amount of work is expressed as the energy gaps between the two energy eigenvalues of the Hamiltonian of the two-dimensional system called work storage $W$. We denote the ground state of the work storage as $\ketbra{0}{0}_W$, and excited state as $\ketbra{W}{W}_W$. Also, let $E_W$ be the energy gap between these two levels.
Starting from the initial state $\rho$, if the conversion $\rho\otimes \ketbra{0}{0}_W\to \ketbra{W}{W}_W$ is possible, we say that we can extract the work $E_W$ from initial state $\rho.$ On the other hand, if the conversion $\ketbra{W}{W}_W\to \ketbra{0}{0}_W\otimes \rho$ is possible, we can interpret that the work $E_W$ is sufficient to form the state $\rho$.

On the other hand, in the main text, we consider transforming the thermal state of the battery system $\mu_m=1/m\ketbra{1}{1}_X+(m-1)/m\ketbra{0}{0}_X$ to the charged state $\ketbra{1}{1}_X$.
We now see that being able to accomplish such a transformation is equivalent to the capability of extracting work $(\log m)/\beta$ in the sense of Ref.~\cite{horodecki2013fundamental}.

The results in Ref.~\cite{horodecki2013fundamental} ensure that 
the necessary amount of work $E_W$ to obtain $\ketbra{1}{1}_X$ from the thermal state $\mu_m$ with thermal operations is
\bal
\beta W_{\rm formation}(\ketbra{1}{1}_X)= D_{\rm max}(\ketbra{1}{1}_X||\mu_m)=\log m,
\eal
where $D_{\rm max}$ is the max-divergence defined as 
\bal
D_{\rm max}(\rho||\sigma)=\log\min\qty{\lambda~|~\rho\leq \lambda \sigma}.
\eal
On the other hand, the work $E_W$ extracted from $\ketbra{1}{1}_X$ with thermal operations is
\bal
\beta W_{\rm extractable}(\dm{1}_X)=D_{\rm min}(\ketbra{1}{1}_X||\mu_m)=\log m,
\eal
where $D_{\rm min}$ is the min-divergence defined as 
\bal
D_{\rm min}(\rho||\sigma)=-\log\Tr[\Pi_{{\rm supp}(\rho)}\sigma]~~(\mbox{$\Pi_{{\rm supp}(\rho)}$ is the projector onto ${{\rm supp}(\rho)}$.}).
\eal

These equalities also hold when one can perform the Gibbs-preserving operations~\cite{Faist_2018_Fundamental_work_cost}. Since the work necessary for the formation of the charged state and extractable work from the charged system is the same, we can see that transforming $\mu_m$ to $\dm{1}_X$ is equivalent to obtaining $\dm{W}_W$ with $E_W=(\log m)/\beta$ from $\dm{0}$, i.e., extracting work $(\log m)/\beta$. 

Due to this property, we can consider the work extraction task with the battery system without a loss of generality.

\section{One-shot black box work extraction}
\subsection{One-shot black box work extraction under Gibbs-preserving operations~~(Proof of~Theorem~\ref{theorem: one-shot GPO})}\label{app:one shot extractable work GPO}
\begin{thm}[Theorem~\ref{theorem: one-shot GPO} in the main text]\label{app theorem: one shot extractable work under GPO}
    Let $\mS\subset\mD(\mH)$ be the black box. The one-shot extractable work under the Gibbs-preserving operations is represented as 
    \bal
    \beta W^\ve_{\rm{GPO}}(\mS)=D^\ve_H(\mS||\tau).
    \eal
\end{thm}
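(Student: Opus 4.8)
The plan is to prove the two-sided identity $m^* = 1/\eta_\ve$, where $\eta_\ve := \inf\{\Tr[\tau M] : 0\le M\le I,\ \sup_{\rho\in\mS}\Tr[\rho(I-M)]\le\ve\}$ is the optimal type-\RomanNumeralCaps{2} error of the composite test, so that $D^\ve_H(\mS\|\tau)=-\log\eta_\ve$ and $\beta W^\ve_{\rm GPO}(\mS)=\log m^*$ immediately match. The first reduction is to exploit that the target $\dm{1}_X$ is pure, so the squared fidelity collapses to a linear functional, $F(\mE(\rho),\dm{1}_X)=\langle 1|\mE(\rho)|1\rangle$; hence the charging constraint $\min_{\rho\in\mS}F(\mE(\rho),\dm{1}_X)\ge 1-\ve$ is equivalent to $\Tr[\mE(\rho)\dm{1}_X]\ge 1-\ve$ for all $\rho\in\mS$. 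I also use that a Gibbs-preserving operation here is a CPTP map $\mE:\mD(\mH)\to\mD(\mH_X)$ sending the system Gibbs state to the battery Gibbs state, $\mE(\tau)=\mu_m$; the apparent need to supply the battery in state $\mu_m$ is immaterial, since preparing the free state $\mu_m$ and discarding are themselves Gibbs-preserving, so the $\mE(\rho)$ formulation loses nothing.

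For the bound $\beta W^\ve_{\rm GPO}(\mS)\le D^\ve_H(\mS\|\tau)$, i.e.\ $m^*\le 1/\eta_\ve$, I would turn any admissible operation into an admissible test via the Heisenberg picture. Given a Gibbs-preserving $\mE$ with $\mE(\tau)=\mu_m$ and $\langle 1|\mE(\rho)|1\rangle\ge 1-\ve$ on $\mS$, set $M:=\mE^\dagger(\dm{1}_X)$. Unitality of $\mE^\dagger$ (from trace preservation) together with positivity gives $0\le M\le \mE^\dagger(I_X)=I$, while $\Tr[\rho M]=\langle 1|\mE(\rho)|1\rangle$ forces the type-\RomanNumeralCaps{1} error to be at most $\ve$. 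The decisive identity is the type-\RomanNumeralCaps{2} error: $\Tr[\tau M]=\langle 1|\mE(\tau)|1\rangle=\langle 1|\mu_m|1\rangle=1/m$, so $\eta_\ve\le 1/m$ for every achievable $m$, whence $m^*\le 1/\eta_\ve$.

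For the converse $m^*\ge 1/\eta_\ve$, I would build a measure-and-prepare channel from an optimal test. Let $M$ attain $\eta_\ve$ with type-\RomanNumeralCaps{1} error $\le\ve$, and define $\mE(\cdot):=\Tr[(\cdot)M]\,\dm{1}_X+\Tr[(\cdot)(I-M)]\,\dm{0}_X$. This is manifestly CPTP, and $\langle 1|\mE(\rho)|1\rangle=\Tr[\rho M]\ge 1-\ve$ on $\mS$ by the type-\RomanNumeralCaps{1} constraint, so it charges the battery to error $\ve$. Evaluating on $\tau$ gives $\mE(\tau)=\Tr[\tau M]\,\dm{1}_X+(1-\Tr[\tau M])\,\dm{0}_X$, which equals $\mu_m$ exactly when $1/m=\Tr[\tau M]=\eta_\ve$; since $\eta_\ve\le\Tr[\tau I]=1$ we have $m=1/\eta_\ve\ge 1$, so $\mu_m$ is a valid battery Gibbs state and $\mE$ is Gibbs-preserving. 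This exhibits $m=1/\eta_\ve$ as achievable, giving $m^*\ge 1/\eta_\ve$ and completing the equality.

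I do not expect a serious obstacle: the content is the correspondence rather than any estimate, and the single structural insight (pure target $\Rightarrow$ linear functional $\Rightarrow$ the Gibbs condition pins $1/m=\Tr[\tau M]$) drives both directions. The only care points are (i) verifying that the feasible set of $M$ is compact so that $\eta_\ve$ is actually attained, which follows because $\{0\le M\le I\}$ is compact in finite dimension and $M\mapsto\sup_{\rho\in\mS}\Tr[\rho(I-M)]$ is lower semicontinuous as a supremum of affine functions, making $m^*=1/\eta_\ve$ an attained maximum rather than a mere supremum; and (ii) confirming that restricting to measure-and-prepare Gibbs-preserving operations loses no optimality, which is precisely what the matching upper bound certifies. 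The degenerate case $\eta_\ve=0$ corresponds to $D^\ve_H(\mS\|\tau)=+\infty$ and unbounded extractable work, and is consistent on both sides.
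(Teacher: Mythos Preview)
Your proof is correct and follows essentially the same approach as the paper: the achievability via the measure-and-prepare channel is identical, and your converse via $M=\mE^\dagger(\dm{1}_X)$ is precisely the specialization of the paper's data-processing-inequality argument (which also pushes the test back through $\mE^\dagger$) to the single POVM element $\dm{1}_X$. The only difference is presentational---the paper first records the general DPI for the composite hypothesis-testing divergence and then instantiates it, whereas you go directly to the relevant instance.
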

\begin{proof}
    We first show the achievable part $\beta W_{\rm{GPO}}(\mS)\geq D^\ve_H(\mS||\tau)$. To show this, it suffices to show that some Gibbs-preserving operation achieves this extractable work yield.
    Consider the CPTP map $\mE:\mD(\mH)\to\mD(\mH_X)$ which has the following form.
    \bal
    \mE(\rho)=\Tr[M\rho]\ketbra{1}{1}_X+\Tr[(I-M)\rho]\ketbra{0}{0}_X, ~~0\leq M\leq I
    \eal
    This map is Gibbs-preserving if and only if this map satisfies $\mE(\tau)=\mu_m$, i.e.,
    \bal
    \mE(\tau)&=\Tr[M\tau]\ketbra{1}{1}_X+\Tr[(I-M)\tau]\ketbra{0}{0}_X=\frac{1}{m}\ketbra{1}{1}_X+\frac{m-1}{m}\ketbra{0}{0}_X,\\
    &\Leftrightarrow m=(\Tr\qty[M\tau])^{-1}.
    \eal
    If we take $M$ such that $M$ satisfies $\Tr\qty[M\rho]\geq 1-\ve $ for every $\rho\in\mS$, we can see that for any $\rho\in \mS$
    \bal
    F(\mE(\rho),\ketbra{1}{1}_X)&=F(\Tr[M\rho]\ketbra{1}{1}_X+\Tr[(I-M)\rho]\ketbra{0}{0}_X,\ketbra{1}{1}_X)\\
    &\geq \Tr[M\rho]F(\ketbra{1}{1}_X,\ketbra{1}{1}_X)+\Tr\qty[(I-M)\rho]F(\ketbra{0}{0}_X,\ketbra{1}{1}_X)\\
    &\geq \Tr[M\rho]\geq 1-\ve.
    \eal
    In the second line, we used the concavity of the fidelity.
    Recalling the definition, the one-shot black box extractable work is calculated as 
    \bal
    \beta W^{\ve}_{\rm{GPO}}(\mS)&=\log\max\qty{m\in\mbR~|~\max_{\mE\in\mbO}\min_{\rho\in\mS}F(\mE(\rho),\ketbra{1}{1}_X)\geq 1-\ve}\\
    &\geq \log \max \qty{\qty(\Tr\qty[M\tau])^{-1}~|~\forall\rho\in\mS,~\Tr\qty[M\rho]\geq 1-\ve,~0\leq M\leq I}.
    \eal
    The last line is the composite hypothesis testing divergence $D^\ve_H(\mS||\tau)$. Therefore, we obtain $\beta W^{\ve}_{\rm{GPO}}(\mS)\geq D^\ve_H(\mS||\tau)$.

    To show the converse part $\beta W_{\rm{GPO}}(\mS)\leq D^\ve_H(\mS||\tau)$, we start by showing the data processing inequality of the composite hypothesis testing divergence, i.e., for any CPTP map $\mE:\mD(\mH)\to\mD(\mH')$ and any composite hypotheses $\mS,\mT\subset\mD(\mH)$,
    \bal
    D^\ve_H(\mE(\mS)||\mE(\mT))\leq D^\ve_H(\mS||\mT)
    \eal
    holds. Recalling the definition of the composite hypothesis testing divergence, $D^\ve_H(\mE(\mS)||\mE(\mT))$ can be written as 
    \bal
    D^\ve_H(\mE(\mS)||\mE(\mT))=-\log \inf_{\substack{0\leq M'\leq I_{\mH'}\\ \sup_{\rho\in \mS}\Tr[(I-M')\mE(\rho)]\leq \ve}}\sup_{\tau\in\mT}\Tr[\mE(\tau) M'].
    \eal
    Here, we denote the conjugate of $\mE$ as $\mE^\dagger$, i.e., $\mE^\dagger$ satisfies $\Tr[\mE(A)B]=\Tr[A\mE^\dagger(B)]$ for any matrices $A\in\mL(\mH_A)$ and $B\in\mL(\mH_B)$.
    Since $\mE$ is a CPTP map, $\mE^\dagger$ is a CP unital map, which maps $I_{\mH'}$ to $I_{\mH}$. This can be rewritten as 
    \bal
    -\log \inf_{\substack{0\leq M'\leq I_{\mH'}\\ \sup_{\rho\in \mS}\Tr[(I-M')\mE(\rho)]\leq \ve}}\sup_{\tau\in\mT}\Tr[\mE(\tau) M']=-\log \inf_{\substack{0\leq M'\leq I_{\mH'}\\ \sup_{\rho\in \mS}\Tr[(I-\mE^\dagger(M'))\rho]\leq \ve}}\sup_{\tau\in\mT}\Tr[\tau \mE^\dagger(M')].
    \eal
    Here, we used the definition and the unitality of $\mE^\dagger.$
    Here, we can easily check that $\mE^\dagger(M')$ satisfies $0\leq \mE^\dagger(M')\leq I_{\mH}$, which follows from the completely positivity of $\mE^\dagger$.
    From this, the following holds.
    \bal
    -\log \inf_{\substack{0\leq M'\leq I_{\mH'}\\ \sup_{\rho\in \mS}\Tr[(I-\mE^\dagger(M'))\rho]\leq \ve}}\sup_{\tau\in\mT}\Tr[\tau \mE^\dagger(M')]\leq -\log \inf_{\substack{0\leq M\leq I_{\mH}\\ \sup_{\rho\in \mS}\Tr[(I-M)\rho]\leq \ve}}\sup_{\tau\in\mT}\Tr[\tau M]= D^\ve_H(\mS||\mT).
    \eal
    
Combining these, we obtain the data processing inequality of the composite hypothesis testing divergence.

If we take the composite alternative hypothesis $\mT$ as $\mT=\qty{\tau}$, the situation is reduced to our original setting.
Let $\mE^*$ be the Gibbs-preserving operation which achieves the optimal work extraction, and $m^* =2^{\beta W^\ve_{\rm{GPO}}(\mS)}$ be the optimal $m$.
From the data processing inequality of the composite hypothesis testing divergence, 
\bal
D^\ve_H(\mS||\tau)&\geq D^\ve_H(\mE^*(\mS)||\mu_{m^*})\\
&=-\log \inf_{\substack{0\leq M\leq I\\ \sup_{\rho\in \mS}\Tr[(I-M)\mE^*(\rho)]\leq \ve}}\Tr[\mu_{m^*}M]
\eal
holds. Recalling that $\mE^*$ satisfies $F(\ketbra{1}{1}_X,\mE^*(\rho))=\Tr[\ketbra{1}{1}_X\mE^*(\rho)]\geq 1-\ve,~~\forall \rho\in \mS$ because of the definition of the extractable work, we can substitute $M=\ketbra{1}{1}_X$ and obtain the following.
\bal
-\log \inf_{\substack{0\leq M\leq I\\ \sup_{\rho\in \mS}\Tr[(I-M)\mE^*(\rho)]\leq \ve}}\Tr[\mu_m M]\geq -\log\Tr\qty[\mu_{m^*}\ketbra{1}{1}_X]=\log m^*=\beta W^\ve_{\rm{GPO}}(\mS).
\eal
From these, we obtain the converse part.
\end{proof}

Since the one-shot extractable work of the state $\rho$ is obtained as $D^\ve_H(\rho||\tau)$(\cite{Gour_2022_Role_of_quantum_coherence}), we can show the direct part in a different way.

\begin{proof}
    (Alternative proof for the direct part.)
    As one can see, if one takes the convex hull on the black box, the conversion fidelity decreases, i.e.,
    \bal
    \max_{\mE\in\rm{GPO}}\min_{\rho\in\mS}F(\mE(\rho),(\ketbra{1}{1}_X,\mu_m))\geq \max_{\mE\in\rm{GPO}}\min_{\rho\in\mC(\mS)}F(\mE(\rho),(\ketbra{1}{1}_X,\mu_m)).
    \eal
    Here, we denote the state of the battery system together with the thermal state of the battery system $\mu_m$.
    From this, one can see that 
    \bal
    \beta W^{\ve}_{\rm{GPO}}(\mS)\geq \beta W^{\ve}_{\rm{GPO}}(\mC(\mS)).
    \eal
    In the following discussion, we focus on the right-hand side. $F(\mE(\rho),\ketbra{1}{1}_X)=\Tr\qty[\mE(\rho)\ketbra{1}{1}_X]$ is linear with respect to the $\mE$ and $\rho$. What is more, the set of the Gibbs-preserving operations and $\mC(\mS)$ are both convex. Since $\mS$ is closed, the convex hull $\mC(\mS)$ is closed. Also, we can see that $\mC(\mS)$ is bounded. From these, we can apply Sion's minimax theorem~\cite{sion_1958_general_minimax_theorems} as follows.
    \bal
    \max_{\mE\in\rm{GPO}}\min_{\rho\in\mC(\mS)}F(\mE(\rho),\ketbra{1}{1}_X)=\min_{\rho\in\mC(\mS)}\max_{\mE\in\rm{GPO}}F(\mE(\rho),\ketbra{1}{1}_X)
    \eal
    Therefore, the $\beta W^{\ve}_{\rm{GPO}}(\mC(\mS))$ is reduced to the following expression.
    \bal
    \beta W^{\ve}_{\rm{GPO}}(\mC(\mS))&=\log\max\qty{m\in\mbR~|~\min_{\rho\in\mC(\mS)}\max_{\mE\in\rm{GPO}}F(\mE(\rho),\ketbra{1}{1}_X)\geq 1-\ve}\\
    &=\min_{\rho\in\mC(\mS)}\beta W^{\ve}_{\rm{GPO}}(\rho)
    \eal
    From the result in Ref.~\cite{Gour_2022_Role_of_quantum_coherence,Wang_RT_of_assymmetric_distinguishability}, $\beta W^{\ve}_{\rm{GPO}}(\rho)=D^\ve_H(\rho||\tau)$, and the above can be rewritten as 
    \bal
    \beta W^{\ve}_{\rm{GPO}}(\mC(\mS))=\min_{\rho\in\mC(\mS)}D^\ve_H(\rho||\tau)\geq D^\ve_H(\mC(\mS)||\tau).
    \eal
    The last inequality follows due to Ref.~\cite[Lemma 16]{Bergh2023composite}. By the definition, the composite hypothesis testing divergence does not change when the convex hull is removed. Combining these discussions, we obtain
    \bal
    \beta W^{\ve}_{\rm{GPO}}(\mS)\geq \beta W^{\ve}_{\rm{GPO}}(\mC(\mS))\geq D^\ve_H(\mC(\mS)||\tau)= D^\ve_H(\mS||\tau).
    \eal
\end{proof}
While Sion's minimax theorem is a very powerful tool in our setting, it is not the appropriate tool in the subsequent discussion in which we discuss the asymptotic limit of the extractable work from the black box.

\subsection{One-shot black box work extraction under Gibbs-preserving covariant operations~(Proof of~Theorem~\ref{theorem: one-shot GPC})}\label{app: one-shot GPC}
To obtain the one-shot black box extractable work under Gibbs-preserving covariant operations, we briefly see the following lemma.
\begin{lem}\label{Lemma: pinching and GPC can be commuted}
    Let $\mE:\mD(\mH_A)\to\mD(\mH_B)$ be a time-translation covariant operation, i.e., $\mE$ satisfies
    \bal\label{Eq: the condition of the time-translation covariance}
    \mE(e^{-iH_A t}\rho e^{iH_At})=e^{-iH_B t}\mE(\rho)e^{iH_B t},~~\forall t\in\mbR,
    \eal
    where $H_A,H_B$ are the Hamiltonians of the input system $A$ and the output system $B$ respectively. 
    Then,
    \bal
    \mP_B\circ\mE=\mE\circ\mP_A
    \eal
    holds, where $\mP_A, \mP_B $ are the pinching channels with respect to the Hamiltonian $H_A$ and $H_B$ respectively.
\end{lem}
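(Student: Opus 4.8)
The plan is to use the fact that the pinching channel is itself the infinite-time average of Heisenberg conjugation, so that commuting $\mP$ past $\mE$ reduces to applying the covariance hypothesis under an average. I would begin from the integral representation $\mP_A(\rho)=\lim_{T\to\infty}\frac{1}{2T}\int_{-T}^{T}dt\,e^{-iH_A t}\rho\,e^{iH_A t}$ recorded in the main text (and the analogous one for $\mP_B$), since this form interacts cleanly with a linear map, unlike the spectral-projection form $\sum_{E_i}\Pi_{E_i}(\cdot)\Pi_{E_i}$.

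First I would apply $\mE$ to $\mP_A(\rho)$ and pull $\mE$ inside the average:
\[
\mE\bigl(\mP_A(\rho)\bigr)=\lim_{T\to\infty}\frac{1}{2T}\int_{-T}^{T}dt\,\mE\bigl(e^{-iH_A t}\rho\,e^{iH_A t}\bigr).
\]
Since $\mE$ is a linear map between finite-dimensional operator spaces---hence bounded and continuous---and the integrand $t\mapsto e^{-iH_A t}\rho\,e^{iH_A t}$ is norm-continuous and uniformly bounded on each compact interval $[-T,T]$, interchanging $\mE$ with the Riemann integral and with the limit $T\to\infty$ is justified.

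Next I would invoke the time-translation covariance \eqref{Eq: the condition of the time-translation covariance} inside the integral, replacing each $\mE(e^{-iH_A t}\rho\,e^{iH_A t})$ by $e^{-iH_B t}\mE(\rho)\,e^{iH_B t}$. The right-hand side then becomes precisely the integral representation of $\mP_B$ evaluated at $\mE(\rho)$, giving $\mE(\mP_A(\rho))=\mP_B(\mE(\rho))$; as $\rho$ is arbitrary this yields $\mE\circ\mP_A=\mP_B\circ\mE$.

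The only point requiring care---scarcely an obstacle in finite dimensions---is the interchange of $\mE$ with the limit and the integral, which rests on continuity of $\mE$ and compactness of the integration domain. Everything else is the direct term-by-term substitution of the covariance identity, so I expect no genuine difficulty. A cosmetic alternative would be to work with the spectral form of $\mP$ and the character decomposition of the covariance condition (the Fourier components of $t\mapsto e^{-iHt}(\cdot)e^{iHt}$ are exactly the off-diagonal energy blocks, and covariance forces $\mE$ to preserve the grading), but I expect the time-average argument to be the cleanest route.
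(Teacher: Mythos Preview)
Your proposal is correct and follows essentially the same route as the paper: integrate the covariance identity \eqref{Eq: the condition of the time-translation covariance} against $\frac{1}{2T}\int_{-T}^{T}dt$ and take $T\to\infty$ to recover $\mP_A$ and $\mP_B$. The only difference is that you spell out the justification for exchanging $\mE$ with the integral and limit, which the paper leaves implicit.
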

\begin{proof}
    If we integrate Eq.~\ref{Eq: the condition of the time-translation covariance}, we obtain
    \bal
    \frac{1}{2T}\int_{-T}^T\dd t~\mE(e^{-iH_A t}\rho e^{iH_At})=\frac{1}{2T}\int_{-T}^T\dd t ~e^{-iH_B t}\mE(\rho)e^{iH_B t}.
    \eal
    Due to the definition of the pinching channel, we obtain Lemma~\ref{Lemma: pinching and GPC can be commuted} by taking the limit $T\to \infty$.
\end{proof}

Furthermore, we define a quantity called conversion fidelity, which represents the optimal guaranteed fidelity of the work extraction task from a given black box achieved by the allowed operations.
\begin{defn}[Ref.~\cite{Gour_2022_Role_of_quantum_coherence}]
    The conversion fidelity of the black box $\mS\subset\mD(\mH)$ under the free operation $\mbO$ is defined as
    \bal
    F_{\mbO}\qty((\mS,\tau)\rightarrow(\ketbra{1}{1}_X,\mu_m))=\max_{\mE\in\mbO}\min_{\rho\in\mS}F(\mE(\rho),\ketbra{1}{1}_X).
    \eal
    Here, we denote the thermal state of the input and battery systems together with the black box of the input states and the output state.
\end{defn}

From Lemma~\ref{Lemma: pinching and GPC can be commuted}, we can see that the conversion fidelity under the Gibbs-preserving covariant operations coincides with that of pinched black box under Gibbs-preserving operations.
\begin{lem}\label{Lemma: Conversion fidelity of GPC and GPO}
    \begin{align}
        F_{\mathrm{GPC}}\qty((\mS,\tau)\rightarrow(\ketbra{1}{1}_X,\mu_m))=F_{\mathrm{GPO}}\qty((\mP(\mS),\tau)\rightarrow(\ketbra{1}{1}_X,\mu_m)),
    \end{align}
    where $\mP(\mS)$ is the set of pinched states of $\mS$
    \begin{equation}
        \begin{aligned}
            \mP(\mS):=\lset \mP(\rho) \sbar \rho\in\mS \rset,
        \end{aligned}
    \end{equation}
    and $\mP$ is a pinching map with respect to the Hamiltonian of the whole system.  
\end{lem}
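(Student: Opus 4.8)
The plan is to prove the two inequalities separately, exploiting two structural facts about the pinching channels. Let $\mP_X$ denote the pinching with respect to the battery Hamiltonian $H_X$ (the map $\mP_B$ of Lemma~\ref{Lemma: pinching and GPC can be commuted}), and $\mP$ the pinching with respect to the Hamiltonian of the input system. The first fact is that the output pinching fixes the target, $\mP_X(\ketbra{1}{1}_X)=\ketbra{1}{1}_X$, since $\ket{1}$ is an energy eigenstate of $H_X$; the second is the commutation relation $\mP_X\circ\mE=\mE\circ\mP$ supplied by Lemma~\ref{Lemma: pinching and GPC can be commuted} for any time-translation covariant $\mE$. Throughout I would use that, because the target is pure, the conversion fidelity collapses to the linear functional $F(\mE(\rho),\ketbra{1}{1}_X)=\Tr[\mE(\rho)\ketbra{1}{1}_X]$, and that $\mP_X$ is self-adjoint and idempotent, so that $\Tr[\sigma\,\ketbra{1}{1}_X]=\Tr[\mP_X(\sigma)\,\ketbra{1}{1}_X]$ for every output state $\sigma$.

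For the direction $F_{\mathrm{GPC}}(\mS)\leq F_{\mathrm{GPO}}(\mP(\mS))$, let $\mE$ be any Gibbs-preserving covariant operation. Combining self-adjointness of $\mP_X$ with the commutation lemma gives, for every $\rho\in\mS$,
\[
F(\mE(\rho),\ketbra{1}{1}_X)=F(\mP_X\circ\mE(\rho),\ketbra{1}{1}_X)=F(\mE\circ\mP(\rho),\ketbra{1}{1}_X).
\]
Hence $\min_{\rho\in\mS}F(\mE(\rho),\ketbra{1}{1}_X)=\min_{\sigma\in\mP(\mS)}F(\mE(\sigma),\ketbra{1}{1}_X)$. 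Since every GPC operation is a fortiori GPO, maximizing over GPC maps on the left and enlarging the feasible set to all GPO maps on the right yields the inequality.

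For the reverse direction, given any GPO map $\mE'$ I would construct its covariantized version $\tilde{\mE}:=\mP_X\circ\mE'\circ\mP$. The main step is to verify that $\tilde{\mE}$ is GPC: covariance holds because $\mP(\rho)$ commutes with the input Hamiltonian and $\mP_X(\cdot)$ commutes with $H_X$, so $\tilde{\mE}$ sends time-translated inputs to one and the same time-invariant output; Gibbs-preservation follows from $\mP(\tau)=\tau$, $\mE'(\tau)=\mu_m$, and $\mP_X(\mu_m)=\mu_m$. Then, using self-adjointness of $\mP_X$ and $\mP_X(\ketbra{1}{1}_X)=\ketbra{1}{1}_X$ once more,
\[
F(\tilde{\mE}(\rho),\ketbra{1}{1}_X)=\Tr[\mE'(\mP(\rho))\,\mP_X(\ketbra{1}{1}_X)]=F(\mE'(\mP(\rho)),\ketbra{1}{1}_X),
\]
so that $\min_{\rho\in\mS}F(\tilde{\mE}(\rho),\ketbra{1}{1}_X)=\min_{\sigma\in\mP(\mS)}F(\mE'(\sigma),\ketbra{1}{1}_X)$ with $\tilde{\mE}\in\mathrm{GPC}$. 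Optimizing over $\mE'$ gives $F_{\mathrm{GPC}}(\mS)\geq F_{\mathrm{GPO}}(\mP(\mS))$, completing the equality.

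The main obstacle is precisely the construction and verification in the second direction: one must confirm that the sandwiched map $\mP_X\circ\mE'\circ\mP$ is simultaneously covariant and Gibbs-preserving, which is where the compatibility of the input and output pinchings with their respective Hamiltonians and with the Gibbs state $\tau$ (and its image $\mu_m$) is essential. Everything else reduces to the linearity of the fidelity in the pure-target case together with the self-adjointness and idempotence of the pinching, which are routine.
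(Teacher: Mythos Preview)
Your proof is correct and follows essentially the same route as the paper: both directions hinge on the commutation relation from Lemma~\ref{Lemma: pinching and GPC can be commuted} and on the observation that $\mP_X\circ\mE'\circ\mP$ is Gibbs-preserving covariant whenever $\mE'$ is Gibbs-preserving. The only cosmetic difference is that the paper invokes contractility of the fidelity under the output pinching (yielding an inequality that is later seen to be tight), whereas you exploit the self-adjointness of $\mP_X$ together with $\mP_X(\ketbra{1}{1}_X)=\ketbra{1}{1}_X$ to obtain equalities directly; this is a slightly cleaner bookkeeping but not a different argument.
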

\begin{proof}
    The idea of the proof is taken from Ref.~\cite{Gour_2022_Role_of_quantum_coherence}.
    First, we show the $(\leq)$ inequality.
    From the definition of the conversion fidelity, 
    \bal
    F_{\mathrm{GPC}}\qty((\mS,\tau)\rightarrow(\ketbra{1}{1}_X,\mu_m))=\max_{\mE\in\rm{GPC}}\min_{\rho\in\mS}F(\mE(\rho),\ketbra{1}{1}_X)
    \eal
    holds. Due to the contractility of the fidelity,
    \bal
    \max_{\mE\in\rm{GPC}}\min_{\rho\in\mS}F(\mE(\rho),\ketbra{1}{1}_X)&\leq\max_{\mE\in\rm{GPC}}\min_{\rho\in\mS}F(\mP\circ\mE(\rho),\mP(\ketbra{1}{1}_X))\\
    &=\max_{\mE\in\rm{GPC}}\min_{\rho\in\mS}F(\mP\circ\mE(\rho),\ketbra{1}{1}_X)
    \eal
    holds. Here, from Lemma~\ref{Lemma: pinching and GPC can be commuted}, 
    \bal
    \max_{\mE\in\rm{GPC}}\min_{\rho\in\mS}F(\mP\circ\mE(\rho),\ketbra{1}{1}_X)=\max_{\mE\in\rm{GPC}}\min_{\rho\in\mS}F(\mE\circ\mP(\rho),\ketbra{1}{1}_X).
    \eal
    Finally, noting that Gibbs-preserving covariant operations are Gibbs-preserving,
    \bal
    \max_{\mE\in\rm{GPC}}\min_{\rho\in\mS}F(\mE\circ\mP(\rho),\ketbra{1}{1}_X)&\leq \max_{\mE\in\rm{GPO}}\min_{\rho\in\mS}F(\mE\circ\mP(\rho),\ketbra{1}{1}_X)\\
    &=F_{\mathrm{GPO}}\qty((\mP(\mS),\tau)\rightarrow(\ketbra{1}{1}_X,\mu_m))
    \eal
    holds. From these, $F_{\mathrm{GPC}}\qty((\mS,\tau)\rightarrow(\ketbra{1}{1}_X,\mu_m))\leq F_{\mathrm{GPO}}\qty((\mP(\mS),\tau)\rightarrow(\ketbra{1}{1}_X,\mu_m))$ is shown.

    Next, we show the opposite inequality. From the definition, we can see
    \bal
    F_{\mathrm{GPO}}\qty((\mP(\mS),\tau)\rightarrow(\ketbra{1}{1}_X,\mu_m))&=\max_{\mE\in\rm{GPO}}\min_{\rho\in\mS}F(\mE\circ\mP(\rho),\ketbra{1}{1}_X)\\
    &\leq \max_{\mE\in\rm{GPO}}\min_{\rho\in\mS}F(\mP\circ\mE\circ\mP(\rho),\ketbra{1}{1}_X).
    \eal
    Again, we used the contractility of fidelity. As one can check easily, $\mP\circ\mE\circ\mP$ is Gibbs preserving covariant for any Gibbs-preserving operation $\mE$. From this,
    \bal
    \max_{\mE\in\rm{GPO}}\min_{\rho\in\mS}F(\mP\circ\mE\circ\mP(\rho),\ketbra{1}{1}_X)\leq \max_{\mE\in\rm{GPC}}\min_{\rho\in\mS}F(\mE(\rho),\ketbra{1}{1}_X)=F_{\mathrm{GPC}}\qty((\mS,\tau)\rightarrow(\ketbra{1}{1}_X,\mu_m))
    \eal
    follows. From these, we $F_{\mathrm{GPC}}\qty((\mS,\tau)\rightarrow(\ketbra{1}{1}_X,\mu_m))\geq F_{\mathrm{GPO}}\qty((\mP(\mS),\tau)\rightarrow(\ketbra{1}{1}_X,\mu_m))$ is shown. Combining the two inequalities, the proof is completed.
\end{proof}
This lemma provides us the expression of the one-shot black box extractable work using the composite hypothesis testing divergence.

\begin{thm}[Theorem~\ref{theorem: one-shot GPC} in the main text]
    Let $\mS\subset\mD(\mH)$ be the black box. The one-shot extractable work under the Gibbs-preserving covariant operations is represented as 
    \bal
    \beta W^\ve_{\rm{GPC}}(\mS)=D^\ve_H(\mP(\mS)||\tau).
    \eal
\end{thm}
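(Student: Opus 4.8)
The plan is to reduce the claim directly to the one-shot Gibbs-preserving result, Theorem~\ref{theorem: one-shot GPO}, by threading the conversion-fidelity identity of Lemma~\ref{Lemma: Conversion fidelity of GPC and GPO} through the definition of one-shot extractable work. Essentially all of the analytic content—namely that the pinching channel commutes with time-translation covariant maps (Lemma~\ref{Lemma: pinching and GPC can be commuted}) and that $\mP\circ\mE\circ\mP$ is Gibbs-preserving covariant whenever $\mE$ is Gibbs-preserving—has already been absorbed into that lemma, so what remains is a short chain of equalities.

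First I would rewrite $\beta W^\ve_{\rm{GPC}}(\mS)$ in terms of the conversion fidelity. Since $F_{\rm{GPC}}\qty((\mS,\tau)\rightarrow(\ketbra{1}{1}_X,\mu_m))=\max_{\mE\in\rm{GPC}}\min_{\rho\in\mS}F(\mE(\rho),\ketbra{1}{1}_X)$ is precisely the quantity compared against $1-\ve$ in the definition of extractable work, we have
\[
\beta W^\ve_{\rm{GPC}}(\mS)=\log\max\qty{m\in\mbR~|~F_{\rm{GPC}}\qty((\mS,\tau)\rightarrow(\ketbra{1}{1}_X,\mu_m))\geq 1-\ve},
\]
where the $m$-dependence enters through the constraint that an admissible $\mE$ must map $\tau$ to $\mu_m$.

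Next I would invoke Lemma~\ref{Lemma: Conversion fidelity of GPC and GPO}, which is an exact equality at each fixed $m$, to replace the GPC conversion fidelity of $\mS$ by the GPO conversion fidelity of the pinched black box $\mP(\mS)$. Because the two conversion fidelities agree for every $m$, the set of admissible $m$ satisfying the threshold $\geq 1-\ve$ is identical in the two optimizations, so the optimal values coincide, giving $\beta W^\ve_{\rm{GPC}}(\mS)=\beta W^\ve_{\rm{GPO}}(\mP(\mS))$. I would then apply Theorem~\ref{theorem: one-shot GPO} to the legitimate black box $\mP(\mS)\subset\mD(\mH)$ to obtain $\beta W^\ve_{\rm{GPO}}(\mP(\mS))=D^\ve_H(\mP(\mS)||\tau)$, which is the claim.

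I do not anticipate a genuine obstacle here: the argument is pure bookkeeping once Lemma~\ref{Lemma: Conversion fidelity of GPC and GPO} is in hand. The only point deserving a moment's care is the observation that the two ``largest $m$'' problems have identical feasible regions; this holds precisely because the conversion-fidelity identity is an exact equality valid at each fixed $m$ rather than merely an asymptotic statement, which is what lets the equality of work values follow without any additional estimate.
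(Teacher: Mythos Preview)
Your proposal is correct and follows essentially the same approach as the paper: rewrite the extractable work via the conversion fidelity, apply Lemma~\ref{Lemma: Conversion fidelity of GPC and GPO} at each fixed $m$ to pass from GPC on $\mS$ to GPO on $\mP(\mS)$, and then invoke Theorem~\ref{theorem: one-shot GPO}. The paper's proof is the same short chain of equalities, with the same justification that the feasible sets of $m$ coincide because the conversion-fidelity identity holds pointwise in $m$.
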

\begin{proof}
    From the definition of the one-shot extractable work and Lemma~\ref{Lemma: Conversion fidelity of GPC and GPO},
    \bal
        \beta W^\ve_{\mathrm{GPC}}(\mS)&=\log\max\qty{m\in\mbR~|~F_{\mathrm{GPC}}\qty((\mS,\tau)\rightarrow(\ketbra{1}{1}_X,\mu_m))\geq 1-\ve}\\
        &=\log\max\qty{m\in\mbR~|~F_{\mathrm{GPO}}\qty((\mP(\mS),\tau)\rightarrow(\ketbra{1}{1}_X,\mu_m))\geq 1-\ve}\\
        &=\beta W^\ve_{\mathrm{GPO}}(\mP(\mS))=D^\ve_H(\mP(\mS)||\tau).
    \eal
    holds. We used Lemma~\ref{Lemma: Conversion fidelity of GPC and GPO}. In the last equation, we used the result in Theorem~\ref{app theorem: one shot extractable work under GPO}.
\end{proof}

\section{Asymptotic black box work extraction and composite quantum Stein's lemmas}
\subsection{Asymptotic black box work extraction under Gibbs-preserving operations~(Proof of Theorem~\ref{theorem: asymptotic GPO})}\label{app: asymptotic GPO}
In this section, we consider the asymptotic black box work extraction. To take the asymptotic limit, we have to consider the sequence of black boxes, i.e., the sequence of the subsets of the density matrices $\qty{\mS_n}_{n=1}^\infty,~\mS_n \subset \mD(\mH^{\otimes n}).$ 
Additionally, we consider a specific sequence of black boxes that satisfies the following:
\begin{enumerate}
    \item For any $n\in\mbN$, $\mS_n$ is closed.
    \item For any $n\in\mbN$, $\mS_n$ is closed under the measurement on any subsystems and conditioning on the measurement outcome.
    \item For any $n\in\mbN$, $\mS_n$ is closed under taking partial trace on any subsystems.
    \item For any $n\in\mbN$, $\mS_n$ is closed under permutation of the subsystems.
\end{enumerate}
When we impose these conditions on the sequence of the black boxes, the problem can be reduced to the classical adversarial hypothesis testing, which leads us to obtain the asymptotic limit~\cite{Bergh2023composite, Brandao_Adversarial_Hypothesis}.

Some examples which satisfy these conditions are the following:
\begin{itemize}
    \item The i.i.d. states black box $\mS_n^{\rm{i.i.d.}}(S)=\qty{\rho^{\otimes n}~|~\rho\in S}$   
    \item The tensor product states black box $\mS_n^{\rm{TP}}(S)=\qty{\bigotimes_{i=1}^n \rho_i ~|~\rho_i\in S, \forall i}$
\end{itemize}
Here, $S\subset\mD(\mH)$ is a closed subset.
When we take $S$ as $S=\qty{\rho}$, it is reduced to the trivial black box.

Furthermore, we assume that the thermal state is represented as $\tau^{\otimes n}$, which means that the Hamiltonian of the $n$ systems are the same and have no correlation. 
Then, employing Theorem~\ref{app theorem: one shot extractable work under GPO}, the one-shot asymptotic extractable work of the black box $\mS_n$ under the Gibbs-preserving operations is represented as 
\bal
\beta W^\ve_{\rm{GPO}}(\mS_n)=D^\ve_H(\mS_n||\tau^{\otimes n}).
\eal
To take the $n\to \infty$ limit, we employ a previous result of the composite hypothesis testing and related quantum Stein's lemma.

\begin{pro}[{\cite[Theorem 5]{Bergh2023composite}}]\label{Pro: Bergh2023 composite HT and stein's lemma}
    Let $\mS=\qty{\mS_n}$ and $\mT=\qty{\mT_n}$ be the sequence of black boxes satisfying the conditions above. Then, 
    \begin{align}
        \lim_{\ve\rightarrow0}\lim_{n\rightarrow\infty}\frac{1}{n}D_H^\ve(\mS_n||\mT_n)=\lim_{n\rightarrow\infty}\frac{1}{n}\min_{\substack{\sigma_n\in\mC(\mS_n)\\ \tau_n\in\mC(\mT_n)}}D(\sigma_n||\tau_n),
    \end{align}
    where $\mC(\cdot)$ denotes the convex hull of the set. Furthermore, when for any $n\in \mbN$,  $\mS_n$ is contained by the subspace of $\mD(\mH)$ the dimension of which is polynomial to $n$, the convex hull on $\mS_n$ can be removed.
\end{pro}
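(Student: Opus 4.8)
The plan is to prove the equality by a sandwich: show that $\limsup_{n}\tfrac1n D_H^\ve(\mS_n\|\mT_n)$ is at most the right-hand side (the converse), and that $\liminf_{n}\tfrac1n D_H^\ve(\mS_n\|\mT_n)$ is at least it (the achievability). Since the bounds coincide after the $\ve\to0$ limit, the limit exists and equals the regularized quantity, which I abbreviate as $R:=\lim_{n}\tfrac1n\min_{\sigma_n\in\mC(\mS_n),\,\tau_n\in\mC(\mT_n)}D(\sigma_n\|\tau_n)$. Conditions (1)--(4) are used only in the achievability direction.

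For the converse I would proceed in three moves. First, $D_H^\ve$ is invariant under convex hulls, so $D_H^\ve(\mS_n\|\mT_n)=D_H^\ve(\mC(\mS_n)\|\mC(\mT_n))$. Second, shrinking both hypotheses to any single pair can only raise the divergence: for every $\sigma_n\in\mC(\mS_n)$ and $\tau_n\in\mC(\mT_n)$ the feasible set of tests shrinks and the worst-case type-\RomanNumeralCaps{2} objective grows, giving $D_H^\ve(\mC(\mS_n)\|\mC(\mT_n))\le D_H^\ve(\sigma_n\|\tau_n)$. Third, I invoke the standard one-shot bound $D_H^\ve(\sigma_n\|\tau_n)\le\tfrac{1}{1-\ve}\bigl(D(\sigma_n\|\tau_n)+h_2(\ve)\bigr)$, which follows from the data-processing inequality for relative entropy applied to the binary channel $\{M,I-M\}$ (with $h_2$ the binary entropy). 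Choosing $(\sigma_n,\tau_n)$ to minimize $D(\sigma_n\|\tau_n)$, dividing by $n$, and sending $n\to\infty$ and then $\ve\to0$ gives $\lim_{\ve\to0}\limsup_n\tfrac1n D_H^\ve(\mS_n\|\mT_n)\le R$.

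The achievability is the substantive direction, and my approach is to reduce the quantum composite test to a classical adversarial one and solve the latter by the method of types. Let $(\sigma_n^*,\tau_n^*)$ minimize $D(\cdot\|\cdot)$ over $\mC(\mS_n)\times\mC(\mT_n)$ (attained by compactness). Applying the pinching channel $\mM_n$ with respect to the eigenbasis of $\tau_n^*$ sends all elements of $\mS_n,\mT_n$ to commuting operators; by the measurement-and-conditioning closure (2) the images $\mM_n(\mS_n),\mM_n(\mT_n)$ are legitimate classical adversarial hypotheses still enjoying the permutation symmetry from (4), and by data processing $D_H^\ve(\mM_n(\mS_n)\|\mM_n(\mT_n))\le D_H^\ve(\mS_n\|\mT_n)$, so any classical achievable rate lower-bounds the quantum one. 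On the classical side, a \emph{universal} test that accepts whenever the empirical type of the outcome string lies in the typical region of some distribution in $\mC(\mM_n(\mS_n))$ controls the worst-case type-\RomanNumeralCaps{1} error over all null distributions at once---this is exactly what permutation invariance buys---while the worst-case type-\RomanNumeralCaps{2} exponent is governed by $\min_{P\in\mC(\mM_n(\mS_n)),\,Q\in\mC(\mM_n(\mT_n))}D(P\|Q)$. The remaining quantum input is that pinching with respect to $\tau_n^*$ costs no relative-entropy rate asymptotically (Hiai--Petz achievability), so $\tfrac1n D\bigl(\mM_n(\sigma_n^*)\|\mM_n(\tau_n^*)\bigr)\to\tfrac1n D(\sigma_n^*\|\tau_n^*)$ and the classical exponent converges up to $R$; taking $n\to\infty$ then $\ve\to0$ gives $\liminf_n\tfrac1n D_H^\ve(\mS_n\|\mT_n)\ge R$.

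I expect the achievability reduction to be the main obstacle, for two intertwined reasons. First, sets obeying (1)--(4)---especially permutation-invariant ones---are genuinely non-i.i.d. and correlated across subsystems, so the single-state Neyman--Pearson test for $(\sigma_n^*,\tau_n^*)$ does not by itself bound type-\RomanNumeralCaps{1} error uniformly over $\mS_n$; one must construct the permutation-symmetric universal test and verify the method-of-types exponent in the adversarial (min-over-hull) regime. Second, closing the measured-versus-unmeasured relative-entropy gap---so that pinching with respect to $\tau_n^*$ recovers the full quantum $D(\sigma_n^*\|\tau_n^*)$ rather than a strictly smaller measured value---requires care precisely because $\sigma_n^*$ need not commute with $\tau_n^*$. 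Finally, for the stated remark, when each $\mS_n$ lies in a subspace of dimension $d_n=\mathrm{poly}(n)$, the convex hull on the null hypothesis can be dropped: any mixture $\sigma_n=\sum_i p_i\rho_n^{(i)}$ with $\rho_n^{(i)}\in\mS_n$ obeys the identity $D(\sigma_n\|\tau_n)=\sum_i p_i D(\rho_n^{(i)}\|\tau_n)-\chi$ with Holevo quantity $\chi=\sum_i p_i D(\rho_n^{(i)}\|\sigma_n)\le\log d_n$, so $\min_{\sigma_n\in\mC(\mS_n)}D(\sigma_n\|\tau_n)\ge\min_{\rho\in\mS_n}D(\rho\|\tau_n)-\log d_n$; since $\tfrac1n\log\mathrm{poly}(n)\to0$, the regularized value $R$ is unchanged when $\mC(\mS_n)$ is replaced by $\mS_n$.
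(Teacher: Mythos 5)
Your converse direction is sound: convex-hull invariance of $D_H^\ve$, monotonicity under shrinking both hypotheses to a single pair $(\sigma_n,\tau_n)$, and the standard one-shot bound $D_H^\ve(\sigma_n\|\tau_n)\le\qty(D(\sigma_n\|\tau_n)+h_2(\ve))/(1-\ve)$ do give $\lim_{\ve\to0}\limsup_n\frac1nD_H^\ve(\mS_n\|\mT_n)\le R$. Be aware, though, that the paper itself offers no proof of this proposition---it is imported verbatim as Theorem 5 of Ref.~\cite{Bergh2023composite}---so the relevant comparison is with that reference and with the paper's proof of its own pinched variant (Proposition~\ref{proposition app: pinched composite quantum Stein's lemma}); both route the hard direction through the classical adversarial hypothesis-testing theorem of Ref.~\cite{Brandao_Adversarial_Hypothesis} (compatible pairs, measured relative entropy, minimax exchange, permutation-invariant reduction), not through a direct method-of-types construction.

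The genuine gap is in your achievability argument, in several places. (i) Pinching in the eigenbasis of $\tau_n^*$ yields classical distributions over the spectrum of $\tau_n^*$, which is \emph{not} a product sample space: $\tau_n^*\in\mC(\mT_n)$ is a correlated state on $\mH^{\otimes n}$, and its eigenbasis (a Schur basis, if you symmetrize) is not a product basis, so ``the empirical type of the outcome string'' is undefined and the method of types cannot be run at all. (ii) Closure condition (2) concerns measurements on \emph{subsystems} followed by conditioning; it says nothing about images of the sets under a global pinching, so your claim that $\mM_n(\mS_n),\mM_n(\mT_n)$ inherit the adversarial structure does not follow. (iii) The assertion that pinching with respect to $\tau_n^*$ costs no relative-entropy rate is false in general: the Hiai--Petz loss is $\log\abs{\spec(\tau_n^*)}$, which can be $\Theta(n)$ for a generic element of $\mC(\mT_n)$; one must first argue that the minimizing pair can be taken permutation invariant (using condition (4) and joint convexity) and then invoke the Schur--Weyl bound $\abs{\spec(\tau_n^*)}\le\mathrm{poly}(n)$, neither of which appears in your sketch. (iv) Even granting a product alphabet, both hypotheses contain correlated, non-i.i.d.\ distributions, so neither the uniform type-I control of your universal test nor the type-II bound $2^{-n\min D(P\|Q)}$ (a bound valid against i.i.d.\ alternatives) follows from standard typicality; proving exactly this adversarial statement is the content of Theorem 16 of Ref.~\cite{Brandao_Adversarial_Hypothesis}, which your argument implicitly assumes rather than establishes. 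Finally, in the remark on dropping the hull, the bound $\chi\le\log d_n$ is not a standard fact when $d_n$ is the dimension of an \emph{operator} subspace (the Hilbert-space dimension here is exponential in $n$); the correct route is Carath\'eodory---re-decompose the minimizing mixture into at most $d_n+1$ elements of $\mS_n$ and use $\chi\le H(p)\le\log(d_n+1)$---which is fixable but missing as written.
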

When we take $\mT_n=\qty{\tau^{\otimes n}}$, we immediately obtain the following result.

\begin{thm}[Theorem~\ref{theorem: asymptotic GPO} in the main text]\label{Theorem: supplemental material asymptotic extractable work under GPO}
The asymptotic extractable work of the sequence of black boxes $\qty{\mS_n}_{n=1}^\infty$ which satisfies the conditions above under Gibbs-preserving operations is 
    \bal\label{Eq: asymptotic work in GPO}
    \beta W_{\rm{GPO}}(\qty{\mS_n}_{n=1}^\infty)&=\lim_{\ve\to +0}\lim_{n\to \infty}\frac{1}{n}D^\ve_H(\mS_n||\tau^{\otimes n})
    =\lim_{n\to\infty}\frac{1}{n}\min_{\rho_n\in\mC(\mS_n)}D(\rho_n||\tau^{\otimes n})
    \eal
    where $\mC(\cdot)$ denotes the convex hull of the set. Furthermore, when for any $n\in \mbN$,  $\mS_n$ is contained by the subspace of $\mD(\mH)$ the dimension of which is polynomial to $n$, the convex hull on $\mS_n$ can be removed.

Also, If the sequence of the black box is given as the family with a tensor-product structure 
$
\mS^{\rm TP}_n(S):=\qty{\bigotimes_{i=1}^n\rho_i~|~\rho_i\in S, \forall i}
$
generated by an arbitrary set $S\subset \mD(\mH)$, the last equation of Eq.~\eqref{Eq: asymptotic work in GPO} is reduced to a simpler form as
\bal\label{Eq: regularization is not necessary}
\lim_{n\to\infty}\frac{1}{n}\min_{\rho_n\in\mC(\mS_n(S))}D(\rho_n||\tau^{\otimes n})=\min_{\rho\in \mC(S)}D(\rho||\tau).
\eal
\end{thm}

The last statement follows because the relative entropy of the black box closed under the partial trace on any subsystems with respect to the thermal state $\tau^{\otimes n}$ is additive, namely,
\bal
\min_{\rho_n\in\mC(\mS^{\rm TP}_n(S))}D(\rho_n||\tau^{\otimes n})+\min_{\rho_m\in\mC(\mS^{\rm TP}_m(S))}D(\rho_m||\tau^{\otimes m})=\min_{\rho_{n+m}\in\mC(\mS^{\rm TP}_{n+m}(S))}D(\rho_{n+m}||\tau^{\otimes (n+m)}),
\eal
which is shown in the same way as Ref.~\cite{eisert_2003_remarks}. Alternative proof of this additivity can be given~\cite{Rubboli_private_communication} employing similar techniques in Ref.~\cite{Rubboli_2024_new, Rubboli_2023_mixed_state_additivity}.
From this property, it immediately follows that 
\bal\label{Eq: additivity of RE of Black box}
\lim_{n\to\infty}\frac{1}{n}\min_{\rho_n\in\mC(\mS^{\rm TP}_n(S))}D(\rho_n||\tau^{\otimes n})=\lim_{n\to\infty}\frac{1}{n}\cdot n\min_{\rho\in\mC(S)}D(\rho||\tau)=\min_{\rho\in\mC(S)}D(\rho||\tau).
\eal

Furthermore, the convex hull in the regularization of Eq.~\eqref{Eq: asymptotic work in GPO} can be removed when the black box is the i.i.d. states black box. Then, from the additivity of the relative entropy itself,  
\bal\label{Eq: EW of i.i.d. box under GPO}
\beta W_{\rm{GPO}}(\qty{\mS^{\rm{i.i.d.}}_n(S)}_{n=1}^\infty)&=\lim_{n\to\infty}\frac{1}{n}\min_{\rho^{\otimes n}\in\mS_n}D(\rho^{\otimes n}||\tau^{\otimes n})\\
&=\min_{\rho\in S}D(\rho||\tau)
\eal 
holds. In this case, the asymptotic black box extractable work equals the worst-case extractable work in the normal setting where the experimenters have complete information.

\subsection{Asymptotic black box work extraction under Gibbs-preserving covariant operations~(Proofs of Lemma~\ref{lemma: pinched composite quantum Stein's lemma} and Theorem~\ref{theorem: asymptotic GPC})}\label{app: asymptotic GPC}
In the same way as the previous discussion, for a given sequence of the black boxes $\qty{\mS_n}_{n=1}^\infty$, the asymptotic black box work extraction under Gibbs-preserving covariant operations is expressed as follows.
\bal
\beta W_{\rm{GPC}}(\qty{\mS_n}_{n=1}^\infty)&=\lim_{\ve\to +0}\lim_{n\to \infty}\frac{1}{n}D^\ve_H(\mP(\mS_n)||\tau^{\otimes n})
\label{eq:asymptotic GPC with hypothesis testing}
\eal
The RHS is more complicated than the LHS of Proposition~\ref{Pro: Bergh2023 composite HT and stein's lemma}, since $\mP(\mS_n)$ no longer has the tensor-product structure, which means that $\mP(\mS_n)$ is not closed under the measurement on any subsystems and conditioning on the measurement result. Here, we show that another type of composite quantum Stein's lemma holds even in this case.
\begin{pro}[Lemma~\ref{lemma: pinched composite quantum Stein's lemma} in the main text]\label{proposition app: pinched composite quantum Stein's lemma}
Let $\qty{\mS_n}_{n=1}^\infty$ be a sequence of black boxes which satisfies the conditions above. Here, the following holds.
    \bal
    \lim_{\ve\to +0}\lim_{n\to \infty}\frac{1}{n}D^\ve_H(\mP(\mS_n)||\tau^{\otimes n})=\lim_{n\to \infty}\frac{1}{n}\min_{\rho_n\in \mC(\mS_n)}D(\rho_n||\tau^{\otimes n})
    \eal

Furthermore, when the sequence of the black box is given as the family with a tensor-product structure, Eq.~\eqref{Eq: regularization is not necessary} implies that
\bal
\lim_{\ve\to +0}\lim_{n\to \infty}\frac{1}{n}D^\ve_H(\mP(\mS^{\rm TP}_n(S))||\tau^{\otimes n})=\min_{\rho\in \mC(S)}D(\rho||\tau).
\eal

\end{pro}
Note that a similar type of the quantum Stein's lemma can be seen in Ref.~{\cite[Lemma 4]{Lipka_Bartosik_Quantum_dichotomies}}, and is included by Proposition~\ref{proposition app: pinched composite quantum Stein's lemma}.

To show this, we start from showing the following lemma.
\begin{lem}\label{lemma: relative entropy does not change under pinching in the limit}
    Let $\rho,\tau\in\mD(\mH)$ be arbitrary states, and $\mP$ be the pinching channel with respect to $\tau$. Then, the following holds.
    \bal
    0\leq D(\rho||\tau)-D(\mP(\rho)||\tau)\leq \log\abs{\spec(\tau)},
    \eal
    where $\abs{\spec(\tau)}$ is the number of the different eigenvalues of $\tau$.
\end{lem}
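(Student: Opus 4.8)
The plan is to reduce the two-sided bound to standard facts about the pinching channel: a data-processing argument for the left inequality, and the pinching inequality together with the ordering $D\le D_{\max}$ for the right inequality. Throughout, write $N=\abs{\spec(\tau)}$ and let $\{\Pi_{E_i}\}$ denote the eigenprojectors of $\tau$, so that $\mathcal P(\cdot)=\sum_i\Pi_{E_i}(\cdot)\Pi_{E_i}$ and $\mathcal P(\tau)=\tau$.

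First I would establish the lower bound $D(\rho\|\tau)\ge D(\mathcal P(\rho)\|\tau)$. Since $\mathcal P$ is a CPTP map that fixes the reference state, the data-processing inequality for the relative entropy gives $D(\rho\|\tau)\ge D(\mathcal P(\rho)\|\mathcal P(\tau))=D(\mathcal P(\rho)\|\tau)$, which is exactly the first inequality.

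The core of the argument is the identity
\begin{equation}
D(\rho\|\tau)-D(\mathcal P(\rho)\|\tau)=D(\rho\|\mathcal P(\rho)).
\end{equation}
To prove it I would expand both relative entropies and use that $\log\tau$ and $\log\mathcal P(\rho)$ are block-diagonal with respect to $\{\Pi_{E_i}\}$ — the former because it is a function of $\tau$, the latter because $\mathcal P(\rho)$ commutes with every $\Pi_{E_i}$. Combined with the self-adjointness of $\mathcal P$ under the Hilbert--Schmidt inner product, this yields $\Tr[\rho\log\tau]=\Tr[\mathcal P(\rho)\log\tau]$ and $\Tr[\rho\log\mathcal P(\rho)]=\Tr[\mathcal P(\rho)\log\mathcal P(\rho)]$. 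After these substitutions the two terms involving $\log\tau$ cancel and the identity follows.

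Finally I would bound $D(\rho\|\mathcal P(\rho))$. The pinching inequality $\rho\le N\,\mathcal P(\rho)$ — which follows from writing $\mathcal P(\cdot)=\frac{1}{N}\sum_{k=0}^{N-1}U_k(\cdot)U_k^\dagger$ with $U_k=\sum_j e^{2\pi ijk/N}\Pi_{E_j}$, so that $N\mathcal P(\rho)-\rho=\sum_{k\ge 1}U_k\rho U_k^\dagger\ge 0$ — implies $D_{\max}(\rho\|\mathcal P(\rho))\le\log N$. Since $D\le D_{\max}$ in general, combining this with the identity above gives the upper bound. None of the steps is technically hard; the only points requiring care are the bookkeeping in the identity (tracking which operators are block-diagonal so the cross terms genuinely cancel) and invoking the pinching inequality, which is the one nontrivial ingredient I would import.
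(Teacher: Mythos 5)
Your proposal is correct and follows essentially the same route as the paper: data processing for the lower bound, the identity $D(\rho\|\tau)-D(\mathcal{P}(\rho)\|\tau)=D(\rho\|\mathcal{P}(\rho))$ (which you prove in more detail than the paper does), and Hayashi's pinching inequality for the upper bound. The only cosmetic difference is the last step, where you pass through $D\le D_{\max}$ while the paper applies operator monotonicity of $\log$ to get $D(\rho\|\mathcal{P}(\rho))\le D(\rho\|\rho/\abs{\spec(\tau)})=\log\abs{\spec(\tau)}$; both are one-line consequences of the same pinching inequality.
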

\begin{proof}
    $0\leq D(\rho||\tau)-D(\mP(\rho)||\tau)$ is shown by using the data processing inequality of the relative entropy. To show the other inequality, we first employ Hayashi's pinching inequality~\cite{Masahito_Hayashi_2002_optimal_sequence}
    \bal
    \mP(\rho)\geq\frac{\rho}{\abs{\spec(\tau)}}.
    \eal
    Due to this inequality and the properties of the relative entropy, the following holds.
    \bal
    D(\rho||\tau)&=D(\mP(\rho)||\tau)+D(\rho||\mP(\rho))\\
    &\leq D(\mP(\rho)||\tau)+D(\rho||\rho/\abs{\spec(\tau)})=D(\mP(\rho)||\tau)+\log\abs{\spec(\tau)},
    \eal
    where in the inequality, we used the operator monotonicity of $\log$.
\end{proof}

In the subsequent discussion, we denote the CPTP map which represents the measurement whose POVM elements are $\qty{M_a}_a$ as
\bal
\mM(\rho):=\sum_a \Tr[\rho M_a]\ketbra{a}{a},
\eal
where $\qty{\ket{a}}_a$ is the orthogonal vectors in the classical system.
We review the concepts called compatible pair.
\begin{defn}[Ref.~\cite{Brandao_Adversarial_Hypothesis}]
    Let $\vec{\mbM}=(\mbM_1,\mbM_2,\ldots)$ be the sequence of the measurements with $\mbM_n$ reprsenting the set of measurements on $\mD(\mH^{\otimes n})$. 
    Furthermore, let $\mS=(\mS_1,\mS_2,\ldots)$ be the sequence of the sets of state where $\mS_n$ is the subset of $\mD(\mH^{\otimes n})$ for every $n\in\mbN$.
    We say that $(\vec{\mbM},\mS)$ is the compatible pair when the sequence $\mS$ is closed under the measurement in $\vec{\mbM}$ on any subsystems and conditioning on the measurement outcome, i.e., for any state $\rho_{n+k}\in\mS_{n+k}$, after performing any measurement in $\mbM_k$ and conditioning on the outcome, the post-measured state is the element of $\mS_n$.
\end{defn}

Furthermore, we consider the restricted set of measurements called incoherent measurements, which guarantees that the probability distribution obtained by such a measurement is invariant under time translation.

\begin{defn}
    Let $H$ be the Hamiltonian of the considered system, and $\qty{M_a}_a$ be a set of POVM elements of a measurement $\mM$. We say that $\mM$ is an incoherent measurement if and only if all the POVM elements $M_a$ satisfy the following condition.
    \bal
    \forall a,~~\mP(M_a)=M_a.
    \eal
    We denote the sequence of the incoherent measurement as $\vec{\mbM}^{\rm incoh}$.
\end{defn}
From the definition, the nonzero elements of the POVM elements of the incoherent measurement are in the energy blocks. From this observation, it holds that $M^{\rm incoh}_a=\exp(-iHt)M^{\rm incoh}_a\exp(iHt),~\forall a, ~\forall t\in\mbR$. This property guarantees that the probability distribution of the measurement outcome is invariant under the time translation, i.e.,
\bal
\Tr\qty[e^{-iHt}\rho e^{iHt} M^{\rm incoh}_a]&=\Tr\qty[\rho e^{iHt}M^{\rm incoh}_a e^{-iHt} ]\\
&=\Tr[\rho M^{\rm incoh}_a].
\eal
We remark that this is a special case of the time-translation covariant measurement~\cite{noauthor_probabilistic_nodate}, the measurement whose POVM elements $\qty{M_a^t}_a$  satisfy
\bal
e^{iH\Delta t}M^t_ae^{-iH\Delta t}=M^{t+\Delta t}_a~\forall a,~\forall t,\Delta t\in\mbR.
\eal

We also denote the sequence of all measurements as $\vec{\mbM}^{\rm all}$. Before showing Proposition~\ref{proposition app: pinched composite quantum Stein's lemma}, we show the following two lemmas.
\begin{lem}\label{Lemma: incoherent measurement and pinched black box is the compatible pair}
Let $\mS=\qty{\mS_n}_{n=1}^\infty$ be the sequence of the black boxes satisfying the condition mentioned above. Furthermore, let $\mP(\mC(\mS))$ be the sequence of the sets of states $\qty{\mP(\mC(\mS_n))}_{n=1}^\infty$.
Then, $(\vec{\mbM}^{\rm incoh},\mP(\mC(\mS)))$ is the compatible pair.
\end{lem}
Note that $(\vec{\mbM}^{\rm incoh},\mP(\mS))$ is the compatible pair too, which is proven in the same way as the following proof. Here, we consider the convex hull of the black boxes to use this lemma to show  Proposition~\ref{proposition app: pinched composite quantum Stein's lemma}.
\begin{proof} of Lemma~\ref{Lemma: incoherent measurement and pinched black box is the compatible pair}.
It suffices to show that for any POVM elements $M^{\rm incoh}_a$ and the arbitrary state $\mP(\rho_{n+k})\in \mP(\mC(\mS_{n+k})), \forall n,k\in\mbN$ 
\bal
\Tr_{n+1,\ldots,n+k}\qty[\qty(I\otimes M^{\rm incoh}_a)\mP\qty(\rho_{n+k})]\in\mP(\mC(\mS_{n}))
\eal
holds.
This can be checked as follows.
\begin{equation}\begin{aligned}
&\Tr_{n+1,\ldots,n+k}\qty[\qty(I\otimes M^{\rm incoh}_a)\mP\qty(\rho_{n+k})]\\
&=\Tr_{n+1,\ldots,n+k}\qty[\mP(I\otimes (M^{\rm incoh}_a))\rho_{n+k}]\\
&=\Tr_{n+1,\ldots,n+k}\qty[(I\otimes M^{\rm incoh}_a)\rho_{n+k}]\\
& = \Tr_{n+1,\ldots,n+k}\qty[(\mP_{1\dots n}(I)\otimes M^{\rm incoh}_a)\rho_{n+k}]\\
& = \mP_{1,\dots,n}\left(\Tr_{n+1,\ldots,n+k}\qty[(I\otimes M^{\rm incoh}_a)(\rho_{n+k})]\right)
\in\mP(\mC(\mS_{n})).
\end{aligned}\end{equation}
The first equality is because the pinching channel satisfies $\mP^\dagger=\mP$, and the property mentioned at the beginning of this proof is used in the second equality.
The third equality is because $\mP_{1,\dots,n}(I)=I$, where $\mP_{1,\dots,n}$ is the pinching channel applied to the first $n$ subsystems. 
To show the final inclusion, we used the second property of the sequence of the black boxes $\mS=\qty{\mS_n}_{n=1}^\infty$ mentioned at the beginning of this section.
\end{proof}

\begin{lem}\label{Lemma: incoherent measurement and pinching plus all measurement}
    It holds that
    \bal
\sup_{\mM\in \mbM^{\rm incoh}_n}\min_{\rho_n\in \mC(\mS_n)}D(\mM(\rho_n)||\mM(\tau^{\otimes n}))=\sup_{\mM\in \mbM^{\rm{all}}_n}\min_{\rho_n\in \mC(\mS_n)}D(\mM(\mP(\rho_n))||\mM(\tau^{\otimes n})).
    \eal
\end{lem}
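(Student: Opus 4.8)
The plan is to pass from the optimization over all measurements of pinched states on the right to the optimization over incoherent measurements of the original states on the left by a single map on the level of measurements. First I would introduce, for any measurement $\mM\in\mbM^{\rm all}_n$ with POVM elements $\qty{M_a}_a$, its \emph{pinched} version $\tilde{\mM}$ with POVM elements $\qty{\mP(M_a)}_a$. This is a legitimate incoherent measurement: positivity of each $\mP(M_a)$ follows from complete positivity of $\mP$, the completeness relation $\sum_a\mP(M_a)=\mP(\sum_aM_a)=\mP(I)=I$ from unitality of $\mP$, and the incoherence condition $\mP(\mP(M_a))=\mP(M_a)$ from idempotency $\mP^2=\mP$. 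Thus $\mM\mapsto\tilde{\mM}$ maps $\mbM^{\rm all}_n$ into $\mbM^{\rm incoh}_n$, and since $\tilde{\mM}=\mM$ whenever $\mM$ is already incoherent, this map is in fact onto $\mbM^{\rm incoh}_n$.

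The crux is a pointwise identity relating the two objective functions. Using self-adjointness $\mP^\dagger=\mP$ together with $\mP^2=\mP$, for every state $\rho$ and every outcome $a$ one has $\Tr[\mP(\rho)M_a]=\Tr[\rho\,\mP(M_a)]=\Tr[\mP(\rho)\mP(M_a)]$, so the classical outcome distributions satisfy $\mM(\mP(\rho))=\tilde{\mM}(\rho)=\tilde{\mM}(\mP(\rho))$. Because $\tau^{\otimes n}$ commutes with the total Hamiltonian we have $\mP(\tau^{\otimes n})=\tau^{\otimes n}$, hence also $\mM(\tau^{\otimes n})=\tilde{\mM}(\tau^{\otimes n})$. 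Feeding both equalities into the classical relative entropy (which depends on a measurement only through its output distributions) gives, for every $\rho_n$,
\[
D(\mM(\mP(\rho_n))\,\|\,\mM(\tau^{\otimes n}))=D(\tilde{\mM}(\rho_n)\,\|\,\tilde{\mM}(\tau^{\otimes n})).
\]
Since this holds simultaneously for all $\rho_n\in\mC(\mS_n)$, I may take $\min_{\rho_n\in\mC(\mS_n)}$ on both sides and the equality persists.

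Finally I would combine these two facts. Taking $\sup_{\mM\in\mbM^{\rm all}_n}$ of the left member above and using that $\mM\mapsto\tilde{\mM}$ surjects onto $\mbM^{\rm incoh}_n$ shows that the right-hand side of the lemma equals $\sup_{\mN\in\mbM^{\rm incoh}_n}\min_{\rho_n\in\mC(\mS_n)}D(\mN(\rho_n)\,\|\,\mN(\tau^{\otimes n}))$. For the left-hand side of the lemma, I note that for an incoherent $\mM$ the same identity $\Tr[\rho M_a]=\Tr[\rho\,\mP(M_a)]=\Tr[\mP(\rho)M_a]$ gives $\mM(\rho)=\mM(\mP(\rho))$, so inserting $\mP$ on the state costs nothing and the left-hand side also equals this common quantity; the two sides therefore coincide. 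The only delicate bookkeeping—and the point I would state most carefully—is that the measurement identity is \emph{pointwise} in $\rho_n$, which is precisely what allows $\min_{\rho_n}$ and $\sup_{\mM}$ to be carried through the substitution $\mM\leftrightarrow\tilde{\mM}$ without any minimax exchange; no convexity or Sion-type argument is needed here.
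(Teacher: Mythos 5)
Your proof is correct and takes essentially the same route as the paper's: both rest on the self-adjointness and idempotency of the pinching map (so that $\Tr[\rho\,\mP(M_a)]=\Tr[\mP(\rho)M_a]$) together with the observation that $\qty{\mP(M_a)}_a$ is a valid incoherent POVM, the only difference being that the paper argues the two inequalities separately while you package them as a single measurement-level surjection $\mM\mapsto\tilde{\mM}$ preserving the objective pointwise in $\rho_n$. (Your closing remark about the left-hand side is redundant, since the quantity you derive is already literally the left-hand side of the lemma, but this is harmless.)
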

\begin{proof}
    First, we show the $(\leq )$ inequality. For any states $\rho_n$ and any POVM element of the incoherent measurement $M_a^{\rm incoh}$, noting that $\mP(M_a^{\rm incoh})=M_a^{\rm incoh}$, it holds that
\bal
\Tr[\rho_n M^{\rm incoh}_a]=\Tr[\rho_n \mP\qty(M^{\rm incoh}_a)]=\Tr[ \mP(\rho_n) M^{\rm incoh}_a].
\eal
This implies that all the probability distributions that are obtained by measuring the state $\rho_n$ with the incoherent measurement can be realized by measuring the pinched state $\mP(\rho_n)$ with the incoherent measurement. From this, 
\bal
\sup_{\mM\in \mbM^{\rm incoh}_n}\min_{\rho_n\in \mC(\mS_n)}D(\mM(\rho_n)||\mM(\tau^{\otimes n}))\leq\sup_{\mM\in \mbM^{\rm{all}}_n}\min_{\rho_n\in \mC(\mS_n)}D(\mM(\mP(\rho_n))||\mM(\tau^{\otimes n})).
\eal
holds. To show the $(\geq )$ inequality, note that for any POVM elements $\qty{M_a^{\rm all}}_a$ which represents a measurement in $\mbM_n^{\rm all}$, it holds that
\bal
\Tr[\mP(\rho_n) M^{\rm all}_a]=\Tr[\rho_n \mP(M^{\rm all}_a)].
\eal
Here, $\qty{\mP(M^{\rm all}_a)}_a$ satisfies the conditions for POVM elements, i.e.,
\bal
\forall a,~\mP(M^{\rm all}_a)\geq 0, ~~\sum_a\mP(M^{\rm all}_a)=\mP\qty(\sum_a M^{\rm all}_a)=\mP(I)=I
\eal
holds.
Furthermore, due to the definition of the pinching channel, the measurement whose POVM elements are represented as $\qty{\mP(M^{\rm all}_a)}_a$ is the incoherent measurement. From this, one can see that all the probability distribution obtained by measuring the pinched state $\mP(\rho_n)$ with any measurement can be realized by measuring $\rho_n$ with incoherent measurement which implies
\bal
\sup_{\mM\in \mbM^{\rm incoh}_n}\min_{\rho_n\in \mC(\mS_n)}D(\mM(\rho_n)||\mM(\tau^{\otimes n}))\geq\sup_{\mM\in \mbM^{\rm{all}}_n}\min_{\rho_n\in \mC(\mS_n)}D(\mM(\mP(\rho_n))||\mM(\tau^{\otimes n})),
\eal
which completes the proof.
\end{proof}

We are ready to show  Proposition~\ref{proposition app: pinched composite quantum Stein's lemma}.
\begin{proof} of  Proposition~\ref{proposition app: pinched composite quantum Stein's lemma}.
Note that 
\bal
D^\ve_H(\mP(\mS_n)||\tau^{\otimes n})=D^\ve_H(\mC(\mP(\mS_n))||\tau^{\otimes n})
\eal
holds and $D^\ve_H(\mC(\mP(\mS_n))||\tau^{\otimes n})$ can be interpreted as the hypothesis testing divergence of $\mC(\mS_n)$ with respect to $\tau^{\otimes n}$ when the allowed measurement is restricted to the incoherent measurement. Due to Lemma~\ref{Lemma: incoherent measurement and pinched black box is the compatible pair} and \cite[Theorem 16]{Brandao_Adversarial_Hypothesis}, the following holds.
\bal
\lim_{\ve\to +0}\lim_{n\to \infty}\frac{1}{n} D^\ve_H(\mC(\mP(\mS_n))||\tau^{\otimes n})&=\lim_{n\to \infty}\frac{1}{n}\sup_{\mM\in \mbM^{\rm incoh}_n}\min_{\rho_n\in \mC(\mP(\mS_n))}D(\mM(\rho_n)||\mM(\tau^{\otimes n}))\\
&=\lim_{n\to \infty}\frac{1}{n}\sup_{\mM\in \mbM^{\rm incoh}_n}\min_{\rho_n\in \mC(\mS_n)}D(\mM(\rho_n)||\mM(\tau^{\otimes n}))\\
&=\lim_{n\to \infty}\frac{1}{n}\sup_{\mM\in \mbM^{\rm{all}}_n}\min_{\rho_n\in \mC(\mS_n)}D(\mM(\mP(\rho_n))||\mM(\tau^{\otimes n}))
\eal
Here, the second line is because $\mM(\rho_n)=\mM\circ\mP(\rho_n)$ for $\mM\in\mbM^{\rm incoh}_n$ and thus the minimization for $\rho_n$ over $\mC(\mS_n)$ coincides with that over $\mC(\mP(\mS_n))$, and the third line follows from Lemma~\ref{Lemma: incoherent measurement and pinching plus all measurement}.
Employing Ref.~\cite[Lemma 13]{Brandao_Adversarial_Hypothesis}, we can exchange the $\sup$ and $\min$, i.e., 
\bal
\lim_{n\to \infty}\frac{1}{n}\sup_{\mM\in \mbM^{\rm{all}}_n}\min_{\rho_n\in \mC(\mS_n)}D(\mM(\mP(\rho_n))||\mM(\tau^{\otimes n}))&=\lim_{n\to \infty}\frac{1}{n}\min_{\rho_n\in \mC(\mS_n)}\sup_{\mM\in \mbM^{\rm{all}}_n}D(\mM(\mP(\rho^{\otimes n}))||\mM(\tau^{\otimes n}))\\
&=\lim_{n\to\infty}\frac{1}{n}\min_{\rho_n\in\mC(\mS_n)}D_{\mbM_n^{\rm{all}}}(\mP(\rho_n)||\tau^{\otimes n}),
\eal
where $D_{\mbM^{\rm{all}}_n}(\mP(\rho_n)||\tau^{\otimes n})$ is the $\mbM_n^{\rm{all}}$-measured relative entropy of $\mP(\rho_n)$ with respect to $\tau^{\otimes n}$~\cite{Piani_2009_relative_entropy_of}.

We first note that the infimum is achieved at a permutation invariant state~\cite[Lemma 23]{Bergh2023composite}.
Therefore, letting ${\rm PI}_n$ be the set of $n$-qudit permutation invariant states, we get 
\bal
\lim_{n\to \infty}\frac{1}{n}\min_{\rho_n\in \mC(\mS_n)}D_{\mbM_n^{\rm all}}(\mP(\rho_n)||\tau^{\otimes n}) = \lim_{n\to \infty}\frac{1}{n}\min_{\rho_n\in \mC(\mS_n)\cap {\rm PI}_n}D_{\mbM_n^{\rm all}}(\mP(\rho_n)||\tau^{\otimes n}).
\eal
We now recall Ref.~\cite[Lemma 2.4]{Berta_2021_On_Composite_HT}, showing that for all permutation invariant states $\eta_n$ and $\sigma_n$, it holds that
\bal
 D(\eta_n\|\sigma_n) - {\rm log\,poly}(n)\leq D_{\mbM_{\rm all}}(\eta_n\|\sigma_n) \leq D(\eta_n\|\sigma_n).
 \label{eq:measured rel and rel permutation invariant}
\eal
This implies
\bal
\lim_{n\to \infty}\frac{1}{n}\min_{\rho_n\in \mC(\mS_n)\cap {\rm PI}_n}D_{\mbM_{\rm all}}(\mP(\rho_n)||\tau^{\otimes n}) = \lim_{n\to \infty}\frac{1}{n}\min_{\rho_n\in \mC(\mS_n)\cap{\rm PI}_n}D(\mP(\rho_n)||\tau^{\otimes n}).
\eal

Note that the number of the different eigenvalue of $H^{\times n}$ is upper bounded by the number of type classes of $n$ length strings when the set of alphabets is $\qty{0,1,\ldots, d-1}$. Since the number of type classes is upper-bounded by $(n+1)^d$~\cite{Cover_Thomas}, due to Lemma~\ref{lemma: relative entropy does not change under pinching in the limit}, it holds that for any states $\rho_n$,
\bal
 D(\rho_n||\tau^{\otimes n})- {\rm log\,poly}(n)\leq D(\mP(\rho_n)||\tau^{\otimes n})\leq D(\rho_n||\tau^{\otimes n}),
\eal
which implies
\bal
\lim_{n\to \infty}\frac{1}{n}\min_{\rho_n\in \mC(\mS_n)\cap{\rm PI}_n}D(\mP(\rho_n)||\tau^{\otimes n})=\lim_{n\to \infty}\frac{1}{n}\min_{\rho_n\in \mC(\mS_n)\cap{\rm PI}_n}D(\rho_n||\tau^{\otimes n}).
\eal
Again, employing Ref.~\cite[Lemma 23]{Bergh2023composite}, we obtain
\bal
\lim_{n\to \infty}\frac{1}{n}\min_{\rho_n\in \mC(\mS_n)\cap{\rm PI}_n}D(\rho_n||\tau^{\otimes n})=\lim_{n\to \infty}\frac{1}{n}\min_{\rho_n\in \mC(\mS_n)}D(\rho_n||\tau^{\otimes n}).
\eal
Due to Eq.~\eqref{Eq: additivity of RE of Black box}, it holds that
\bal
\lim_{n\to \infty}\frac{1}{n}\min_{\rho_n\in \mC(\mS_n)}D(\rho_n||\tau^{\otimes n})=\min_{\rho\in \mC(S)}D(\rho||\tau).
\eal
Combining these, we complete the proof.

\end{proof}

Combining Proposition~\ref{proposition app: pinched composite quantum Stein's lemma} with \eqref{Eq: regularization is not necessary}, we finally obtain the following result. 

\begin{thm}[Theorem~\ref{theorem: asymptotic GPC} in the main text]\label{Theorem: supplemental material asymptotic extractable work under GPC}
    The asymptotic black box extractable work of the sequence of the black boxes $\qty{\mS_n}_{n=1}^\infty$ under Gibbs-preserving covariant operations is given by
\bal\label{Eq: p-asymptotic extractable work under GPC}
\beta W_{\rm{GPC}}(\qty{\mS_n}_{n=1}^\infty)&=\lim_{n\to \infty}\frac{1}{n}\min_{\rho_n\in \mC(\mS_n)}D(\rho_n||\tau^{\otimes n}).
\eal

Furthermore, when the sequence of the black box is given as the family with a tensor-product structure, Eq.~\eqref{Eq: regularization is not necessary} implies that
\bal
\beta W_{\rm{GPC}}(\qty{\mS^{\rm TP}_n(S)}_{n=1}^\infty)=\min_{\rho\in \mC(S)}D(\rho||\tau).
\eal

\end{thm}

\subsection{Application to some examples}
The results indicate the fundamental difficulty of the work extraction task without information about the initial state. We exhibit some examples that show this difference.

Consider the $n$-qubit system $\mH_2^{\otimes n}$ with $\mH_2=\Span\qty{\ket{0},\ket{1}}$. 
Suppose that the system is equipped with the trivial Hamiltonian, i.e., $H=0$, which comes with the thermal state $I^{\otimes n}/{2^n}$.
Consider the setting where we aim to extract as much work as possible from a given state
\bal\label{Eq: app ideal state}
\ket{\psi}=\frac{4}{5}\ket{0}+\frac{3}{5}\ket{1}.
\eal
By applying a Gibbs preserving operation, the state is exposed to bit-flip noise on some of the subsystems. We compare the two settings---the one with the knowledge of the error location and the one without.
This situation is exactly that of the black box work extraction from the black box 
\bal
\mS_n=\lset\rho^{(n)}=\bigotimes_{i=1}^n\rho_i~\sbar~\rho_i\in\qty{\ketbra{\psi}{\psi},X\ketbra{\psi}{\psi}X}, \forall i\in\qty{1,\ldots, n}\rset.
\eal
We consider the performance of the work extraction task in the asymptotic limit.

We first evaluate the extractable work of the sequence $\qty{\rho^{(n)}}_{n=1}^\infty$ of the element of $\qty{\mS_n}_{n=1}^\infty$ when we have the complete information of this sequence.
The asymptotic extractable work of the sequence $\qty{\rho^{(n)}}_{n=1}^\infty$ is
\bal\label{Eq: supprement work in multi qubit system}
\beta W_{\rm GPO}\qty(\qty{\rho^{(n)}}_{n=1}^\infty)=\lim_{\ve\to +0}\limsup_{n\to \infty}\frac{1}{n}D^\ve_H(\rho^{(n)}||\tau^{\otimes n}),
\eal
which is the consequence from Ref.~\cite{Wang_2019_resource}.
To calculate the lower bound of this quantity, we show the following lemma.
\begin{lem}\label{Lemma: supprement irregular form of superaddditivity of HT divergence}
    Let $\rho_1,\sigma_1\in\mD(\mH_1)$ and $\rho_2,\sigma_2\in\mD(\mH_2)$ be the arbitrary states. Then, the following holds.
    \bal
    D^\ve_H(\rho_1\otimes \rho_2||\sigma_1\otimes \sigma_2)\geq  D^\ve_H(\rho_1||\sigma_1)+D_{\rm min}(\rho_2||\sigma_2)
    \eal
\end{lem}
\begin{proof}
    From the definition of the hypothesis testing divergence and min divergence, the statement can be seen as follows.
    \bal
    D^\ve_H(\rho_1\otimes \rho_2||\sigma_1\otimes \sigma_2)&=-\log\min_{\substack{0\leq M_{12}\leq I_{12} \\ \Tr[\rho_1\otimes \rho_2 M_{12}]\geq 1-\ve }}\Tr[\sigma_1\otimes \sigma_2 M_{12}]\\
    &\geq -\log\min_{\substack{0\leq M_{1}\leq I_{1},0\leq M_2\leq I_2 \\ \Tr[(\rho_1\otimes \rho_2)(M_1\otimes M_2)]\geq 1-\ve }}\Tr[(\sigma_1\otimes \sigma_2) (M_1\otimes M_2)]\\
    &\geq -\log\min_{\substack{0\leq M_{1}\leq I_{1} \\ \Tr[(\rho_1\otimes \rho_2)(M_1\otimes \Pi_{\supp (\rho_2)})]\geq 1-\ve }}\Tr[(\sigma_1\otimes \sigma_2) (M_1\otimes \Pi_{\supp(\rho_2)})]\\
    &=-\log\min_{\substack{0\leq M_{1}\leq I_{1} \\ \Tr[\rho_1 M_1]\geq 1-\ve }}\Tr[\sigma_1 M_1]-\log\Tr[\sigma_2\Pi_{\supp (\rho_2)}]\\
    &=D^\ve_H(\rho_1||\sigma_1)+D_{\rm min}(\rho_2||\sigma_2)
    \eal
\end{proof}
Applying the inequality in Lemma~\ref{Lemma: supprement irregular form of superaddditivity of HT divergence} several times, the hypothesis testing divergence appearing in Eq.~\eqref{Eq: supprement work in multi qubit system} is lower bounded as 
\bal
D^\ve_H(\rho^{(n)}||\tau^{\otimes n})\geq \sum_{i=1}^n D_{\rm min}(\rho_i||\tau).
\eal
Here, we used $D^\ve_H(\rho||\sigma)\geq D_{\rm min}(\rho||\sigma)$.
Since $\tau=I/2$ and $\rho_i$ is either $\ket{\psi}$ or $\ket{\psi'}:=X\ket{\psi}$, when $\rho_i=\ketbra{\psi}{\psi}$, it holds that
\bal
D_{\rm min}(\ketbra{\psi}{\psi}||I/2)=-\log\Tr\qty[\ketbra{\psi}{\psi}\frac{I}{2}]=1,
\eal
and similarly $D_{\rm min}(\ketbra{\psi'}{\psi'}||I/2)=1$ holds.
Combining these, we obtain the lower bound of the extractable work from the sequence in the state-aware scenario as 
\bal
\beta W_{\rm GPO}\qty(\qty{\rho^{(n)}}_{n=1}^\infty)&=\lim_{\ve\to +0}\limsup_{n\to \infty}\frac{1}{n}D^\ve_H(\rho^{(n)}||\tau^{\otimes n})\\
&\geq \lim_{\ve\to +0}\limsup_{n\to \infty}\frac{1}{n}\sum_{i=1}^n D_{\rm min}(\rho_i||I/2)\\
&=\lim_{\ve\to +0}\limsup_{n\to \infty}\frac{1}{n}\sum_{i=1}^n 1=1.
\eal
Therefore, for any sequence $\qty{\rho^{(n)}}_{n=1}^\infty$, the asymptotic extractable work is equal to or larger than 1.

We next consider the asymptotic extractable work from the sequence of the black box $\qty{\mS_n}_{n=1}^\infty$. Applying Theorem~\ref{Theorem: supplemental material asymptotic extractable work under GPO}, the asymptotic extractable work from the sequence of the black box is 
\bal
\beta W_{\rm GPO}(\qty{\mS_n}_{n=1}^\infty)=\min_{\sigma\in\mC(S)}D(\sigma||\tau).
\eal
Notice that $\sigma:=\frac{1}{2}\ketbra{\psi}{\psi}+\frac{1}{2}\ketbra{\psi'}{\psi'}\in\mC(S)$. Since it holds that
\bal
D\qty(\sigma||\frac{I}{2})=1-S(\sigma)=1-h\qty(\frac{1}{50})<1.
\eal
Here, $S(\sigma)\coloneqq -\Tr(\sigma\log\sigma)$ is the von Neumann entropy, and $h(p)$ is the binary entropy.
From these, it holds that
\bal
\beta W_{\rm GPO}(\qty{\mS_n}_{n=1}^\infty)=\min_{\sigma\in\mC(S)}D(\sigma||\tau)\leq D\qty(\sigma||\frac{I}{2})=1-h\qty(\frac{1}{50})<1.
\eal

Comparing the two scenarios, it follows that
\bal
\beta W_{\rm GPO}(\qty{\mS_n}_{n=1}^\infty)\leq 1-S(\sigma)<1\leq \beta W_{\rm GPO}\qty(\qty{\rho^{(n)}}_{n=1}^\infty)
\eal
for any sequence $\qty{\rho^{(n)}}_{n=1}^\infty$. This means that $\qty{\mS_n}_{n=1}^\infty$ is the sequence of the black box whose extractable work is strictly smaller when we do not have complete information about the initial state.

It is also interesting to compare situations in which one extracts work from the given state with knowledge about the probability of the noise to those without knowledge.
To this end, let us consider another example. Suppose that ideally we are given $n$ copies of $\ket{\psi}$ in Eq. ~\eqref{Eq: app ideal state}, and each subsystem experiences the bit-flip noise with a small probability $p$.
We also assume that, as the example above, the Hamiltonian of each system is fully degenerate.
If we know the probability $p$ of the error, for each subsystems, we obtain $\ket{\psi}$ with probability $1-p$ and $\ket{\psi'}$ with probability $p$. Because of this, The given state is
\bal
\rho^{\otimes n}=\qty((1-p)\ketbra{\psi}{\psi}+p\ketbra{\psi'}{\psi'})^{\otimes n}.
\eal
The extractable work from this sequence of states is
\bal
\beta W_{\rm GPO}(\rho^{\otimes n})=\lim_{\ve\to+0}\lim_{n\to \infty}\frac{1}{n}D^\ve_H(\rho^{\otimes n}||(I/2)^{\otimes n})=D(\rho||I/2)=1-S(\rho).
\eal
When $p$ is sufficiently small, it can be shown that $S(\rho)<h\qty(\frac{1}{50})$. 
On the other hand, the situation without knowledge of the error probability corresponds to the black box work extraction setting mentioned above regardless of the probability $p$.
From these, it holds that
\bal
\beta W_{\rm GPO}(\qty{\mS_n}_{n=1}^\infty)\leq 1-h\qty(\frac{1}{50})<1-S(\rho)=\beta W_{\rm GPO}(\rho^{\otimes n}),
\eal
which means that the extractable work without the information about the error probability is strictly smaller than that with the information. This example indicates the fundamental restriction of the work extraction tasks without knowing the probability distribution with which the initial state is given.

A more extreme example that represents the restriction of the work extraction task from a black box is the following.
Consider the family of black boxes with a tensor-product structure generated by the set $S=\qty{\ketbra{\phi_1}{\phi_1},\ldots, \ketbra{\phi_d}{\phi_d}}$, where $\ket{\phi_1},\ldots,\ket{\phi_d}$ are the eigenstates of the Hamiltonian $H$ of a single subsystem.
If we have information about the initial state, we can extract the nonzero work from it since the free energy of any state in the black box is strictly larger than that of the thermal states. 
However, the thermal state $\tau$ is included in $\mC(S)$, which implies that one cannot extract any work from this sequence of the black boxes asymptotically. This example corresponds to the multi-qubit system mentioned above with $\ket{\psi}=\ket{0}$ and $\ket{\psi'}=\ket{1}$.

These examples reveal the stark difference between the standard state-aware work extraction task and the black box work extraction task.
We remark that, as is also mentioned in the main text and discussed in the above example, our black box framework assumes that we do not even have access to the probability distribution according to which the initial states are chosen from the box.
If we know the probability distribution, the situation is no longer the state-agnostic scenario because the initial state is the averaged state taken over the black box with the weight of the probability distribution. The absence of such a probability distribution makes the results nontrivial.

\section{Convexity of thermal operation and pinching channel}\label{app: convexity of thermal operation}

Recalling the definition of pinching channel $\mP(\cdot)=\lim_{T\to\infty}\int_{-T}^T dt e^{-iHt}\cdot e^{iHt}/(2T)$, the fact that the pinching channel is a thermal operation is a direct consequence of the convexity of the set of thermal operations. 
The convexity of the set of thermal operations is proven in Ref.~\cite{Lostagilo_Quantum_coherence_time_translation_symmetry, gour2024resources_of_quantum_world}. In particular, the proof in Ref.~{\cite[Theorem 17.2.1]{gour2024resources_of_quantum_world}} deals with a general case where the input and output systems of the channel can be different. 
Here, for completeness, we provide a proof for the convexity of thermal operations based on the argument in Ref.~\cite{gour2024resources_of_quantum_world}, where we also offer additional insights about the specific construction of a probabilistic mixture of thermal operations as a single thermal operation.

Consider a channel $\mN:\mD(\mH_A)\rightarrow\mD(\mH_B)$ of the form
\bal
\mN=\sum_{x=1}^m p_x\mN_x
\eal
where $\mN_x:\mD(\mH_A)\rightarrow\mD(\mH_B)$ is a thermal operation for every $x\in\qty{1,\cdots, m}$, and $\qty{p_x}_{x=1}^m$ is a probability distribution. 
We aim to show that $\mN$ is also a thermal operation. 
Since each $\mN_x$ is a thermal operation, there exists an ancillary system $E_x$ such that $\mN_x$ can be written by 
\bal
\mN_x(\rho_A)=\Tr_{E'_x}\qty[U_x(\rho_A\otimes \tau_{E_x})U_x^\dagger],~~~\qty[H_A+H_{E_x},U_x]=0,
\eal
where $\abs{AE_x}=\abs{BE'_x}$, and $U_x$ is an energy-conserving unitary from $AE_x$ to $BE'_x$.

The main idea of the proof in Ref.~\cite[Theorem 17.2.1.]{gour2024resources_of_quantum_world} is to take the direct sum of all the ancillary systems to obtain $\mN$, namely, 
\bal
E=\bigoplus_{x=1}^m E_x,~~E'=\bigoplus_{x=1}^m E'_x.\\
\eal
Furthermore, they take the Hamiltonian of system $E$ so that the thermal state of the system $E$ is represented as
\bal
\tau_E=\bigoplus_{x=1}^mp_x\tau_{E_x},
\label{eq:thermal Gour}
\eal
and define a unitary matrix applied to system $AE$ as
\bal\label{Eq: unitary for convex combination}
U=\bigoplus_{x=1}^m U_x.
\eal
In the proof of Ref.~{\cite[Theorem 17.2.1]{gour2024resources_of_quantum_world}}, they used the fact that the unitary operator defined in Eq.~\eqref{Eq: unitary for convex combination} is energy conserving, i.e., commutes with the Hamiltonian of the $[H_A+H_E, U]=0$ by employing Lemma~17.1.3 of Ref.~\cite{gour2024resources_of_quantum_world}, stating that the unitary operator which commutes with the thermal state of the system also commute with the Hamiltonian to show that the thermal operation
\bal
\rho\mapsto\Tr_{E}\qty[U(\rho\otimes \tau_E)U^\dagger]
\eal
agrees with $\mN=\sum_xp_x\mN_x$. 

In their proof, they use the fact that every state is the Gibbs state for some Hamiltonian, where an explicit form of $H_E$ is not presented. 
Although the existence of such a Hamiltonian is sufficient to prove the convexity, it will still be beneficial to characterize what the desired Hamiltonian looks like.
Here, we explicitly construct the Hamiltonian which has the thermal state represented as Eq.~\eqref{eq:thermal Gour} and commutes with the unitary in Eq.~\eqref{Eq: unitary for convex combination}.

First, note that the state \eqref{eq:thermal Gour} is not the Gibbs state for the most natural Hamiltonian for the system $E=\bigoplus_{x=1}^m E_x$, which is 
\bal
H=\bigoplus_{x=1}^m H_x.
\label{eq:Hamiltonian naive}
\eal

To see this, let us consider an example where two ancillary systems $E_1,~E_2$, are associated with 2- and 3-dimensional Hilbert spaces, respectively, and both have the fully degenerate Hamiltonian with the same energy eigenvalues. 
In this case, the thermal state of each system is
\bal
\tau_{E_1}=\mqty(\frac{1}{2}&0\\0&\frac{1}{2}),~~\tau_{E_2}=\mqty(\frac{1}{3}&0&0\\0&\frac{1}{3}&0\\0&0&\frac{1}{3}).
\eal
If we take $\{p_x\}_x$ as $p_1=p_2=1/2$, the state \eqref{eq:thermal Gour} becomes $\tau_{E}=(\tau_{E_1}\oplus\tau_{E_2})/2=\diag(1/4,1/4,1/6,1/6,1/6)$. However, the Hamiltonian \eqref{eq:Hamiltonian naive} is also fully degenerate, and the thermal state should be the maximally mixed state.

In general, the thermal state for $E$ with Hamiltonian \eqref{eq:Hamiltonian naive} is written as 
\bal
\tau_E=\frac{e^{-\beta H}}{\sum_{x=1}^m Z_{E_x}}=\frac{\sum_{x=1}^m Z_{E_x}\tau_{E_x}}{\sum_{x=1}^m Z_{E_x}},
\eal
where $Z_{E_x}$ is the partition function of $H_x$.
Therefore, to prepare the state $\tau_E$ as a Gibbs state of the system $E$, one needs to modify each Hamiltonian $H_x$ in \eqref{eq:Hamiltonian naive}.

To this end, we define a modified Hamiltonian $\tilde H_x$ of the system $E_x$ by $\tilde H_x = H_x+\Delta_x\mbI$.
The partition function $\tilde Z_{E_x}$ for this Hamiltonian then becomes $\tilde Z_{E_x} = e^{-\beta \Delta_x}Z_{E_x}$, while thermal state $\tau_{E_x}$ remains the same.
Therefore, the thermal state $\tilde\tau_E$ for the modified Hamiltonian $\tilde H \coloneqq \bigoplus_{x=1}^m \tilde H_x$ is obtained as 
\bal
\tilde\tau_E=\frac{\sum_{x=1}^m e^{-\beta \Delta_x}Z_{E_x}\tau_{E_x}}{\sum_{x=1}^m e^{-\beta \Delta_x}Z_{E_x}}.
\eal
By taking $\Delta_x$ as
\bal
\Delta_x =-\frac{1}{\beta}\log\frac{p_x}{Z_{E_x}},
\eal
the state $\tilde\tau_E$ coincides with the desired thermal state in Eq.~(\ref{eq:thermal Gour}). 
By following the same discussion as~\cite[Theorem 17.2.1.]{gour2024resources_of_quantum_world}, we can conclude that the set of thermal operations is convex.

The fact that the pinching channel is thermal operations follows from that the set of the thermal operations is convex by observing that~\cite{Gour_2022_Role_of_quantum_coherence}
\bal
\mP(\rho)=\frac{1}{m}\sum_{x=1}^mU_x\rho U_x^\dagger,~~U_x=\sum_{x'=1}^m e^{\frac{2\pi i x x'}{m}}\Pi_{x'},
\eal
where $m$ is the number of distinct energy levels, and $\Pi_{x'}$ is the projector onto the eigenspace of the Hamiltonian corresponding to the eigenvalue $E_{x'}$.
It can be checked that $U_x$ commutes with the Hamiltonian of the system. Therefore, pinching channel can be represented as a convex combination of thermal operations, which implies that the pinching channel is a thermal operation.

\section{Asymptotic black box work extraction under thermal operations}
\subsection{Construction of work extraction protocol under incoherently conditioned thermal operations}\label{app: incoherently conditioned thermal}
In the following discussion, we consider the i.i.d. black boxes, which contain a finite number of states, i.e., the black box $\qty{\mS^{\rm i.i.d.}_n(S)}_{n=1}^\infty$ with $\abs{S}<\infty$. Our goal in this section is to show that under thermal operations, one can extract the same amount of work asymptotically as the Gibbs-preserving operations and Gibbs-preserving covariant operations when the given state is picked from the i.i.d. black boxes, which contain a finite number of states. 

First, we introduce a new class of operations called incoherently conditioned thermal operations, thermal operations conditioned by the outcome of the incoherent measurement. In Ref.~\cite{Narasimhachar2017resource}, the class of operations called conditioned thermal operations is introduced, in which one performs the measurement on one of the bipartite systems, and applies the thermal operation conditioned by the measurement outcome. The class called incoherently conditioned thermal operation restricts the measurement one can perform to the incoherent measurement. The rigorous definition of the class is the following.
\begin{defn}
    Let $\mH_A,\mH_B,\mH_C$ be Hilbert spaces, and $\mE:\mD(\mH_A\otimes \mH_B)\to \mD(\mH_C)$ be a CPTP map. $\mE$ is called an incoherently conditioned thermal operation when $\mE$ can be decomposed as follows.
    \bal
    \mE=\sum_a\mE^{\rm TO}_a\circ \Lambda^{\rm meas}_a.
    \eal
    Here, $\mE_a^{\rm TO}:\mD(\mH_B)\to\mD(\mH_C),~a=1,2,\ldots,m$ are thermal operations and 
    \bal
    \Lambda^{\rm meas}_a(\rho_{AB}):=\Tr_A\qty[(M^{\rm incoh}_a\otimes I_{B})\rho_{AB}],~a=1,\ldots ,m
    \eal
    be the instruments which represent the incoherent measurement, where $M^{\rm incoh}_a$ is the POVM elements on $\mD(\mH_A)$ which satisfies $\sum_a M^{\rm incoh}_a =I.$ 
    We denote the set of incoherently conditioned thermal operations as ${\rm{ICTO}}(A;B\to C)$ 
    Furthermore, when the measurements are restricted to the incoherent projective measurements, we say that the operation is thermal operation $+$ incoherent projective measurement. We denote the set of thermal operations $+$ incoherent projective measurements as ${\rm{ICPTO}}(A;B\to C)$
\end{defn}

Our first goal is to show the following proposition.
\begin{pro}\label{Pro:EW under TO+cov. meas.}
Let $S\subset \mD(\mH)$ be a subset of density matrices that contain a finite number of density matrices.
    The asymptotic extractable work of the sequence of the i.i.d. black boxes $\qty{\mS_n^{\rm{i.i.d.}}(S)}_{n=1}^\infty$ under incoherently conditioned thermal operations is represented as  
    \bal
    \beta W_{\rm ICTO}(\qty{\mS_n^{\rm{i.i.d.}}(S)}_{n=1}^\infty)=\min_{\rho\in S}D(\rho||\tau).
    \eal
\end{pro}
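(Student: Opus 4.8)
The plan is to prove the two inequalities separately, with the converse being essentially immediate and the achievability requiring an explicit protocol. For the converse $\beta W_{\rm ICTO}(\qty{\mS_n^{\rm i.i.d.}(S)}_{n=1}^\infty)\leq \min_{\rho\in S}D(\rho||\tau)$, I would first observe that every incoherently conditioned thermal operation is Gibbs-preserving: for $\mE=\sum_a\mE_a^{\rm TO}\circ\Lambda_a^{\rm meas}$ one computes $\mE(\tau_A\otimes\tau_B)=\sum_a\Tr[M_a^{\rm incoh}\tau_A]\,\mE_a^{\rm TO}(\tau_B)=\sum_a\Tr[M_a^{\rm incoh}\tau_A]\,\tau_C=\tau_C$, using that each $\mE_a^{\rm TO}$ is Gibbs-preserving and $\sum_aM_a^{\rm incoh}=I$. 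Hence $\mathrm{ICTO}\subseteq\mathrm{GPO}$, and since restricting to a smaller class of operations cannot increase the guaranteed extractable work, the converse reduces to the i.i.d. Gibbs-preserving value established in Theorem~\ref{Theorem: supplemental material asymptotic extractable work under GPO} and Eq.~\eqref{Eq: EW of i.i.d. box under GPO}.

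For achievability I would construct an explicit protocol of the ``learn-then-extract'' type, which is by design of the form $\sum_a\mE_a^{\rm TO}\circ\Lambda_a^{\rm meas}$ and hence a genuine ICTO. Split the $n$ copies into a learning register of size $\ell=\ell(n)$ with $\ell\to\infty$ and $\ell/n\to 0$, and an extraction register of size $n-\ell$. On the learning register, perform an incoherent measurement whose outcome $a$ produces an estimate $\rho_a\in S$; conditioned on $a$, apply to the extraction register the state-aware thermal-operation extraction protocol of Ref.~\cite{Brandao_2013_RT_of_quantum_states_out_of} tuned to $\rho_a$, which charges the battery at asymptotic rate $D(\rho_a||\tau)$. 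When the estimate is correct this extracts work at rate $\to D(\rho||\tau)\geq\min_{\rho'\in S}D(\rho'||\tau)$, and since the estimate is correct with probability $\to 1$ the guaranteed conversion fidelity tends to one.

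Two technical facts carry the argument. First, the extraction rate is unaffected by the restriction to incoherent operations: an incoherent measurement on $\rho^{\otimes k}$ has the same statistics as a measurement on $\mP(\rho^{\otimes k})$, since $\Tr[\mP(\rho^{\otimes k})M^{\rm incoh}]=\Tr[\rho^{\otimes k}\mP(M^{\rm incoh})]=\Tr[\rho^{\otimes k}M^{\rm incoh}]$, so the information accessible to any ICTO is that of the pinched states. By Lemma~\ref{lemma: relative entropy does not change under pinching in the limit} together with the type-counting bound $\abs{\spec(\tau^{\otimes k})}\leq(k+1)^d$, the deficit $D(\rho^{\otimes k}||\tau^{\otimes k})-D(\mP(\rho^{\otimes k})||\tau^{\otimes k})$ is only $O(\log k)$, so the pinched free-energy rate still converges to $D(\rho||\tau)$ and no first-order work is lost. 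Recall also that the pinching channel is itself a thermal operation (Appendix~\ref{app: convexity of thermal operation}), so every step remains inside the allowed class.

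The main obstacle is the \emph{learning step}: with only incoherent measurements the learner can at best discriminate the pinched families $\qty{\mP(\rho^{\otimes\ell})\mid\rho\in S}$, and two distinct states of $S$ may share identical pinched states. The observation that dissolves this difficulty is that such states are automatically equivalent for the task, because if $\mP(\rho_1^{\otimes k})=\mP(\rho_2^{\otimes k})$ for all $k$ then the identical sequences $\tfrac1k D(\mP(\rho_i^{\otimes k})||\tau^{\otimes k})$ have a common limit, forcing $D(\rho_1||\tau)=D(\rho_2||\tau)$; hence the worst-case target $\min_{\rho\in S}D(\rho||\tau)$ does not separate them and the protocol tuned to either extracts the correct amount. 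It therefore suffices to identify the pinched-equivalence class of the input, and states in distinct classes differ on some pinched state and so are asymptotically perfectly distinguishable by an incoherent measurement on $\ell\to\infty$ copies, with the finitely many error events controlled by a union bound since $\abs{S}<\infty$. Making rigorous the asymptotic distinguishability of these finitely many pinched i.i.d. families, and checking that the state-aware protocol of Ref.~\cite{Brandao_2013_RT_of_quantum_states_out_of} can be executed on the (possibly pinched) extraction register while remaining a bona fide thermal operation, is where the substantive work lies.
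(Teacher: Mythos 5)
Your overall architecture matches the paper's: the converse via the inclusion $\mathrm{ICTO}\subseteq\mathrm{GPO}$ and Eq.~\eqref{Eq: EW of i.i.d. box under GPO}, and achievability via a learn-then-extract protocol in which an incoherent measurement on a sublinear number of copies identifies the input up to its pinched i.i.d.\ family, after which the state-aware protocol of Ref.~\cite{Brandao_2013_RT_of_quantum_states_out_of} is applied. One genuine (and legitimate) shortcut you take: since $\abs{S}<\infty$, any two states in distinct pinched-equivalence classes already differ at some finite block size, and the maximum of these finitely many block sizes gives a uniform separating $k$; this lets you bypass the paper's rational-independence assumption and its cyclic-product lemmas, which are used there to prove the stronger statement that $d$ copies always suffice to reconstruct the entire pinched family.

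There is, however, a gap in the step where you dissolve the ``indistinguishable pinched families'' obstacle. You argue that if $\mP(\rho_1^{\otimes k})=\mP(\rho_2^{\otimes k})$ for all $k$ then $D(\rho_1\|\tau)=D(\rho_2\|\tau)$, and conclude that ``the protocol tuned to either extracts the correct amount.'' Equality of free energies only says the two states have the same \emph{target rate}; it does not imply that a fixed protocol tuned to $\rho_1$ charges the battery with fidelity tending to one when fed $\rho_2^{\otimes n}$. (For $\rho_1=\ketbra{+}{+}$, $\rho_2=\ketbra{-}{-}$ the free energies and all pinchings coincide, yet a generic protocol optimized for $\rho_1$ can fail completely on $\rho_2$.) The correct reason the equivalence classes are harmless --- and the observation the paper's proof hinges on --- is that the protocol of Ref.~\cite{Brandao_2013_RT_of_quantum_states_out_of} begins by applying the pinching channel (itself a thermal operation, cf.\ Appendix~\ref{app: convexity of thermal operation}), so its action on the extraction register factors through $\mP$; consequently its output on $\rho_2^{\otimes n}$ is \emph{literally identical} to its output on $\rho_1^{\otimes n}$ whenever the pinched states coincide, and both the fidelity guarantee and the extracted work carry over verbatim. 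You gesture at this when you defer ``checking that the state-aware protocol can be executed on the (possibly pinched) extraction register'' to the end, but that check is not a residual technicality --- it is the substitute for your free-energy argument and the load-bearing step of the achievability proof. With that replacement (plus the $O(\log k)$ pinching loss you already control via Lemma~\ref{lemma: relative entropy does not change under pinching in the limit}), your proof closes.
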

We remark that all the measurements performed in the work extraction protocol constructed in the following proof are projective. Therefore, we can conclude that the same extractable work can be achieved in the asymptotic regime even with thermal operation $+$ incoherent projective measurement.

\begin{proof}(of ($\leq$) part.)
    To show the ($\leq$) inequality, we note the hierarchy of the operations
    \bal
    {\rm Thermal}\subset {\rm{incoherently ~conditioned~ Thermal}} \subset {\rm{Gibbs-preserving~covariant}} \subset {\rm Gibbs-preserving},
    \eal
    which implies 
\bal
\beta W_{\rm ICTO}(\qty{\mS_n^{\rm{i.i.d.}}(S)}_{n=1}^\infty)\leq\beta W_{\rm GPO}(\qty{\mS_n^{\rm{i.i.d.}}(S)}_{n=1}^\infty)=\min_{\rho\in S}D(\rho||\tau).
\eal
The last inequality is due to Eq.~\eqref{Eq: EW of i.i.d. box under GPO}.
\end{proof}

To show the other inequality, we construct the concrete protocol as follows.
\begin{enumerate}
    \item Given $n$ copies of some unknown state $\rho$, pick up $k_{\delta',p_e}$ copies of states and perform the incoherent measurement. Here, $k_{\delta',p_e}$ is a natural number that depends on the necessary accuracy to identify the initial state.
    \item Identify $\mP(\rho^{\otimes d})$ from the measurement outcome.
    \item Perform the protocol in Ref.~\cite{Brandao_2013_RT_of_quantum_states_out_of} using the information obtained in Step 2.
\end{enumerate}
 One might find it odd that hat the goal of the second step is not to identify $\rho$.
 Actually, it is not possible to perform the state tomography with the incoherent measurement even if the experimenters have an infinite number of copies. 
 The simplest situation is where the experimenters are given a qubit system which is either $\ketbra{+}{+}$ or $\ketbra{-}{-}$ where $\ket{+}=(\ket{0}+\ket{1})/\sqrt{2}$ and $\ket{-}=(\ket{0}-\ket{1})/\sqrt{2}$, and are told to guess which is the state with incoherent measurements. 
We assume that the Hamiltonian of the system is $H=\ketbra{1}{1}$. 
Here, from the definition of the incoherent measurement, for any $\rho\in\mD(\mH)$ and the POVM elements of the incoherent measurement $M^{\rm incoh}_a$ on the system,
\bal
\Tr\qty[M^{\rm incoh}_a \rho]=\Tr\qty[\mP(M^{\rm incoh}_a) \rho] =\Tr\qty[M^{\rm incoh}_a \mP(\rho)]
\eal
holds. This implies that the probability distribution of the outcome of the incoherent measurement does not change when the state is pinched.
Here, we observe the two state $\mP(\ketbra{+}{+}^{\otimes n})$ and $\mP(\ketbra{-}{-}^{\otimes n})$ coincide for every $n\in\mbN$. 
Let $\bs{s},\bs{t}\in\qty{0,1}^n$ be the $n$-bit string. One can see that due to the definition of the pinching channel
\bal
\bra{\bs{s}}\mP(\rho^{\otimes n})\ket{\bs t}\neq 0\Rightarrow \mbox{${\bs{s}} $ and ${\bs{t}}$ belong to the same type class}
\eal
holds.
Furthermore, the direct calculation shows that 
\bal
\bra{\bs{s}}\rho^{\otimes n}\ket{\bs t}=\prod_{i=1}^n\bra{s_i}\rho\ket{t_i},
\eal
where $s_i$ and $t_i$ are the $i$-th alphabets of ${\bs s}$ and ${\bs t}$ respectively. When ${\bs{s}} $ and ${\bs{t}}$ belong to the same type class, 
\bal
\abs{\qty{i\in \{1,\ldots ,d\}| s_i=0,~t_i=1}}=\abs{\qty{i\in \{1,\ldots ,d\}| s_i=1,~t_i=0}}
\eal
holds. 

From this, we can see that all the elements of $\mP(\ketbra{-}{-}^{\otimes n})$ in the energy subspaces are all $1/2^n$, and we can conclude that $\mP(\ketbra{+}{+}^{\otimes n})=\mP(\ketbra{-}{-}^{\otimes n})$ for any $n$. Therefore, even if the experimenters perform any incoherent measurement, they can never distinguish $\ketbra{+}{+}$ and $\ketbra{-}{-}$. 
The discussion above implies that the incoherent measurement is not powerful enough to obtain the full information about the unknown input state $\rho$.

Then, what information can one obtain by the incoherent measurement? In the subsequent discussion, we give an answer to this question.

First of all, we restrict the Hamiltonian in consideration to what satisfies the following property.
\begin{defn}
    Let $H$ be a Hamiltonian, and $E_1,\ldots, E_d$ be the eigenvalues of $H$. We say that $H$ is rationally independent when $H$ satisfies the following.
    \bal
    \sum_iN_iE_i=0, ~\sum_iN_i =0, ~N_i\in\mbZ, ~\forall i\Rightarrow N_i=0,~\forall i.
    \eal
\end{defn}
In the following discussion, we assume that the Hamiltonian of each system satisfies this property. Rational independence prohibits any number of copies of the systems from having degenerate energy levels other than the degeneracy that comes from the permutation of the systems. A simple example is the qutrit system $\mH_3=\Span\qty{\ket{0},\ket{1},\ket{2}}$ and the Hamiltonian of the system $H=E\ketbra{1}{1}+2E\ketbra{2}{2}$. The Hamiltonian $H$ itself does not have degeneracy. However, when we prepare two copies of this, the energy subspace which corresponds to the energy eigenvalue $2E$ of the Hamiltonian of the whole system $H^{\times 2}$ is $\Span\qty{\ket{02}\ket{20},\ket{11}}$. This additional degeneracy comes up because of the rational dependence of the Hamiltonian. 
We note that the energy subspaces of the $n$ copies of the system are spanned by the vectors that belong to the same type class, and there exists a one-to-one correspondence between the energy eigenvalue of the $n$ copies of the system and the type of the eigenvectors. After that, we also extend the discussion to the case where the Hamiltonian is not rationally independent.

Let $\rho\in \mD(\mH)$ be the density operator of the input state, and $\qty{\ket{i}}_{i=1}^d$ be the eigenvectors of the Hamiltonian $H$. Furthermore, we denote $\rho_{ij}:=\bra{i}\rho\ket{j}$.
We show that we can estimate the values called cyclic product defined below in any accuracy by the incoherent measurement.

\begin{defn}
    Let ${\bs s}\in\qty{1,\ldots,d}^m$ be a string of length $m$, which is composed of $m$ different alphabets, i.e., $s_i=s_j\Leftrightarrow i=j$. Here, $m\leq d$ holds. The cyclic product of $\rho$ with respect to the string ${\bs s}$ is defined as
    \bal
    \prod_{i=1}^m\rho_{s_is_{i+1}}~~\qty(=\bra{s_1s_2\cdots s_m}\rho^{\otimes m}\ket{s_2\cdots s_ms_1}),
    \eal
    where we set $s_{m+1}=s_1$.
\end{defn}
Note that the number of the different cyclic products is finite. Furthermore, any cyclic products are in the energy blocks since ${\bs s}=s_1s_2\cdots s_m$ and ${\bs s}'=s_2\cdots s_ms_1$ belong to the same type class. 

\begin{lem}
    Suppose that every matrix element of $\mP(\rho^{\otimes d})$ is given. Then, all the cyclic products of $\rho$ can be calculated from the elements of $\mP(\rho^{\otimes d})$.
\end{lem}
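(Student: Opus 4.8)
The plan is to exploit the explicit description of which matrix elements survive the pinching. Since the single-subsystem Hamiltonian is rationally independent, the eigenspaces of $H^{\times d}$ coincide with the type classes of length-$d$ strings over the alphabet $\{1,\ldots,d\}$, so knowing ``every matrix element of $\mP(\rho^{\otimes d})$'' means knowing exactly the collection of numbers
\[
\bra{\bs s}\rho^{\otimes d}\ket{\bs t}=\prod_{k=1}^d\rho_{s_kt_k},
\]
as $\bs s,\bs t\in\{1,\ldots,d\}^d$ range over all pairs of the \emph{same} type. Each such number is a product of exactly $d$ entries of $\rho$, and the goal is to isolate the shorter (length-$m$) products $\prod_{j=1}^m\rho_{s_js_{j+1}}$, with $s_{m+1}=s_1$, out of these length-$d$ products.

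First I would recover the diagonal entries. The constant string $\bs s=\bs t=(i,i,\ldots,i)$ gives the surviving element $\rho_{ii}^{\,d}$, and because $\rho_{ii}=\bra{i}\rho\ket{i}\ge 0$ the nonnegative real $d$-th root returns $\rho_{ii}$ exactly; no ambiguity arises since we only ever take roots of these nonnegative reals, never of the (possibly complex) off-diagonal data. This step also identifies the support set $I:=\{\,i:\rho_{ii}>0\,\}$, which is nonempty because $\Tr\rho=1$.

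Next I would split into two cases for a cyclic product over a string $\bs s=s_1\cdots s_m$ of distinct symbols. If some $s_j\notin I$, then $\rho_{s_js_j}=0$, and positive semidefiniteness (the minor inequality $|\rho_{ik}|^2\le\rho_{ii}\rho_{kk}$) forces the entire $s_j$-th row of $\rho$ to vanish, so the cyclic product is identically $0$ and requires no computation. If instead $\{s_1,\ldots,s_m\}\subseteq I$, fix any $i_0\in I$ and build the two length-$d$ strings $\bs s'=(s_1,\ldots,s_m,i_0,\ldots,i_0)$ and $\bs t'=(s_2,\ldots,s_m,s_1,i_0,\ldots,i_0)$, padding the remaining $d-m$ coordinates with $i_0$ in both. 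These share a type class, so
\[
\bra{\bs s'}\rho^{\otimes d}\ket{\bs t'}=\Big(\prod_{j=1}^m\rho_{s_js_{j+1}}\Big)\,\rho_{i_0i_0}^{\,d-m}
\]
is one of the given matrix elements, and dividing by the already-known positive number $\rho_{i_0i_0}^{\,d-m}$ yields the cyclic product. When $m=d$ there is no padding and the element equals the cyclic product outright, while $m=1$ reproduces the diagonal extraction as a consistency check.

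The only step that could fail is the final division, which would be illegitimate if the padding factor vanished; the crux of the argument is precisely that this never happens, because zero-diagonal symbols are routed around (they only generate cyclic products already known to be zero) and the padding uses exclusively a symbol $i_0\in I$ with $\rho_{i_0i_0}>0$. I therefore expect the genuinely delicate points to be just (i) verifying that $\bs s'$ and $\bs t'$ carry the same type, and (ii) the bookkeeping establishing which cyclic products are forced to vanish by semidefiniteness; everything else reduces to the direct evaluation $\bra{\bs s}\rho^{\otimes d}\ket{\bs t}=\prod_k\rho_{s_kt_k}$.
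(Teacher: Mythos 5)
Your proposal is correct and follows essentially the same route as the paper: extract the diagonal entries from $\bra{ii\cdots i}\mP(\rho^{\otimes d})\ket{ii\cdots i}=\rho_{ii}^{\,d}$ via nonnegative roots, then pad a length-$m$ cyclic string with a symbol $i_0$ having $\rho_{i_0i_0}>0$ and divide out the factor $\rho_{i_0i_0}^{\,d-m}$. The extra case analysis for symbols outside the support of the diagonal is harmless but unnecessary, since the padding formula and division already return the correct (zero) value for those cyclic products.
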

\begin{proof}
    We first consider the diagonal elements of $\rho$. Since for any $i\in\qty{1,\ldots, d}$, $\rho_{ii}\geq 0, \bra{ii\ldots i}\mP(\rho^{\otimes d})\ket{ii\ldots i}\geq 0$ due to the positive semidefiniteness of $\rho$ and $\mP(\rho^{\otimes d})$, and $\bra{ii\ldots i}\mP(\rho^{\otimes d})\ket{ii\ldots i}=\qty(\rho_{ii})^d$ holds, we can calculate $\rho_{ii}$ as $\rho_{ii}=(\bra{ii\ldots i}\mP(\rho^{\otimes d})\ket{ii\ldots i})^{1/d}$.
    One can calculate any cyclic products with respect to the string $\bs s$ with length $m$ by choosing $j\in\qty{1,\ldots,d}$ such that $\rho_{jj}\neq 0$ and noting that
    \bal
    \bra{j\ldots j{\bs s}}\mP(\rho^{\otimes d})\ket{j\ldots j{\bs s'}}&=(\rho_{jj})^{d-\abs{\bs s}}\prod_{i=1}^m\rho_{s_is_{i+1}},\\
    \prod_{i=1}^m\rho_{s_is_{i+1}}&=\frac{\bra{j\ldots j{\bs s}}\mP(\rho^{\otimes d})\ket{j\ldots j{\bs s'}}}{(\bra{jj\ldots j}\mP(\rho^{\otimes d})\ket{jj\ldots j})^{\frac{d-\abs{\bs s}}{d}}}.
    \eal
\end{proof}

Once we obtain the list of all cyclic products, assuming that the Hamiltonian is rationally independent, we can reconstruct $\mP(\rho^{\otimes n})$ for any $n\in\mbN$.
\begin{lem}\label{lemma: any elements can be decomposed to the cyclic products}
    For any positive integer $n$, any nonzero elements of $\mP(\rho^{\otimes n})$ can be represented as the product of cyclic products.
\end{lem}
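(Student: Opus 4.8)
The plan is to compute the matrix elements of $\mP(\rho^{\otimes n})$ explicitly and recognize that each nonzero element carries the combinatorial structure of a balanced directed graph, whose cycle decomposition is precisely a factorization into cyclic products. First I would fix two strings $\bs s,\bs t\in\qty{1,\ldots,d}^n$ and note that, since $\mP$ is the pinching with respect to $H^{\times n}$, the element $\bra{\bs s}\mP(\rho^{\otimes n})\ket{\bs t}$ is nonzero only if $\ket{\bs s}$ and $\ket{\bs t}$ lie in the same energy eigenspace, i.e.\ $\sum_i E_{s_i}=\sum_i E_{t_i}$. Writing $N_k$ for the difference between the number of occurrences of the symbol $k$ in $\bs s$ and in $\bs t$, this condition reads $\sum_k N_k E_k=0$ with $\sum_k N_k=0$, and rational independence of $H$ forces $N_k=0$ for all $k$. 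Hence $\bs s$ and $\bs t$ must share the same type, and on such a pair the pinching acts trivially, giving $\bra{\bs s}\mP(\rho^{\otimes n})\ket{\bs t}=\bra{\bs s}\rho^{\otimes n}\ket{\bs t}=\prod_{i=1}^n\rho_{s_i t_i}$.

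Next I would reinterpret this product as running over the edges of a directed multigraph $G$ on vertex set $\qty{1,\ldots,d}$, where each coordinate $i$ contributes a directed edge $s_i\to t_i$ carrying the weight $\rho_{s_i t_i}$. Because $\bs s$ and $\bs t$ have the same type, the out-degree of each vertex $k$ (the number of $i$ with $s_i=k$) equals its in-degree (the number of $i$ with $t_i=k$); that is, $G$ is balanced. I would then invoke the standard combinatorial fact that a balanced directed multigraph decomposes into edge-disjoint simple directed cycles: walk along unused edges until a vertex first repeats, extract the simple cycle formed between its two visits, delete those edges (which preserves balance at every vertex), and iterate by induction on the number of edges. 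Each resulting simple cycle $v_1\to v_2\to\cdots\to v_m\to v_1$ has distinct vertices, hence defines a legitimate cyclic-product string, and the product of its edge weights is exactly $\prod_{j=1}^m\rho_{v_j v_{j+1}}$ (with self-loops $k\to k$ yielding the length-one cyclic product $\rho_{kk}$). Multiplying over all cycles in the decomposition reproduces $\prod_{i=1}^n\rho_{s_i t_i}$, which is the assertion.

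The matrix-element identity is routine and follows immediately from rational independence, exactly as already exploited earlier in this appendix. The step requiring the most care is the cycle decomposition: I must argue that a balanced directed multigraph decomposes into \emph{simple} cycles rather than merely closed walks, and verify that self-loops and the distinctness requirement built into the definition of a cyclic product are correctly accommodated. I expect this combinatorial lemma to be the main—though elementary—obstacle, so I would state and prove it cleanly (for instance, by noting that a balanced multigraph with at least one edge cannot be acyclic, extracting a simple cycle, and inducting on the edge count) before assembling the final product.
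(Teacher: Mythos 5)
Your proof is correct and follows essentially the same route as the paper's: rational independence forces $\bs s$ and $\bs t$ to lie in the same type class, the matrix element then equals $\prod_{i=1}^n\rho_{s_it_i}$, and this product is factored into cyclic products by repeatedly extracting simple directed cycles. The one organizational difference is in the middle step: the paper first rearranges the $n$ factors into a single closed walk $\rho_{a_1a_2}\rho_{a_2a_3}\cdots\rho_{a_na_1}$ and then splits it at repeated vertices, whereas you decompose the balanced directed multigraph directly into edge-disjoint simple cycles. Your framing is slightly more robust, since a single closed walk covering all edges need not exist when the multigraph is disconnected (e.g.\ $\bs s=\bs t=(1,1,2,2)$ yields two self-loops at $1$ and two at $2$), while the component-wise cycle decomposition of a balanced multigraph needs no such assumption; in the end both arguments rest on the same elementary combinatorial fact that a nonempty balanced directed multigraph contains a simple cycle whose removal preserves balance.
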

\begin{proof}
    Due to the definition of the pinching channel, all the matrix elements of $\mP(\rho^{\otimes n})$ that are not inside the energy block are $0$. 
    Therefore, it suffices to consider the matrix elements inside the energy blocks.
    Any nonzero elements of $\mP(\rho^{\otimes n})$ can be written in the form of
    \bal
    \bra{\bs s}\rho^{\otimes n}\ket{\bs t}=\prod_{i=1}^n\rho_{s_it_i},
    \eal
    where ${\bs s},{\bs t}\in\qty{1,\ldots,d}^n$ are the strings of length $n$ which belong to the same type class. This follows from the rational independence of the Hamiltonian.
    One can rearrange $\rho_{s_1t_1},\ldots,\rho_{s_nt_n}$ to the following form.
    \bal\label{eq: the form of matrix elements of the pinched state}
    \rho_{a_1a_2}\rho_{a_2a_3}\cdots \rho_{a_na_1}, ~~~a_1,\ldots,a_n\in\qty{1,\ldots,d}
    \eal
    Note that this does not mean that any matrix elements of $\mP(\rho^{\otimes n})$ are the cyclic products since the $a_1,\ldots,a_n$ can include the duplicated alphabet.
    The existence of such a sequence as Eq.~\eqref{eq: the form of matrix elements of the pinched state} is guaranteed by the assumption that ${\bs s},{\bs t}\in\qty{1,\ldots,d}^n$ are in the same type class.
    Now, we separate this sequence into the cyclic products in the following way. If $a_i=a_j(=\alpha),~ i\neq j$ are the duplicated pair of the alphabets, we divide the sequence above as
    \bal
    \blueunderline{\rho_{a_1a_2}\cdots \rho_{a_{i-1}{\bt \alpha}}} \redunderline{\rho_{{\rt \alpha} a_{i+1}}\cdots \rho_{a_{j-1}{\rt \alpha}}} \blueunderline{\rho_{{\bt \alpha} a_{j+1}} \cdots \rho_{a_n a_1}} \rightarrow  \blueunderline{\rho_{a_1a_2}\cdots \rho_{a_{i-1}{\bt \alpha}}} \blueunderline{\rho_{{\bt \alpha} a_{j+1}} \cdots \rho_{a_n a_1}},~~\redunderline{\rho_{{\rt \alpha} a_{i+1}}\cdots \rho_{a_{j-1}{\rt \alpha}}}
    \eal
These two terms also have the form in Eq.~\eqref{eq: the form of matrix elements of the pinched state}. This procedure can be carried out again and again until the divided terms have no overlaps in the alphabets, in other words, they are divided into the cyclic products. 
This decomposition can be done in any nonzero matrix elements in $\mP(\rho^{\otimes n})$, which completes the proof.
\end{proof}
From these lemmas above, we can easily see the following.
\begin{lem}\label{lemma: d system is sufficient}
For any density matrix of the qudit system $\rho_1,\rho_2\in\mD(\mH)$,
\bal
\mP(\rho_1^{\otimes d})=\mP(\rho_2^{\otimes d})\Leftrightarrow \mP(\rho_1^{\otimes n})=\mP(\rho_2^{\otimes n}),~~\forall n\in\mbN.
\eal
\end{lem}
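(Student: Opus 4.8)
The plan is to obtain the statement as a direct corollary of the two preceding lemmas, whose combined content is that $\mP(\rho^{\otimes d})$ encodes precisely the data---the collection of all cyclic products of $\rho$---needed to rebuild $\mP(\rho^{\otimes n})$ for every $n$. The nontrivial implication is ($\Rightarrow$); the reverse implication is immediate, since specializing the right-hand side to $n=d$ returns the left-hand side.

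For ($\Rightarrow$), suppose $\mP(\rho_1^{\otimes d})=\mP(\rho_2^{\otimes d})$. First I would apply the lemma showing that every cyclic product of a state is computable from the matrix elements of its $d$-fold pinched copy: since $\rho_1$ and $\rho_2$ have identical $d$-fold pinched states, every cyclic product of $\rho_1$ equals the corresponding cyclic product of $\rho_2$. Here I use that a cyclic product is indexed by a string of distinct alphabets of length at most $d$, so all of them are indeed captured by $\mP(\cdot^{\otimes d})$.

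Next, fix an arbitrary $n\in\mbN$ and compare $\mP(\rho_1^{\otimes n})$ and $\mP(\rho_2^{\otimes n})$ entrywise in the energy eigenbasis. The off-block entries vanish for both states, and whether a given entry $\bra{\bs s}\mP(\rho^{\otimes n})\ket{\bs t}$ lies in an energy block is decided solely by whether $\bs s$ and $\bs t$ share a type class---a condition on the index strings, not on the state---so the zero pattern is the same for $\rho_1$ and $\rho_2$. For the surviving entries, Lemma~\ref{lemma: any elements can be decomposed to the cyclic products} expresses each one as a product of cyclic products, and the crucial point I would emphasize is that the factorization produced there (iteratively splitting the cycle $\rho_{a_1a_2}\rho_{a_2a_3}\cdots\rho_{a_na_1}$ at repeated indices) is a purely combinatorial operation on the index string and is therefore identical for $\rho_1$ and $\rho_2$. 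Since each cyclic-product factor agrees between the two states, the corresponding entries agree, giving $\mP(\rho_1^{\otimes n})=\mP(\rho_2^{\otimes n})$; as $n$ was arbitrary, this proves the claim.

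The main thing to be careful about---rather than a genuine obstacle---is exactly this state-independence of both the zero pattern and the combinatorial factorization: the value of a matrix element of $\mP(\rho^{\otimes n})$ depends on $\rho$, but the recipe that writes it as a product of cyclic products does not. Once this is made explicit, equality of all cyclic products forces equality of $\mP(\cdot^{\otimes n})$ termwise, and no further estimates are needed. I would note that I am working throughout under the rational-independence assumption on $H$ used in Lemma~\ref{lemma: any elements can be decomposed to the cyclic products}, which guarantees the one-to-one correspondence between energy eigenvalues and types.
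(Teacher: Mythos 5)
Your proposal is correct and follows essentially the same route as the paper: the $(\Leftarrow)$ direction by taking $n=d$, and the $(\Rightarrow)$ direction by combining the two preceding lemmas (cyclic products are recoverable from $\mP(\rho^{\otimes d})$, and every in-block entry of $\mP(\rho^{\otimes n})$ factors into cyclic products). Your explicit remarks that the zero pattern and the combinatorial factorization are state-independent are a useful elaboration of what the paper leaves implicit, but they do not constitute a different argument.
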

\begin{proof}
    $(\Leftarrow)$ is obvious, and $(\Rightarrow)$ follows because the left condition implies that all cyclic products of $\rho_1$ and $\rho_2$ are the same, which means the condition of the right-hand side due to Lemma~\ref{lemma: any elements can be decomposed to the cyclic products}.
\end{proof}

Using these lemmata, we can show the Proposition~\ref{Pro:EW under TO+cov. meas.}.

\begin{proof}(of ($\geq$) part.)
In Ref.~\cite{Brandao_2013_RT_of_quantum_states_out_of}, it is shown that one can perform the protocol $\Lambda$ in which for any $\ve'>0$ and $\eta'>0$ there exists $N\in\mbN$ such that
\bal
n\geq N\Rightarrow \exists m_n\in\mbN~~\mbox{s.t.}~~ F(\Lambda\circ\mP(\rho^{\otimes n}),(\ketbra{1}{1}_X,\mu_{m_n}))\geq 1-\ve',~ \abs{D(\rho||\tau)-\frac{1}{n}\log m_n}<\eta'.
\eal
Here, we explicitly denote the thermal state of the battery system together with the excited state of the battery system.
Note that since in this protocol, one first applies the pinching channel, it suffices to obtain information about the pinched state by the incoherent measurement for the work extraction protocol. 
Furthermore, due to Lemma~\ref{lemma: d system is sufficient}, in order to specify the input state, one just needs to perform the quantum state tomography on the pinched $d$ copies of input state $\mP(\rho^{\otimes d})$.

Note that the black box contains a finite number of states. We define $\delta>0$ as
\bal
2\delta :=\min_{\substack{\rho_i,\rho_j\in S\\\mP(\rho_i^{\otimes d})\neq\mP(\rho_j^{\otimes d})}}\| \mP(\rho_i^{\otimes d})-\mP(\rho_j^{\otimes d}) \|_1.
\eal
To perform the quantum state tomography on $\mP(\rho^{\otimes d})$ with accuracy $\delta'$ with respect to the trace distance and with success probability $1-p_e$, it suffices to use 
\bal
k_{\delta',p_e}=\mO\qty(\frac{d^{2d}}{\delta'^2}\log\qty(\frac{1}{p_e}))
\eal

copies of $\mP(\rho^{\otimes d})$~{\cite[Corollary 1.4]{O'Donnell2016efficient}}.
If we take $\delta'$ smaller than $\delta$, the probability of judging $\rho$ as other elements in black boxes $p_e$ can be made arbitrarily small.

Now, we perform the incoherent measurement and the thermal operations conditioned by the measurement outcome as follows.

We denote the appropriate operations for the input state $\rho_i^{\otimes n}$ as $\Lambda_{i,n}$.
We perform state tomography with incoherent measurement using $k_{\delta',p_e}$ copies of given unknown state $\rho$, and obtain an estimate of $\mP(\rho^{\otimes d})$, which we call $\hat E$. We then choose $\tilde i$ such that $\mP(\rho_{\tilde i}^{\otimes d})$ realizes the closest value to $\hat E$, i.e., $\tilde i=\argmin_{i}\|\hat E - \mP(\rho_i^{\otimes d})\|_1$. We then apply $\Lambda_{{\tilde i},n}$ to the rest of the states $\rho^{\otimes n}$. 
Noting that the probability of successfully identifying the unknown state as $\rho_{\tilde i}=\rho_i$ is at least $1-p_e$, this protocol ensures that when the given state is $\rho_i$, the operation applied to $\rho_i^{\otimes n}$ has the form $(1-p_e)\Lambda_{i,n}+p_e\Xi$, where $\Xi$ is some quantum channel. 
This guarantees that for any $i$ and arbitrary $\varepsilon',\eta'>0$, there exists a sufficiently large $n$ such that  
\begin{equation}\begin{aligned}
&F\qty(\qty((1-p_e)\Lambda_{i,n}+p_e\Xi)\circ\mP\qty(\rho^{\otimes n}),(\ketbra{1}{1},\mu_{m_n}) )\geq (1-p_e)(1-\ve')\\
&\abs{D(\rho||\tau)-\frac{1}{n}\log m_n}<\eta'
\end{aligned}\end{equation}

Let us fix arbitrary $\varepsilon>0$ and $\eta>0$.
We choose $k_{\delta',p_e}$ and $n$ to satisfy
\bal
(1-p_e)(1-\ve')\geq 1-\ve.
\eal
Here, the point is that $p_e$ does not depend on $n$ but only on the accuracy of the state tomography $\delta'.$ Furthermore, with respect to the extractable work, 
\bal
\abs{D(\rho||\tau)-\frac{1}{n+k_{\delta', p_e}}\log m_n}&=\abs{ D(\rho||\tau)-\frac{1}{n}\log m_n +\qty(\frac{1}{n}-\frac{1}{n+k_{\delta', p_e}})\log m_n } \\
&\leq \abs{D(\rho||\tau) -\frac{1}{n}\log m_n }+\frac{k_{\delta', p_e}}{n(n+k_{\delta', p_e})}\log m_n\\
&<\eta' +\frac{k_{\delta', p_e}}{n^2}(D(\rho||\tau)+\eta').
\eal
This implies that if we take sufficiently large $n$, we can achieve
\bal
\abs{D(\rho||\tau)-\frac{1}{n+k_{\delta', p_e}}\log m_n}<\eta.
\eal

From these discussions, we can conclude this protocol can achieve the same work extraction as the protocol in Ref.~\cite{Brandao_2013_RT_of_quantum_states_out_of}. 
Since we defined the black box extractable work as the worst-case work extraction, it holds that
\bal
\beta W_{\rm ICTO}(\qty{\mS_n^{\rm{i.i.d.}}(S)}_{n=1}^\infty)\geq\min_{\rho\in S}D(\rho||\tau).
\eal

Even when the Hamiltonian is rationally dependent, we can achieve the same performance. To see this, we consider the pinching channel with respect to the rationally independent Hamiltonian, not the Hamiltonian of the system itself. We denote this channel as $\tilde{\mP}.$
In a similar way as the proof in Appendix~\ref{app: convexity of thermal operation}, one can see that $\Tilde{\mP}$ is a thermal operation.
Due to the definition, $\Tilde{\mP}$ erases the additional degenerate spaces in the pinched density matrix with respect to the original Hamiltonian and satisfies $\Tilde{\mP}\circ \mP=\Tilde{\mP}=\mP\circ\Tilde{\mP}$ (see FIG.~\ref{fig: erasing the additional energy subspace.}).
Note that applying this pinching channel in advance does not increase the extractable work, i.e.,
\bal
\beta W_{\rm ICTO}(\qty{\mS_n^{\rm{i.i.d.}}(S)}_{n=1}^\infty)\geq \beta W_{\rm ICTO}(\qty{\Tilde{\mP}(\mS_n^{\rm{i.i.d.}}(S))}_{n=1}^\infty).
\eal
After we apply this pinching channel, we can apply the same state tomography protocol and thermal operation which follows $\Tilde{\mP}$ conditioned by the result of the tomography. 
Applying the same protocol, we can achieve the same extractable work as the case where the Hamiltonian is rationally independent.
Therefore, it holds that
\bal
\beta W_{\rm ICTO}(\qty{\mS_n^{\rm{i.i.d.}}(S)}_{n=1}^\infty)\geq \beta W_{\rm ICTO}(\qty{\Tilde{\mP}(\mS_n^{\rm{i.i.d.}}(S))}_{n=1}^\infty)\geq \min_{\rho\in S}D(\rho||\tau),
\eal
which completes the proof.
\begin{figure}
    \centering
    \includegraphics[width=9cm]{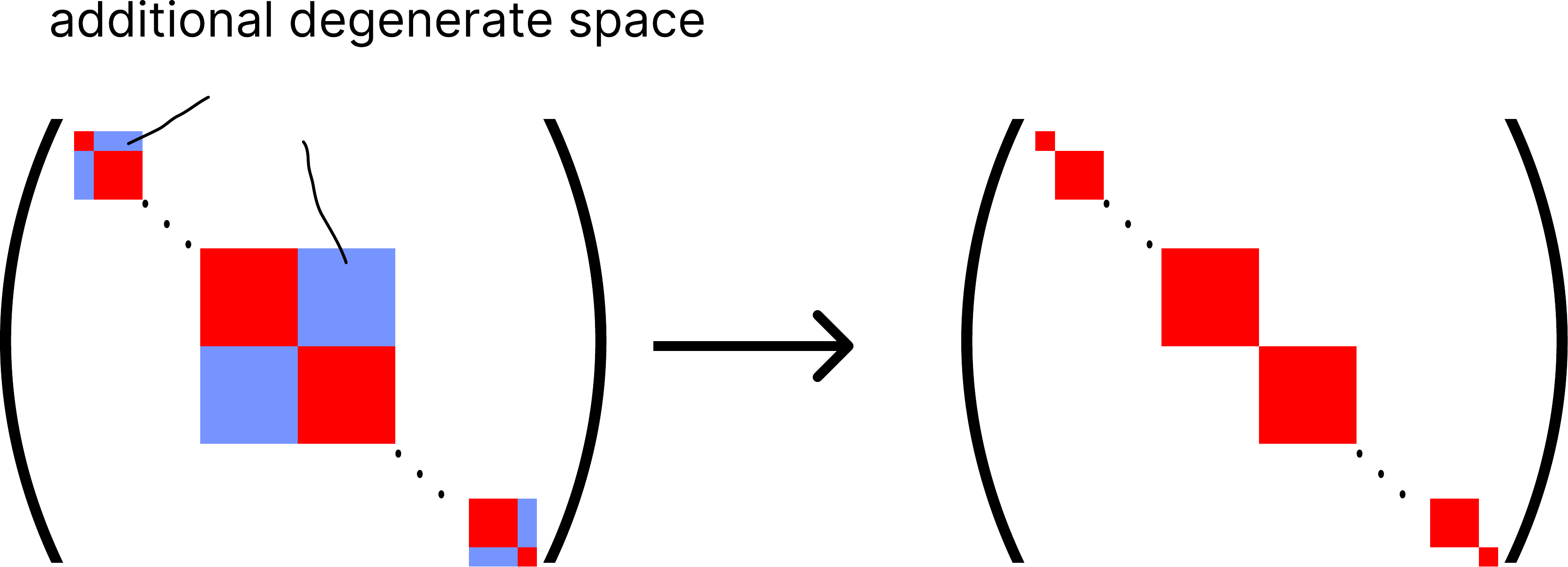}
    \caption{The procedure to erase the additional energy subspace due to the rational dependence of the Hamiltonian of each system.}
    \label{fig: erasing the additional energy subspace.}
\end{figure}
\end{proof}
Note that the measurement is used to perform the quantum state tomography, we use only the projective measurements for the protocol above.

\subsection{Equivalence of thermal operation and incoherently conditioned thermal operation with projective measurements }\label{app:covarianty conditioned equivalent to TO}
In this subsection, we show that one can perform the work extraction protocol mentioned above in thermal operations. We denote the set of thermal operations from $\mD(\mH_A)$ to $\mD(\mH_B)$ as ${\rm TO}(A\to B)$. We start with the following proposition.
\begin{pro}\label{pro: TO=TO+meas.}
Let $A, B$ be the input systems and $C$ be the output system. If $\dim\mH_B=\dim\mH_C$,
    \bal
\rm{TO}(AB\to C)\supset\rm{ICPTO}(A;B\to C)
    \eal
    holds.
\end{pro}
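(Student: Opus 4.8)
The plan is to realize an arbitrary $\mE\in\mathrm{ICPTO}(A;B\to C)$ as a single thermal operation by implementing the incoherent projective measurement \emph{coherently}, as a unitary controlled on the energy blocks of $A$, and then applying the branch thermal operations $\mE^{\rm TO}_a$ to $B$ conditioned on that control. Write $\mE(\rho_{AB})=\sum_a \mE^{\rm TO}_a\big(\Tr_A[(\Pi_a\otimes I_B)\rho_{AB}]\big)$, where the $\{\Pi_a\}$ are orthogonal projectors summing to $I_A$ with $\mP(\Pi_a)=\Pi_a$, i.e.\ $[\Pi_a,H_A]=0$. Since each $\mE^{\rm TO}_a$ is a thermal operation, there is a thermal ancilla $E_a$ in state $\tau_{E_a}$ and an energy-conserving unitary $U_a$ on $\mH_B\otimes\mH_{E_a}$ (so $[U_a,H_B+H_{E_a}]=0$) with $\mE^{\rm TO}_a(\sigma)=\Tr_{E_a}[U_a(\sigma\otimes\tau_{E_a})U_a^\dagger]$, where I use $\dim\mH_B=\dim\mH_C$ to identify the retained output register of $U_a$ with $\mH_C\cong\mH_B$ and the traced-out register with $\mH_{E_a}$.

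Next I would assemble the branch data into one thermal operation. Take the joint ancilla $E=\bigotimes_a E_a$ with thermal state $\tau_E=\bigotimes_a\tau_{E_a}$ (the Gibbs state of $H_E=\sum_a H_{E_a}$), and define the controlled unitary on $\mH_A\otimes\mH_B\otimes\mH_E$ by $V=\sum_a \Pi_a\otimes \hat U_a$, where $\hat U_a$ acts as $U_a$ on $B E_a$ and as the identity on the remaining ancillas. Two checks are immediate from the structure of a projective measurement: $V^\dagger V=\sum_{a,b}\Pi_a\Pi_b\otimes\hat U_a^\dagger\hat U_b=\sum_a\Pi_a\otimes I=I$, so $V$ is unitary; and $[V,H_A+H_B+H_E]=\sum_a[\Pi_a,H_A]\otimes\hat U_a+\sum_a\Pi_a\otimes[\hat U_a,H_B+H_E]=0$, so $V$ is energy conserving. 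Here orthogonality $\Pi_a\Pi_b=\delta_{ab}\Pi_a$ is exactly what makes $V$ a genuine unitary, which is why the statement is restricted to \emph{projective} incoherent measurements rather than to general incoherently conditioned thermal operations.

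Finally I would verify the channel identity. Feeding in $\rho_{AB}\otimes\tau_E$, conjugating by $V$, and tracing out $A$ together with all ancillas, the off-diagonal branch terms drop out because $\Tr_A[\Pi_a X_A\Pi_b]=\delta_{ab}\Tr_A[\Pi_a X_A]$, leaving $\sum_a \Tr_{E}[\hat U_a(\Tr_A[(\Pi_a\otimes I_{BE})(\rho_{AB}\otimes\tau_E)])\hat U_a^\dagger]$. Each summand collapses to $\mE^{\rm TO}_a(\Tr_A[(\Pi_a\otimes I_B)\rho_{AB}])$ since $\hat U_a$ acts trivially on $E_{\neq a}$ and $\Tr\tau_{E_{\neq a}}=1$, recovering $\mE(\rho_{AB})$. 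Thus $\mE$ equals $\Tr_{(AB+E)\backslash C}[V(\,\cdot\,\otimes\tau_E)V^\dagger]$ with $V$ energy conserving, i.e.\ $\mE\in\mathrm{TO}(AB\to C)$.

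I expect the main obstacle to be bookkeeping at the level of the output register: making the partition $\mH_B\otimes\mH_{E_a}\cong\mH_C\otimes(\text{trace-out})$ uniform across all branches so that $V$ lives on one fixed Hilbert space and a single subsystem $C$ is retained. This is precisely where $\dim\mH_B=\dim\mH_C$ is needed, since it lets every branch keep the same $B$-register (relabeled $C$) and trace out an $E_a$-sized system; absorbing any internal relabeling into $U_a$ (energy-conservingly, as $H_C=H_B$) makes the retained subsystem branch-independent.
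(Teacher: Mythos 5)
Your construction is exactly the paper's: the controlled unitary $V=\sum_a \Pi_a\otimes U_a$ acting on $A$, $B$, and the tensor product of all branch ancillas in their joint thermal state, with unitarity from orthogonality of the projectors and energy conservation from $[\Pi_a,H_A]=0$ and $[U_a,H_B+H_{E_a}]=0$, followed by the same trace computation recovering $\mE$. The proposal is correct and matches the paper's proof in all essentials.
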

\begin{proof}
    The idea stems from Ref.~\cite[Appendix C]{Lostagilo_Quantum_coherence_time_translation_symmetry}. 
    Since $\mE_i^{\rm{TO}}$ is a thermal operation for any $i\in\qty{1,\ldots,m}$,  $\mE_i^{\rm{TO}}$ can be decomposed as follows.
    \bal
    \mE_i^{\rm{TO}}(\rho_B)=\Tr_{E'_i}\qty[U_i\qty(\rho_B\otimes\tau_{E_i})U_i^\dagger]
    \eal
    Here, $\tau_{E_i}=\exp(-\beta H_{E_i})/Z_{E_i}$ is a thermal state of the ancillary system $E_i$, which is associated with the Hilbert space $\mH_{E_i}$ and the Hamiltonian $H_{E_i}$.
    The unitary operator $U_i$ conserves the energy of the total system, i.e., $\qty[U_i, H_B+H_{E_i}]=0$.
    The system $E'_{i}$ satisfies $B+E_i=C+E'_i$.
    We fix an arbitrary $\mE\in{\rm{ICPTO}}$, and $\mE$ is written as follows.
    \bal\label{eq: the form of the TO+cov.meas. map}
    \mE(\rho_{AB})=\sum_{i=1}^m \Tr_{E'_i}\qty[U_i\qty(\Tr_A\qty[(P^A_i\otimes I_B)\rho_{AB}]\otimes\tau_{E_i})U_i^\dagger]
    \eal
   Consider another map $\Tilde{\mE}:\mD(\mH_A\otimes\mH_B)\to\mD(\mH_C)$, which has the following form.
\bal\label{eq: equivalent map to TO+cov}
\Tilde{\mE}(\rho_{AB})&=\Tr_{A,E'_1,\ldots,E'_m}\qty[\Tilde{U}\qty(\rho_{AB}\otimes \qty(\bigotimes_{i=1}^m \tau_{E_i}))\Tilde{U}^\dagger],\\
\Tilde{U}&=\sum_{i=1}^m P^A_i\otimes U_i. \in\mL\qty(\mH_A\otimes \mH_B\otimes \qty(\bigotimes_i\mH_{E_i})).
\eal
Here, we used the condition $\dim\mH_B=\dim\mH_C$ to make $\tilde{\mE}$ well-defined. Due to this condition, it holds that $\dim\mH_{E_i}=\dim\mH_{E'_i}$ for any $i$, and it follows that
\bal
 \mH_B\otimes\qty(\bigotimes_i\mH_{E_i})=\mH_C\otimes\qty(\bigotimes_i\mH_{E'_i}).
\eal

We can show that $\Tilde{U}$ is indeed a unitary operator and commutes with the Hamiltonian of the whole system $H_{\rm{all}}=H_A+H_B+\sum_iH_{E_i}$ as follows.
\bal
\Tilde{U}\Tilde{U}^\dagger&=\sum_i\sum_j \qty(P^A_i\otimes U_i)\qty(P^A_j\otimes U^\dagger_j)\\
&=\sum_i\sum_j \delta_{ij}P^A_i \otimes U_i U_j^\dagger\\
&=\sum_iP^A_i\otimes I=I.\\
\qty[\Tilde{U},H_{\rm{all}}]&=\sum_i[P^A\otimes U_i,H_A+ H_B+H_{E_i}]\\
&=\sum_i[P^A\otimes U_i,H_A]+\sum_i[P^A\otimes U_i, H_B+H_{E_i}]=0.
\eal
Therefore, the map defined in Eq.~(\ref{eq: equivalent map to TO+cov}) is a thermal operation. Eq.~(\ref{eq: equivalent map to TO+cov}) can be calculated as 
\bal
\Tilde{\mE}(\rho_{AB})&=\Tr_{A,E'_1,\ldots,E'_m}\qty[\Tilde{U}\qty(\rho_{AB}\otimes \qty(\bigotimes_{i=1}^m \tau_{E_i}))\Tilde{U}^\dagger]\\
&=\Tr_{A,E'_1,\ldots,E'_m}\qty[\qty(\sum_{i=1}^m P^A_i\otimes U_i\qty(\rho_{AB})\otimes \qty(\bigotimes_{i=1}^m \tau_{E_i}))\qty(\sum_{j=1}^m P^A_j\otimes U_j)^\dagger]\\
&=\Tr_{A,E'_1,\ldots,E'_m}\qty[\sum_{i=1}^m P^A_i\otimes U_i\qty(\rho_{AB}\otimes \qty(\bigotimes_{i=1}^m \tau_{E_i})) \qty(P^A_i\otimes U_i)^\dagger]\\
&=\sum_{i=1}^m\Tr_{A,E'_i}\qty[P^A_i\otimes U_i\qty(\rho_{AB}\otimes \tau_{E_i}) \qty(P^A_i\otimes U_i)^\dagger]\\
&=\sum_{i=1}^m\Tr_{E'_i}\qty[\Tr_A \qty{P^A_i\otimes U_i\qty(\rho_{AB}\otimes \tau_{E_i}) \qty(P^A_i\otimes U_i)^\dagger}]\\
&=\sum_{i=1}^m \Tr_{E'_i}\qty[U_i\qty(\Tr_A\qty[(P^A_i\otimes I_B)\rho_{AB}]\otimes\tau_{E_i})U_i^\dagger],
\eal
and we obtain Eq.~(\ref{eq: the form of the TO+cov.meas. map}).
\end{proof}

\begin{figure}
    \centering
    \includegraphics[width=12cm]{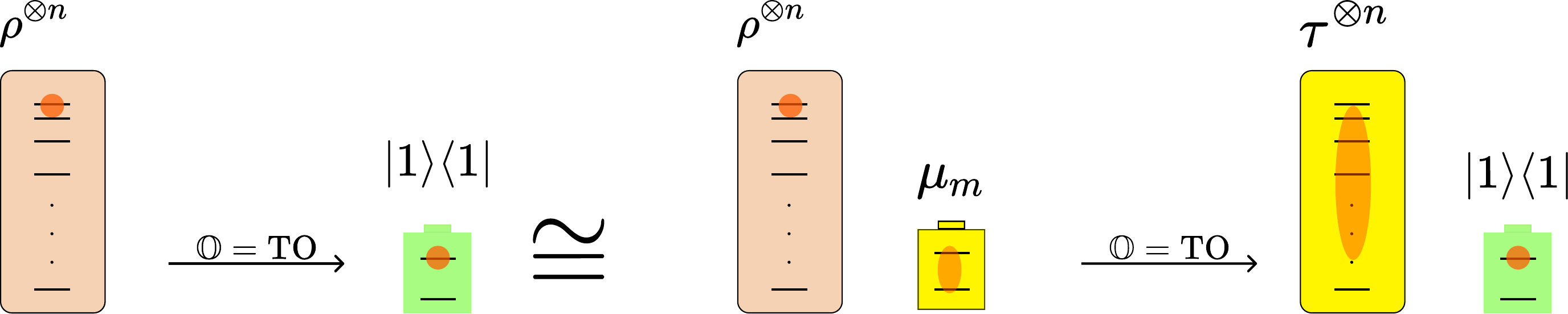}
    \caption{In our work extraction setting, we can append the thermal states to the initial and output states to make the dimensions of the input and output systems the same.}
    \label{TO =TO+meas.}
\end{figure}
Since tracing out the subsystems and adding other thermal states are free operations, we can take the dimensions of the input and output systems equally, and we can undo this by tracing out the added systems, this procedure does not affect the extracted work~(see FIG.~\ref{TO =TO+meas.}).  
Therefore, we can take the input and output system so that the condition $\dim\mH_B=\dim\mH_C$ is satisfied, which implies that we can carry out the protocol mentioned above by a thermal operation.

Combining Proposition~\ref{Pro:EW under TO+cov. meas.} and Proposition~\ref{pro: TO=TO+meas.}, we reached the following theorem.
\begin{thm}[Theorem~\ref{thm:asymptotic TO} in the main text]
The asymptotic extractable work of the sequence of the  i.i.d. black boxes $\qty{\mS_n^{\rm{i.i.d.}}(S)}_{n=1}^\infty$ satisfying $\abs{S}<\infty $ under thermal operations is 
    \bal
    \beta W_{\rm TO}\qty(\qty{\mS_n^{\rm{i.i.d.}}(S)}_{n=1}^\infty)=\min_{\rho\in S}D(\rho||\tau).
    \eal
\end{thm}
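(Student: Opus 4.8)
The plan is to obtain the theorem as a direct consequence of the two propositions already established in this section, by matching the lower bound achievable within incoherently conditioned thermal operations against the upper bound inherited from the larger class of Gibbs-preserving operations. For the converse ($\leq$) direction I would simply invoke the operational hierarchy ${\rm TO}\subset{\rm GPO}$: since every thermal operation is Gibbs-preserving, the extractable work under thermal operations cannot exceed that under Gibbs-preserving operations, and by Eq.~\eqref{Eq: EW of i.i.d. box under GPO} the latter is exactly $\min_{\rho\in S}D(\rho||\tau)$. This yields $\beta W_{\rm TO}(\{\mS_n^{\rm i.i.d.}(S)\}_n)\leq\min_{\rho\in S}D(\rho||\tau)$ with no further work.

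For the achievability ($\geq$) direction, the strategy is to argue that the explicit protocol constructed in the proof of Proposition~\ref{Pro:EW under TO+cov. meas.} can in fact be implemented entirely by thermal operations. The crucial observation, already recorded in the remark following that proposition, is that every measurement used in the protocol---the incoherent state tomography on $\mP(\rho^{\otimes d})$---is projective. Hence the protocol belongs to the class ${\rm ICPTO}$ of thermal operations supplemented by incoherent projective measurements, and this class already achieves the rate $\min_{\rho\in S}D(\rho||\tau)$. I would then upgrade ${\rm ICPTO}$ to genuine thermal operations via Proposition~\ref{pro: TO=TO+meas.}, which gives the inclusion ${\rm ICPTO}(A;B\to C)\subset{\rm TO}(AB\to C)$ whenever $\dim\mH_B=\dim\mH_C$.

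The only hypothesis left to discharge is the dimension-matching condition, since in the work extraction setting the input and output Hilbert spaces need not have equal dimension. I would close this gap exactly as indicated in Fig.~\ref{TO =TO+meas.}: appending a thermal ancilla to the smaller system and tracing it out at the end are both free thermal operations and leave the extracted work unchanged, so one may always arrange $\dim\mH_B=\dim\mH_C$ without loss of generality. With the dimensions padded, the ${\rm ICPTO}$ protocol becomes a single thermal operation producing the same charged battery state $\mu_{m_n}$ at the same fidelity, with $\frac{1}{n}\log m_n\to D(\rho||\tau)$, giving $\beta W_{\rm TO}(\{\mS_n^{\rm i.i.d.}(S)\}_n)\geq\min_{\rho\in S}D(\rho||\tau)$. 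Combining the two bounds yields the claimed equality.

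The conceptually substantive steps---building a tomography-plus-\cite{Brandao_2013_RT_of_quantum_states_out_of} protocol that requires only the pinched fingerprint $\mP(\rho^{\otimes d})$, and realizing a conditioned thermal operation as the controlled energy-conserving unitary $\tilde U=\sum_i P^A_i\otimes U_i$---are already carried out in Propositions~\ref{Pro:EW under TO+cov. meas.} and~\ref{pro: TO=TO+meas.}, so the present theorem is essentially a corollary. The main point requiring care, and the step I would be most attentive to, is confirming that the free dimension-padding preserves both the achieved fidelity and the asymptotic work rate, so that the equivalence ${\rm ICPTO}={\rm TO}$ can be applied without any degradation of performance in the $n\to\infty$ limit.
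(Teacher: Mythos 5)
Your proposal is correct and follows essentially the same route as the paper: the converse via the inclusion of thermal operations in Gibbs-preserving operations, and the achievability by noting the protocol of Proposition~\ref{Pro:EW under TO+cov. meas.} uses only incoherent projective measurements, then applying Proposition~\ref{pro: TO=TO+meas.} after padding dimensions with free thermal ancillas. The paper indeed presents the theorem as exactly this combination of the two propositions.
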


\section{Black box channel resource distillation}\label{app: black box channel distillation}

In this section, we introduce a more general task of \emph{black box channel resource distillation}.
We extend black box work extraction to the setting of general resource theories of quantum channels~\cite{Takagi_2019_general_resource,Gour_2019_how_to,Liu_2019_resource_theories,Liu_2020_operational_resource,Regula_2021_one-shot,Regula_2021_fundamental_limitations,Fang_2022_no-go_theorems,Kuroiwa_2024_robustness}, where we do not specify the details of the physical setting to focus on the mathematical structure that different physical settings share. 
Let $\mbF,\mbF'$ be the sets of free states, the states that can be prepared without any costs, and $\mbO,\mbO'$ be the set of free operations, the operations applicable with no cost.
A state that is not a free state is called a resource state. 
The minimum constraint for these two sets is that any free operations map free states to free states, i.e., $\forall \Lambda\in\mbO,~\Lambda(\mbF)\subset\mbF'$. 
The set of physically implementable operations is often smaller than the set of those satisfying this minimum constraint. 

Some of the resource states are essential to implementing useful protocols. For instance, the Bell state is used for tasks such as quantum teleportation~\cite{Bennett_teleporting_unknown} and superdense coding~\cite{Bennett_communication_via}. However, it is difficult to share the Bell state between two parties at a distance in advance since the quantum state is exposed to noise during transportation. 
To obtain a state very close to the Bell state, one needs to apply a limited class of operations, such as local operations and classical communications (LOCC), to distill as many copies of the Bell states as possible from the noisy entangled states shared in advance.
Such a procedure to extract useful resource states from noisy resource states is called \emph{(state) resource distillation}~\cite{Bennett_1996_concentrating,Bravyi_2005_universal,Brandao_2013_RT_of_quantum_states_out_of,Winter_2016_operational,Marvian_2020_coherence}. 
Work extraction is the special case of the resource distillation task in the resource theory of thermodynamics.

The resource distillation task can also be extended to the channel version. Suppose we are accessible to a channel that is not free to operate. 
One can apply some pre-and post-processing to convert the given channel to another channel. 
Such a map that maps a channel to channel is called a superchannel.
This procedure is also observed in the framework of the quantum resource theory in which the set $\mbS$ of free superchannels, which can be applied with no cost, is also limited. The minimal constraint for the free superchannel is that free superchannel maps any free channel in $\mbO$ to another free channel in $\mbO'$, i.e., $\forall \Theta\in\mbS,~\Theta(\mbO)\subset\mbO'$. Again, the constraint from the physical setting determines the sets $\mbO$ and $\mbO'$, and also limits the range of the free superchannel.
When the channel in possession is not the target channel, i.e., the channel one wants to apply, one seeks the free superchannel so that the mapped channel exhibits a close action to that of the target channel. 
Such a task is an extended version of state resource distillation, called a \emph{channel resource distillation}~\cite{Wang_2019_resource,Wang_2019_quantifying,Gour_2020_dynamical_entanglement,Takagi_2020_application,Liu_2019_resource_theories,Liu_2020_operational_resource,Regula_2021_one-shot, Kim_2021_One-shot_manipulation, Takagi_2022_One-shot_yield-cost_relation,Regula_2021_fundamental_limitations,Fang_2022_no-go_theorems}. 
If one takes the given channel and the target channel as a state preparation channel, the channel resource distillation task is reduced to the state resource distillation task.

We formulate the black box channel resource distillation and clarify the relationship between the performance of the black box channel resource distillation task and the composite hypothesis testing divergence.

\subsection{Preliminaries}
We employ a general framework of channel resource manipulation introduced in Ref.~\cite{Regula_2021_one-shot}. In our discussion, we assume that the free superchannels $\mbS$ satisfy the minimal constraint $\forall \Theta\in\mbS,~\Theta(\mbO)\subset\mbO'$, and the sets $\mbO$ and $\mbO'$ of free operations are closed and convex.
We denote the set of all the channels we consider as $\mbO_{\rm all}$.
Let us define some quantities that were originally defined in the states and extended to the channel.

Let $\mH_A$ be the input space of $\mbO_{\rm all}$ and $\mH_R\cong\mH_A$ be the Hilbert space of the ancillary system.
We also define $\mD(\mH)$ to be the set of all states acting on $\mH$. 
The fidelity between channels $\mE_1,~\mE_2\in\mbO_{\rm all}$ is defined as 
\bal
F(\mE_1,\mE_2)=\min_{\rho\in\mD(\mH_R\otimes \mH_A)}F({\rm id}\otimes \mE_1(\rho),{\rm id}\otimes \mE_2(\rho)),
\eal
where $F(\rho,\sigma)=\| \sqrt{\rho}\sqrt{\sigma} \|_1^2$ is the square fidelity. We can also define the resource measure called robustness, originally introduced in the context of an entanglement measure~\cite{Vidal_robustness_of_entanglement}, for a given channel.
Let $\Tilde{\mbO}$ be a set satisfying $\mbO\subset\Tilde{\mbO}\subset\mbO_{\rm all}$. The robustness of the channel $\mE$ is defined as 
\bal
R_{\mbO;\Tilde{\mbO}}=\inf \lset r\sbar \frac{\mE+r\mM}{1+r}\in\mbO,~\mM\in\Tilde{\mbO}\rset.
\eal
When we take $\Tilde{\mbO}=\mbO$, the quantity above is called the standard robustness $R_s(\mE)$, and when we take $\Tilde{\mbO}=\mbO_{\rm all}$, the quantity is called the generalized robustness $R_g(\mE)$.
Also, we define the smoothed robustness as
\bal
R^\ve_{\mbO;\Tilde{\mbO}}(\mE)=\min\lset R_{\mbO;\Tilde{\mbO}}(\mE') \sbar F(\mE,\mE')\geq 1-\ve \rset.
\eal
The log robustness is also defined as follows.
\bal
LR_{\mbO;\Tilde{\mbO}}(\mE)=\log \qty(1+R_{\mbO;\Tilde{\mbO}}(\mE)),~~~LR^\ve_{\mbO;\Tilde{\mbO}}(\mE)=\log \qty(1+R^\ve_{\mbO;\Tilde{\mbO}}(\mE))
\eal

Another important quantity is the hypothesis testing divergence. We consider the composite version of this quantity. Let $\mbB\subset \mbO_{\rm all}$ be the black box. The hypothesis testing divergence of $\mbB$ with respect to $\mbO$ is defined as 
\bal
D^\ve_H(\mbB||\mbO)&:=
\max_{\rho\in\mD(\mH_R\otimes \mH_A)}D^\ve_H(\id\otimes\mbB(\rho)||\id\otimes\mbO(\rho))\\
&=\max_{\rho\in\mD(\mH_R\otimes \mH_A)}\qty(-\log
\min_{\substack{0\leq M\leq I\\\max_{\mE\in\mbB}\Tr[{\rm id}\otimes\mE(\rho)(I-M)]\leq \ve}}
\max_{\mF\in\mbO}
\Tr[{\rm id}\otimes \mF(\rho)M]).
\eal
Since $\Tr[{\rm id}\otimes \mF(\rho)M]$ is linear with respect to $\mF, ~\rho$, and $M$, we can exchange the min and max because of Sion's minimax theorem~\cite{sion_1958_general_minimax_theorems}.
Note that the maximization with respect to the state can be replaced with the maximization with respect to the pure state since we can add an ancillary system to purify the input mixed state, which does not decrease the performance of the hypothesis testing.
Specifically, the hypothesis testing divergence with $\ve=0$ is called a min-divergence~\cite{Datta_2009_min_and_max_relative} and denoted as $D_{\min}$.

If it holds that $\Span(\mbO)=\mbO_{\rm all}$, we say that the set $\mbO$ is \emph{full dimensional}, and otherwise $\mbO$ is \emph{reduced dimensional}.
When $\mbO$ is full dimensional, the standard robustness is guaranteed to be finite for any channels $\mE\in\mbO_{\rm all}$. On the other hand, when $\mbO$ is reduced dimensional, the standard robustness is not finite in general. To characterize the distillation tasks in such cases, we need to consider the affine hull $\aff(\mbO)$ of the set of free operations. In that case, the performance of the distillation task is often measured in the quantities such as $D^\ve_H(\cdot||\aff(\mbO))$, and $D_{\rm min}(\cdot||\aff(\mbO))$.

\subsection{One-shot black box channel resource distillation}
In the following discussion, we formulate the black box channel resource distillation framework. Let $\mbB\subset\mbO_{\rm all}$ be a subset of channels, and $\mbT\subset\mbO_{\rm all}$ be the set of target resource channels, i.e., the resourceful channels one desires to apply. In the black box channel distillation framework, we apply the free superchannel $\Theta \in \mbS$ to the given channel picked from the black box $\mbB$. The experimenters know that the channel is an element of the black box but are not told which channel is given. 
Our goal is to determine what reference channels can be distilled from the unknown channel from the black box and the allowed superchannels.
We assume that the target channel $\mN$ maps the input pure state to the pure state. 
Then, the following holds.
\begin{pro}\label{pro: converse part for channel distillation}
    Let $\mbB\subset \mbO_{\rm all}$ be a subset of the channels, and $\mN\in\mbO_{\rm all}$ be the target channel such that $\id\otimes\mN(\psi)$ is pure for any pure state $\psi\in\mD(\mH_R\otimes\mH_A)$. 
    If there exists a free superchannel $\Theta\in \mbS$ s.t. $\min_{\mE\in\mbB}F(\Theta(\mE),\mN)\geq 1-\ve$, it holds that
    \bal
    D^\ve_H(\mbB||\mbO)\geq D_{\rm min}(\mN||\mbO), ~\mbox{and}~~D^\ve_H(\mbB||\aff(\mbO))\geq D_{\rm min}(\mN||\aff(\mbO))
    \eal
\end{pro}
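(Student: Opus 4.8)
The plan is to transcribe the converse argument of the work‑extraction theorem in \cref{app:one shot extractable work GPO} to the channel level, reducing the claim to the chain
\bal
D^\ve_H(\mbB\|\mbO)\;\geq\;D^\ve_H(\Theta(\mbB)\|\Theta(\mbO))\;\geq\;D^\ve_H(\Theta(\mbB)\|\mbO)\;\geq\;D_{\rm min}(\mN\|\mbO).
\eal
The first inequality is a data‑processing inequality for the channel hypothesis‑testing divergence under the free superchannel $\Theta$; the second follows from $\Theta(\mbO)\subseteq\mbO$, because enlarging the alternative hypothesis can only decrease the divergence; and the third substitutes an explicit feasible test built from the (pure‑output) target channel $\mN$. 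The same chain with $\mbO$ replaced by $\aff(\mbO)$ will give the second assertion.

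For the superchannel data‑processing step I would use the Chiribella--D'Ariano--Perinotti realization $\Theta(\mE)=\mV\circ(\mE\otimes\id_E)\circ\mU$ with CPTP pre‑/post‑processing maps $\mU,\mV$ and a memory system $E$. For any input $\rho$ on $\mH_R\otimes\mH_A$, setting $\sigma=(\id_R\otimes\mU)(\rho)$ gives $\id_R\otimes\Theta(\mE)(\rho)=(\id_R\otimes\mV)\big[(\id_{RE}\otimes\mE)(\sigma)\big]$, where $RE$ now plays the role of the reference for $\mE$. Applying the set‑version data‑processing inequality for $D^\ve_H$ established in \cref{app:one shot extractable work GPO} (under the CPTP map $\id_R\otimes\mV$) to the two families $\{(\id_{RE}\otimes\mE)(\sigma)\}_{\mE\in\mbB}$ and $\{(\id_{RE}\otimes\mF)(\sigma)\}_{\mF\in\mbO}$, and then maximizing over inputs, yields $D^\ve_H(\mbB\|\mbO)\geq D^\ve_H(\Theta(\mbB)\|\Theta(\mbO))$. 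I expect this to be the main obstacle: the memory $E$ forces the input of $\mbB$ onto an enlarged reference, so I must check that $D^\ve_H(\mbB\|\mbO)$ is not decreased by enlarging the reference beyond $\dim\mH_A$---this is the channel‑divergence reference‑reduction property indicated by the purification remark following the definition of $D^\ve_H(\mbB\|\mbO)$.

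For the last inequality I would pick a pure input $\psi$ on $\mH_R\otimes\mH_A$ attaining $D_{\rm min}(\mN\|\mbO)=\max_\rho D_{\rm min}(\id\otimes\mN(\rho)\|\id\otimes\mbO(\rho))$; a pure maximizer exists by the same remark. Since $\id\otimes\mN(\psi)$ is pure by hypothesis, write $\id\otimes\mN(\psi)=\dm{\phi}$, so that $M_0:=\dm{\phi}=\id\otimes\mN(\psi)=\Pi_{{\rm supp}(\id\otimes\mN(\psi))}$. Because the channel fidelity is a minimum over inputs, the assumption $\min_{\mE\in\mbB}F(\Theta(\mE),\mN)\geq 1-\ve$ gives, for every $\mE\in\mbB$, $\Tr[M_0\,(\id\otimes\Theta(\mE)(\psi))]=F(\id\otimes\Theta(\mE)(\psi),\dm{\phi})\geq 1-\ve$, i.e. $\max_{\mE\in\mbB}\Tr[(I-M_0)(\id\otimes\Theta(\mE)(\psi))]\leq\ve$. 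Hence $M_0$ is feasible for the type‑\RomanNumeralCaps{1} constraint defining $D^\ve_H(\Theta(\mbB)\|\mbO)$ at the input $\psi$, and substituting it gives $D^\ve_H(\Theta(\mbB)\|\mbO)\geq-\log\max_{\mF\in\mbO}\Tr[M_0\,(\id\otimes\mF(\psi))]=D_{\rm min}(\id\otimes\mN(\psi)\|\id\otimes\mbO(\psi))=D_{\rm min}(\mN\|\mbO)$, where the middle equality uses $M_0=\Pi_{{\rm supp}(\id\otimes\mN(\psi))}$ and the last one uses the optimality of $\psi$.

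Finally, the affine‑hull statement follows by the identical argument: $\Theta$ is linear so $\Theta(\aff(\mbO))\subseteq\aff(\mbO)$, the data‑processing step only uses linearity of the trace pairing $\Tr[(\id_R\otimes\mV)(\cdot)\,M]=\Tr[\,\cdot\,(\id_R\otimes\mV)^\dagger(M)]$ and so is insensitive to whether the alternative ranges over $\mbO$ or $\aff(\mbO)$, and the feasible‑test substitution is unchanged with $\max_{\mF\in\mbO}$ replaced by $\sup_{\mF\in\aff(\mbO)}$. All steps other than the reference bookkeeping in the superchannel data‑processing inequality are direct transcriptions of the state‑level converse, so I anticipate that step to be the only delicate one.
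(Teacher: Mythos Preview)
Your proposal is correct and follows essentially the same approach as the paper: data processing for $D^\ve_H$ under the free superchannel, followed by inclusion $\Theta(\mbO)\subset\mbO'$, followed by substituting the feasible test $M=\id\otimes\mN(\psi)$ built from the pure output of $\mN$. Your treatment is in fact more careful than the paper's---you spell out the superchannel data-processing step via the pre/post-processing realization and flag the reference-enlargement issue that the paper glosses over, and your endgame (fixing the optimal $\psi$ for $D_{\rm min}(\mN\|\mbO)$ first) is slightly more direct than the paper's route through a min--max swap, but these are refinements of the same argument rather than a different one.
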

\begin{proof}
    One can show the data processing inequality of the composite hypothesis testing divergence in the same way as the discussion above; it holds that
    \bal
    D^\ve_H(\mbB||\mbO)&\geq D^\ve_H(\Theta (\mbB)||\Theta(\mbO))\geq D^\ve_H(\Theta(\mbB)||\mbO').
    \eal
Here, we can rewrite the last line as 
\bal
D^\ve_H(\Theta(\mbB)||\mbO')&=-\log \min_{\rho\in\mD(\mH_R\otimes \mH_{A'})}
\min_{\substack{0\leq M\leq I\\\max_{\mE\in\mbB}\Tr[{\rm id}\otimes\Theta(\mE)(\rho)(I-M)]\leq \ve}}
\max_{\mF\in\mbO'}
\Tr[{\rm id}\otimes \mF(\rho)M]\\
&=-\log\max_{\mF\in\mbO'} \min_{\rho\in\mD(\mH_R\otimes \mH_{A'})}
\min_{\substack{0\leq M\leq I\\\max_{\mE\in\mbB}\Tr[{\rm id}\otimes\Theta(\mE)(\rho)(I-M)]\leq \ve}}
\Tr[{\rm id}\otimes \mF(\rho)M]\\
&=-\log\max_{\mF\in\mbO'} \min_{\psi\in {\rm PURE}(\mH_R\otimes \mH_{A'})}
\min_{\substack{0\leq M\leq I\\\max_{\mE\in\mbB}\Tr[{\rm id}\otimes\Theta(\mE)(\psi)(I-M)]\leq \ve}}
\Tr[{\rm id}\otimes \mF(\psi)M].
\eal
In the third line, we purified the fed state. Due to the assumption that ${\rm id}\otimes \mN(\psi)$ is always pure for any pure input state $\psi$, We can check that $M={\rm id}\otimes \mN(\psi)$ is feasible as follows.
\bal
\min_{\mE\in\mbB}\Tr\qty[\qty({\rm id}\otimes \mN(\psi))\qty({\rm id}\otimes \Theta(\mE)(\psi))]&=\min_{\mE\in\mbB}F\qty({\rm id}\otimes \mN(\psi),{\rm id}\otimes \Theta(\mE)(\psi))\\
&\geq \min_{\mE\in\mbB}\min_{\psi\in{\rm PURE}(\mH_R\otimes \mH_{A'})}F\qty({\rm id}\otimes \mN(\psi),{\rm id}\otimes \Theta(\mE)(\psi))\\
&=\min_{\mE\in\mbB}F(\mN,\Theta(\mE))\geq 1-\ve.
\eal
From this, it holds that
\bal
D^\ve_H(\mbB||\mbO)&\geq D^\ve_H(\Theta(\mbB)||\mbO')\\
&\geq -\log\max_{\mF\in \mbO'}\min_{\psi\in{\rm PURE}(\mH_R\otimes \mH_{A'})}\Tr\qty[({\rm id}\otimes \mF(\psi))({\rm id}\otimes \mN(\psi))]\\
&=\min_{\mF\in \mbO'}\max_{\psi\in{\rm PURE}}D_{\rm min}({\rm id}\otimes \mN(\psi)||{\rm id}\otimes \mF(\psi))\\
&=D_{\rm min}(\mN||\mbO').
\eal
The second inequality can be shown similarly.
\end{proof}

The achievability part can be shown as follows.
\begin{pro}\label{pro: direct part for channel distillation}
    Let $\mbB\subset\mbO_{\rm all}$ be the subset of the channels. When $\mbB$ satisfies 
    \bal
    D^\ve_H(\mbB||\mbO)\geq LR^\delta_{s}(\mN)\mbox{ or } D^\ve_H(\mbB||\aff(\mbO))\geq LR^\delta_{g}(\mN)
    \eal
    there exists a superchannel $\Theta:\mbO_{\rm all}\to \mbO'_{\rm all}$ which satisfy
    \bal
    \min_{\mE\in\mbB}F(\Theta(\mE),\mN)\geq 1-\ve-\delta.
    \eal
\end{pro}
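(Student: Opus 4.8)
The plan is to exhibit the distilling superchannel explicitly as a measure-and-prepare quantum comb that feeds the \emph{unknown} channel into the optimal composite-hypothesis-testing probe and, conditioned on the measurement outcome, prepares a smoothed version of the target $\mN$.

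First I would read off the two ingredients from the two hypotheses. Let $\psi$ be a (pure, by purification) probe state attaining $D^\ve_H(\mbB\|\mbO)$ and $0\le M\le I$ an optimal measurement, so that $\Tr[(\id\otimes\mE(\psi))M]\ge 1-\ve$ for every $\mE\in\mbB$ while $\max_{\mF\in\mbO}\Tr[(\id\otimes\mF(\psi))M]=2^{-d}$, where $d:=D^\ve_H(\mbB\|\mbO)$; in the generalized branch I replace $\mbO$ by $\aff(\mbO)$ throughout. From $LR^\delta_s(\mN)=\log(1+r)$ with $r=R^\delta_s(\mN)$ I obtain a channel $\mN'$ with $F(\mN,\mN')\ge 1-\delta$ and a decomposition $\mG:=(\mN'+r\mM_0)/(1+r)\in\mbO'$, where $\mM_0\in\mbO'$ for the standard robustness and $\mM_0\in\mbO'_{\rm all}$ for the generalized one. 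The hypothesis $d\ge\log(1+r)$ is precisely $2^{-d}(1+r)\le 1$, the inequality that will keep every coefficient below nonnegative.

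Next I would define $\Theta$ as follows: on an arbitrary input, adjoin the probe $\psi$, let the fed channel act on the channel-input register of $\psi$, measure $\{M,I-M\}$ on the probe output, and then apply $\mN'$ to the true input on outcome $M$ and a fixed correction channel $\mathcal{K}$ on outcome $I-M$. Since the probe factorizes from the real input and the fed channel touches only the probe, the outcome probability $p_\mE:=\Tr[(\id\otimes\mE(\psi))M]$ is input-independent and $\Theta(\mE)=p_\mE\,\mN'+(1-p_\mE)\mathcal{K}$. The fidelity claim is then routine: because $\mN$ sends pure inputs to pure outputs, $F(\,\cdot\,,\id\otimes\mN(\rho))$ is linear for pure $\rho$, whence $F(\Theta(\mE),\mN)\ge p_\mE\,F(\mN',\mN)\ge (1-\ve)(1-\delta)\ge 1-\ve-\delta$ uniformly over $\mE\in\mbB$.

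The hard part will be checking that $\Theta$ is free, i.e.\ $\Theta(\mF)\in\mbO'$ for all $\mF\in\mbO$. For the standard robustness I would take $\mathcal{K}=\mM_0\in\mbO'$, so that inserting the decomposition gives $\Theta(\mF)=p_\mF(1+r)\,\mG+(1-p_\mF(1+r))\,\mM_0$; since $p_\mF(1+r)\le 2^{-d}(1+r)\le 1$, this is a genuine convex combination of the free channels $\mG$ and $\mM_0$, hence free. The generalized branch is the real obstacle: now $\mM_0$ need not be free, so a single correction can keep every free input inside $\mbO'$ only if $p_\mF$ is constant over $\mbO$. This is exactly what testing against $\aff(\mbO)$ buys: finiteness of $\max_{g\in\aff(\mbO)}\Tr[(\id\otimes g(\psi))M]$ forces the affine functional $g\mapsto\Tr[(\id\otimes g(\psi))M]$ to be constant on the affine subspace $\aff(\mbO)$, so $p_\mF\equiv c=2^{-d}$ for every $\mF\in\mbO$. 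I would then fix the single correction $\mathcal{K}=[(1-c(1+r))\mG+cr\mM_0]/(1-c)$, whose two coefficients are nonnegative (again by $c(1+r)\le 1$) and sum to one, so $\mathcal{K}$ is a legitimate CPTP channel despite $\mM_0$ being nonfree, and this choice makes $\Theta(\mF)=\mG\in\mbO'$ identically. The remaining care is to confirm that the comb written above is a bona fide completely-CP-preserving superchannel and that the infimum defining $R^\delta_s$ is attained (or to run the argument with an arbitrarily small slack), after which freeness and the fidelity bound combine to give the claim.
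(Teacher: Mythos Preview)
Your proof is correct, and the standard-robustness branch is essentially identical to the paper's: both build the measure-and-prepare superchannel $\Theta(\mL)=p_\mL\,\tilde\mN+(1-p_\mL)\,\mQ$ with $\mQ$ the free mixing channel from the robustness decomposition, and verify freeness via the ratio bound $(1-p_\mF)/p_\mF\ge 2^{LR_s^\delta(\mN)}-1$ together with convexity of $\mbO'$.

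For the generalized-robustness branch the paper only writes ``similarly''; you correctly notice that this cannot be taken literally, since now the mixing channel $\mM_0$ need not lie in $\mbO'$ and the convexity step breaks down. Your fix---observing that a finite supremum of the affine functional $g\mapsto\Tr[(\id\otimes g(\psi))M]$ over the unbounded set $\aff(\mbO)$ forces it to be constant, so $p_\mF\equiv c$ on all of $\mbO$, and then choosing the single correction $\mathcal K=[(1-c(1+r))\mG+cr\mM_0]/(1-c)$ so that $\Theta(\mF)=\mG\in\mbO'$ identically---is exactly the right way to handle the reduced-dimensional case, and is more careful than the paper at this point. One small remark: for the fidelity estimate you invoke linearity of $F(\cdot,\id\otimes\mN(\psi))$ via the pure-to-pure property of $\mN$, but that hypothesis is not part of this proposition (it appears only in the converse, Proposition~\ref{pro: converse part for channel distillation}); the paper instead uses concavity of the fidelity in its first argument, which yields $F(\Theta(\mE),\mN)\ge p_\mE\,F(\tilde\mN,\mN)$ directly and avoids the extra assumption.
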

\begin{proof}
We consider the situation where $\mbB$ satisfies $D^\ve_H(\mbB||\mbO)\geq LR^\delta_{s}(\mN)$. The statement in the other situation can also be shown similarly.
    Let $\Tilde{\mN}$ be the channel which satisfies $LR_s(\tilde{\mN})=LR^\delta_s(\mN)$, $F(\mN,\Tilde{\mN})\geq 1-\delta$. From the definition, one can see that there exists a free channel $\mQ\in\mbO$ which satisfies
    \bal
    \frac{\Tilde{\mN}+\qty(2^{LR^\delta_s(\mN)}-1)\mQ}{2^{LR^\delta_s(\mN)}}\in\mbO.
    \eal
    Note that, due to the convexity of the set of free operations, it holds that 
    \bal\label{Eq: condition channel is free}
    r\geq 2^{LR^\delta_s(\mN)}-1~\Rightarrow ~& \frac{\Tilde{\mN}+r\mQ}{1+r}=\frac{2^{LR^\delta_s(\mN)}}{1+r}\frac{\Tilde{\mN}+\qty(2^{LR^\delta_s(\mN)}-1)\mQ}{2^{LR^\delta_s(\mN)}}+\frac{r-2^{LR^\delta_s(\mN)}+1}{1+r}\mQ\in\mbO.
    \eal

    From the definition of the hypothesis testing divergence, it holds that
    \bal
    D^\ve_H(\mbB||\mbO)&=\max_{\rho\in\mD(\mH_R\otimes \mH_A)}\qty(-\log
\min_{\substack{0\leq M\leq I\\\max_{\mE\in\mbB}\Tr[{\rm id}\otimes\mE(\rho)(I-M)]\leq \ve}}
\max_{\mF\in\mbO}
\Tr[{\rm id}\otimes \mF(\rho)M])\\
&=\max_{\psi\in\mD(\mH_R\otimes \mH_A)}\qty(-\log
\min_{\substack{0\leq M\leq I\\\max_{\mE\in\mbB}\Tr[{\rm id}\otimes\mE(\psi)(I-M)]\leq \ve}}
\max_{\mF\in\mbO}
\Tr[{\rm id}\otimes \mF(\psi)M]).
    \eal
    In the second line, we purified the input state.
    Let $\psi^*, M^*$ be the pure state and the POVM element which achieves the optimal performance on the hypothesis testing tasks.
    It holds that
    \bal\label{Eq: bound of the free channels}
    \Tr[{\rm id}\otimes \mF(\psi^*)M^*]\leq 2^{-D^\ve_H(\mbB||\mbO)},~~\forall \mM\in\mbO.
    \eal
    
    Here, we construct the superchannel $\Theta$ as follows.
    \bal
    \Theta(\mL)=\Tr\qty[M^* ({\rm id}\otimes \mL(\psi^*))]\Tilde{\mN}+\Tr\qty[(I-M^*) ({\rm id}\otimes \mL(\psi^*))]\mQ.
    \eal
    When the input channel is $\mF\in\mbO$,
    \bal
    \mF\in\mbO~\Rightarrow~\Theta(\mF)=\Tr\qty[M^* ({\rm id}\otimes \mF(\psi^*))]\Tilde{\mN}+\Tr\qty[(I-M^*) ({\rm id}\otimes \mF(\psi^*)]\mQ.
    \eal
    Here, From Eq.~\eqref{Eq: bound of the free channels} and the assumption, it holds that
    \bal
    \frac{\Tr\qty[(I-M^*) ({\rm id}\otimes \mF(\psi^*)]}{\Tr\qty[M^* ({\rm id}\otimes \mF(\psi^*))]}&\geq \frac{1-2^{-D^\ve_H(\mbB||\mbO)}}{2^{-D^\ve_H(\mbB||\mbO)}}\\
    &\geq \frac{1-2^{-LR^\delta_s(\mN)}}{2^{{-LR^\delta_s(\mN)}}}=2^{LR^\delta_s(\mN)}-1.
    \eal
    Because of Eq.~\eqref{Eq: condition channel is free}, one can see that $\Theta$ is a free superchannel.

    Furthermore, for any input channel $\mE\in\mbB$, it holds that
    \bal
    F(\Theta(\mE),\mN)&=\min_{\psi\in{\rm PURE}(\mH_R\otimes \mH_{A'})}F(\Theta(\mE)(\psi),\mN(\psi))\\
    &\geq\Tr\qty[M^* ({\rm id}\otimes \mE(\psi^*))]F(\Tilde{\mN},\mN)\\
    &\geq (1-\ve)(1-\delta)\geq 1-\ve-\delta,
    \eal
    which completes the proof.
\end{proof}

\subsection{One-shot distillable resource}
To extend the black box distillation task in the channel distillation setting, we consider the set of the reference channel, $\mbT\subset \mbO_{\rm all}$.
We define the distillable resource of the black box $\mbB$ as follows.
\begin{defn}
    The one-shot distillable resource of the black box $\mbB$ with respect to the resource measure $\mathfrak{R}$ is defined as 
    \bal
    d^\ve_{\mbS,\mathfrak{R}}(\mbB):=\max\lset \mathfrak{R}_\mbO(\mT) \sbar \mT\in\mbT, ~\exists \Theta\in\mbS~{\rm s.t.}~\min_{\mE\in\mbB}F(\Theta(\mE),\mT)\geq 1-\ve \rset .
    \eal
\end{defn}

Due to Proposition~\ref{pro: converse part for channel distillation} and Proposition~\ref{pro: direct part for channel distillation}, the following holds.
\begin{pro}
    Let $\mbB\subset\mbO_{\rm all}$ be the black box of the channels, and $\mbT\subset\mbO_{\rm all}$ be the reference channels. Furthermore, assume that any channel $\mT\in\mbT$ satisfies 
    \bal
    D_{\rm min}(\mT||\mbO)=LR_s(\mT).
    \eal
    Then, the distillable resource with respect to the min-divergence $D_{\rm min }$ of the black box $\mbB$ satisfies
    \bal
    d^\ve_{\mbS,D_{\rm min}}(\mbB)=\max\qty{D_{\rm min}(\mT||\mbO)~|~\mT\in\mbT, D^\ve_H(\mbB||\mbO)\geq D_{\rm min}(\mT||\mbO)}
    \eal

    Furthermore, when any channel $\mT\in\mbT$ satisfies 
    \bal\label{app Eq: Dmin=generalized robustness}
    D_{\rm min}(\mT||\aff(\mbO))=LR_g(\mT),
    \eal
    the distillable resource with respect to the min-divergence $D_{\rm min}$ of the black box $\mbB$ satisfies
    \bal
        d^\ve_{\mbS,D_{\rm min}}(\mbB)=\max\qty{D_{\rm min}(\mT||\aff(\mbO))~|~\mT\in\mbT, D^\ve_H(\mbB||\aff(\mbO))\geq D_{\rm min}(\mT||\aff(\mbO))}
    \eal
\end{pro}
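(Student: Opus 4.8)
The plan is to obtain the claimed identity by pairing the converse bound of Proposition~\ref{pro: converse part for channel distillation} with the achievability bound of Proposition~\ref{pro: direct part for channel distillation}, using the hypothesis $D_{\rm min}(\mT\|\mbO)=LR_s(\mT)$ precisely to make the two bounds meet with no residual smoothing gap. Concretely, I would compare the feasible set implicit in the definition of $d^\ve_{\mbS,D_{\rm min}}(\mbB)$, namely $A:=\lset\mT\in\mbT\sbar\exists\,\Theta\in\mbS~\text{s.t.}~\min_{\mE\in\mbB}F(\Theta(\mE),\mT)\geq 1-\ve\rset$, with the feasible set appearing on the right-hand side, $B:=\lset\mT\in\mbT\sbar D^\ve_H(\mbB\|\mbO)\geq D_{\rm min}(\mT\|\mbO)\rset$, and show $A=B$. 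Since the maximized objective $D_{\rm min}(\mT\|\mbO)$ is the same function on both sides, $A=B$ forces the two maxima to coincide, which is exactly the assertion.

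For $A\subseteq B$, which yields the inequality $d^\ve_{\mbS,D_{\rm min}}(\mbB)\leq\max_{\mT\in B}D_{\rm min}(\mT\|\mbO)$, I would take any $\mT\in A$ together with its distilling superchannel $\Theta$. Under the pure-output condition on the targets in $\mbT$ required by Proposition~\ref{pro: converse part for channel distillation}, that proposition applied with $\mN=\mT$ yields $D^\ve_H(\mbB\|\mbO)\geq D_{\rm min}(\mT\|\mbO)$, i.e. $\mT\in B$; maximizing over $A$ gives the inequality. For $B\subseteq A$ I would take any $\mT\in B$ and invoke Proposition~\ref{pro: direct part for channel distillation} with smoothing parameter $\delta=0$. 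Because the channel fidelity saturates only at coinciding channels, $LR^0_s(\mT)=LR_s(\mT)$, and the hypothesis gives $LR_s(\mT)=D_{\rm min}(\mT\|\mbO)$; hence the membership $\mT\in B$ reads $D^\ve_H(\mbB\|\mbO)\geq LR^0_s(\mT)$, which is exactly the input condition of that proposition. It then produces a free superchannel $\Theta\in\mbS$ with $\min_{\mE\in\mbB}F(\Theta(\mE),\mT)\geq 1-\ve-0=1-\ve$, so $\mT\in A$. Combined with the previous step this gives $A=B$ and the first equality. The $\aff(\mbO)$ statement follows verbatim, replacing $\mbO$ by $\aff(\mbO)$, the standard robustness and $LR_s$ by the generalized robustness and $LR_g$, and using the hypothesis $D_{\rm min}(\mT\|\aff(\mbO))=LR_g(\mT)$ with the second clauses of Propositions~\ref{pro: converse part for channel distillation} and~\ref{pro: direct part for channel distillation}.

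The step I expect to be the crux is the matching of the smoothing parameter in the achievability direction. In general Proposition~\ref{pro: direct part for channel distillation} only distills to fidelity $1-\ve-\delta$ while comparing against the smoothed quantity $LR^\delta_s$, so without extra structure a $\delta$-gap would separate the distillability condition from the hypothesis-testing threshold and the two maxima would not obviously agree. The hypothesis $D_{\rm min}(\mT\|\mbO)=LR_s(\mT)$ is exactly what lets me set $\delta=0$: the achievability threshold $LR^0_s(\mT)$ then collapses onto the converse threshold $D_{\rm min}(\mT\|\mbO)$, so both propositions reference the identical quantity. I would additionally verify the two admissibility points hidden above: that the superchannel constructed in Proposition~\ref{pro: direct part for channel distillation} is genuinely free (so that it lies in $\mbS$ and is allowed in the definition of $d^\ve_{\mbS,D_{\rm min}}$), and that the pure-output hypothesis on $\mbT$ needed for the converse is indeed in force for the targets under consideration.
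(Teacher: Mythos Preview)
Your proposal is correct and follows exactly the route the paper intends: the paper's proof is the one-line remark ``Due to Proposition~\ref{pro: converse part for channel distillation} and Proposition~\ref{pro: direct part for channel distillation}, the following holds,'' and your argument is precisely the detailed unpacking of that sentence, pairing the converse with the achievability at $\delta=0$ via the hypothesis $D_{\rm min}(\mT\|\mbO)=LR_s(\mT)$. The two admissibility points you flag (pure-output targets for the converse, and that the constructed superchannel lies in $\mbS$) are genuine implicit assumptions the paper also relies on without restating.
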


To recover the same result as Theorem~\ref{app theorem: one shot extractable work under GPO}, it suffices to take $\mbB$ as the set of the preparation channels that prepares the states in the set $\mS\subset\mD(\mH)$ of states, $\mbO$ as the set of thermal state preparation channels, and $\mbT$ as 
\bal
\mbT=\qty{\mT_m:\mD(\mbC)\to\mD(\mH_X), (1)\mapsto (\ketbra{0}{0},\mu_m)~|~m\geq 0}.
\eal
One can see that this set of channels satisfies the condition in~Eq.~\eqref{app Eq: Dmin=generalized robustness}, and it holds that
\bal
D_{\rm min}(\mT||\aff(\mbO))=LR_g(\mT)=\log m.
\eal
From this, it holds that
 \bal
        d^\ve_{\mbS,D_{\rm min}}(\mbB)&=\max\qty{D_{\rm min}(\mT||\aff(\mbO))~|~\mT\in\mbT, D^\ve_H(\mbB||\aff(\mbO))\geq D_{\rm min}(\mT||\aff(\mbO))}\\
        &=D^\ve_H(\mbB||\mbO)\\
        &=D^\ve_H(\mS||\tau),
\eal
which agrees with the result in Theorem~\ref{app theorem: one shot extractable work under GPO}.

\end{document}